\definecolor{Myblue}{rgb}{0,0,0.6}  
\newcommand{\boxpic}[3]{
	\begin{tikzpicture}[baseline={([yshift=-.5ex]current bounding box.center)}]
	\node at (0,0) {\begin{overpic}[scale=#1]{#2}#3\end{overpic}};
	\end{tikzpicture}
}
\theoremstyle{definition}
\newtheorem{definition}{Definition}
\newtheorem{theorem}[definition]{Theorem}
\newtheorem{lemma}[definition]{Lemma}
\newtheorem{proposition}[definition]{Proposition}
\newtheorem{example}[definition]{Example}
\newtheorem{remark}[definition]{Remark}
\numberwithin{definition}{section}
\numberwithin{equation}{section}
\numberwithin{figure}{section}
\newcommand{\be}{\begin{equation}}
\newcommand{\ee}{\end{equation}}
\newcommand{\id}[1]{\operatorname{id}_{#1}} 
\newcommand{\unit}{\mathbbm{1}} 
\newcommand{\catname}[1]{{\mathcal{#1}}}
\newcommand{\pairing}{\mathrm{d}}
\newcommand{\doublecat}{\mathcal{C}\boxtimes\mathcal{C}^\mathrm{rev}}
\newcommand{\Vect}{\operatorname{Vect}}
\newcommand{\SL}{\operatorname{SL}(2,\mathbb{Z})}
\newcommand{\AdS}{\mathrm{AdS}_3}
\newcommand{\Hom}{\mathrm{Hom}} 
\newcommand{\End}{\mathrm{End}} 
\newcommand{\Aut}{\mathrm{Aut}}
\newcommand{\Rep}{\mathrm{Rep}}
\newcommand{\Mod}{\mathrm{Mod}}
\newcommand{\PMod}{\mathrm{PMod}}
\newcommand{\colG}{\mathrm{col}(\Gamma)}
\newcommand{\colGo}{\mathrm{col}(\Gamma)^{\circ}}
\newcommand{\tcolGo}{\widetilde{\mathrm{col}}(\Gamma)^{\circ}}
\newcommand{\ie}{i.e.\,}
\newcommand{\eg}{e.g.\,}
\begin{document}

\title{CFT correlators and\\
mapping class group averages}

\author{
	Iordanis Romaidis$^\ast$ \qquad
	Ingo Runkel$^\vee$\\[0.5cm]
	\normalsize{\texttt{\href{mailto:iromaidi@ed.ac.uk}{iromaidi@ed.ac.uk}}} \\  
	\normalsize{\texttt{\href{mailto:ingo.runkel@uni-hamburg.de}{ingo.runkel@uni-hamburg.de}}}\\[0.5cm]
        \normalsize\slshape $^\ast$School of Mathematics, University of Edinburgh,\\
	\normalsize\slshape Mayfield Road Edinburgh EH9 3FD, UK\\
	\normalsize\slshape $^\vee$Fachbereich Mathematik, Universit\"{a}t Hamburg,\\
	\normalsize\slshape Bundesstra{\ss}e 55, 20146 Hamburg, Germany
	}

\date{}

\maketitle

\begin{abstract}
Mapping class group averages appear in the study of 3D\,gravity partition functions. In this paper, we work with 3D\,topological field theories
to establish a bulk-boundary correspondence between such averages and correlators of 2D\,rational CFTs whose chiral mapping class group representations are irreducible and satisfy a finiteness property. 
We show that Ising-type modular fusion categories satisfy these properties on surfaces with or without field insertions, extending results in \cite{Jian:2019ubz}, and we comment on the absence of invertible global symmetries in the examples we consider.
\end{abstract}

\newpage

\setcounter{tocdepth}{2}
\tableofcontents

\section{Introduction}\label{sec:intro}

Mapping class group averages appear in 3-dimensional quantum gravity when trying to give a computable approximation of the gravity partition function. This surprising aspect of mapping class group actions offers a natural gateway to study the relation between a candidate 3D\,quantum gravity theory in the bulk and a 2D\,CFT on the boundary in the spirit of the AdS/CFT correspondence. The goal of this paper is to establish a correspondence between mapping class group averages and certain rational CFTs using the associated 3D\,Reshetikhin-Turaev TQFT. 

Consider the gravity partition function in pure Einstein gravity as the path integral
\begin{equation}\label{eq:intro-MCG}
    Z_\text{grav}(\Sigma) 
    ~=  \hspace{-.5em} \sum_{M\text{ topologies}}\int{\mathcal{D}g \, e^{iS[g]}}
    \ ,
\end{equation}
where the sum runs over diffeomorphism classes of smooth 3-manifolds $M$ with a fixed conformal boundary $\Sigma$ and the path integral is over Riemannian metrics $g$ on $M$. Note that the path integral is ill-defined but it serves as motivation to find meaning in summing over 3-manifolds.

A family of such 3-manifolds is obtained by mapping classes of the conformal boundary $\Sigma$. Namely, by twisting the boundary of the handlebody $H_\Sigma$ by a mapping class $\gamma$ one obtains a new 3-manifold $H_\Sigma^\gamma$ which bounds $\Sigma$. The result is a $\Mod(\Sigma)$-family of 3-manifolds where $\Mod(\Sigma)$ denotes the mapping class group. Restricting 
the sum in \eqref{eq:intro-MCG}
to the contribution of this $\Mod(\Sigma)$-family in the partition function gives a mapping class group average. In fact, in the case where $\Sigma$ is the conformal torus, in~\cite{Maloney:2007ud} it is argued 
that in the semi-classical limit\footnote{Recall that the Brown-Henneaux central charge is given by
$c = \frac{3l}{2G}$,
where $l$ is the AdS radius and $G$ is Newton's constant \cite{BH}. The semi-classical limit refers to the limit $c\rightarrow \infty$.}  
this contribution dominates the whole partition function. 
The $\Mod(\Sigma)$-family is typically still infinite and there is no clear way to average over it. 
We circumvent this issue by working with theories that obey a finiteness property that allows us to write mapping class group averages as finite sums.

Torus mapping class group averages for unitary Virasoro minimal model CFTs are computed in~\cite{Castro:2011zq}. 
Recall that the mapping class group of the torus $T^2$ is $\Mod(T^2) = \SL$. 
The solid torus with twisted boundary $H_{T^2}^\gamma$ is interpreted as a euclidean black hole \cite{BTZ} 
and hence the torus mapping class group average computes the contribution of the $\SL$-family of black holes.
Under the assumption that the vacuum contribution is $Z_\text{vac}(\tau) = |\chi_{0}(\tau)|^2$, where $\chi_{0}(\tau)$ denotes the holomorphic vacuum character, the contribution of the $H_{T^2}^\gamma$ is $Z_\gamma(\tau) = Z_\text{vac}(\gamma. \tau)$, where $\gamma.\tau$ denotes the usual M\"obius transformation of $\tau$ by $\gamma$. After regularising the sum by identifying $H_{T^2}^\gamma\sim H_{T^2}^{\gamma'}$
whenever $Z_\gamma = Z_{\gamma'}$, the mapping class group average of $Z_\text{vac}$ is 
\begin{equation}\label{eq:intro-torus-mcg-average}
\langle Z_\text{vac} \rangle_{T^2} ~\propto \sum_{Z_{\gamma} \in \mathcal{O}_{\text{vac}}}{Z_\gamma}\ ,
\end{equation}
where $\mathcal{O}_\text{vac}$ denotes the mapping class group orbit of $Z_{\text{vac}}$. 
The crucial point is that for Virasoro minimal models, and in fact for all rational CFTs (RCFTs) \cite{Ng}, this sum is now finite.
Note that this does in general not hold for higher genus surfaces $\Sigma$, except for when the RCFT has the so-called ``property F'' with respect to $\Sigma$, meaning that the representation image of $\Mod(\Sigma)$ is finite. 

\medskip

In this paper we work with the TQFT approach to RCFT developed in \cite{FRS1,FRS4,FRS5,Fjelstad:2006aw}.
Given a rational VOA $\mathcal{A}$, its representations form a modular fusion category (MFC) $\mathcal{C} = \Rep(\mathcal{A})$ \cite{Huang}. This is the input for the Reshetikhin-Turaev
3D\,TQFT. 
For a surface $\Sigma$ possibly with framed points, its state space 
$V^\mathcal{C}(\Sigma)$
is naturally a (projective) mapping class group representation. This state space corresponds to the space of chiral conformal blocks on $\Sigma$ and the full conformal blocks are the elements of 
$V^\mathcal{C}(\widehat\Sigma)$
where $\widehat\Sigma$ denotes the double of $\Sigma$. Conformal correlators $\operatorname{Cor}^{\catname C}_A(\Sigma)$ are elements of $V^\mathcal{C}(\widehat\Sigma)$ which are mapping class group invariant and satisfy a factorisation (or sewing) property. The latter refers to a compatibility of correlators under cutting and gluing, but only mapping class group invariance will play a role in our considerations. 
The index $A$ labels different consistent collections of correlators and refers to a special symmetric Frobenius algebra in $\catname C$.

The mapping class group average, if well-defined, is by definition mapping class group invariant and hence itself a candidate for a CFT partition function.
For unitary Virasoro minimal models, 
the sum \eqref{eq:intro-torus-mcg-average} is proportional to a CFT partition function on the torus precisely for the Ising and the tricritical Ising CFT \cite{Castro:2011zq}. 
For other central charges, it produces a sum which includes independent CFT partition functions and even unphysical modular invariants.
Moreover, only the Ising case extends to higher genus surfaces. 
In that case we have, for a surface $\Sigma$ without framed points,
\begin{equation}\label{eq:intro-Jian}
    \langle x_\text{vac} \rangle_{\Sigma} ~\propto~ Z_\text{CFT}(\Sigma) \ ,
\end{equation}
where $x_\text{vac}$ is the higher genus analogue of the vacuum contribution and $Z_\text{CFT}(\Sigma)$ is the RCFT partition function of the unitary Ising CFT 
\cite{Jian:2019ubz}. 
It is furthermore shown in \cite{Jian:2019ubz} that unitary Ising MFCs satisfy property F and have irreducible mapping class group representations with respect to surface with no framed points.   

We generalise these results in two directions. Firstly, \eqref{eq:intro-Jian} applies to more general RCFTs and to surfaces with framed points. This is captured by following bulk-boundary correspondence theorem (see Theorem~\ref{thm:mcgavg-cor}): 

\begin{theorem}\label{thm:intro-mcgavg-cor}
Let $\Sigma$ be a surface with framed points such that the projective representation $V^{\catname C}(\Sigma)$ is irreducible and the representation image $V^{\catname C}(\Mod(\Sigma)) \subset \End(V^{\catname C}(\Sigma))$ is finite.
If $x\in V^{\catname C}(\widehat\Sigma)$ satisfies a non-degeneracy condition\footnote{There is a natural functional $\pairing_{\Sigma}: V^\mathcal{C}(\widehat\Sigma) \rightarrow \mathbbm{k}$ described in Section~\ref{sec:RT} and we require that $\pairing_\Sigma(x) \neq 0$.}, 
then the average $\langle x \rangle_\Sigma$ is non-zero and
for each choice of $A$
there exists $\lambda_\Sigma\in \mathbbm{k}$ such that 
\begin{equation}\label{eq:intro-mcgavg-cor}
	 \operatorname{Cor}^{\catname C}_A(\Sigma) = \lambda_\Sigma\, \langle x \rangle_\Sigma~.
\end{equation}
\end{theorem}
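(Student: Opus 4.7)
The plan is to exhibit both $\operatorname{Cor}^{\catname C}_A(\Sigma)$ and $\langle x\rangle_\Sigma$ as elements of a one-dimensional invariant subspace of $V^{\catname C}(\widehat\Sigma)$, and then to argue non-vanishing of the average by means of the functional $\pairing_\Sigma$.

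First I would set up the ambient invariant subspace. The double $\widehat\Sigma$ admits a natural decomposition involving $\Sigma$ and its orientation-reversal, so that the factorisation axiom of the Reshetikhin-Turaev TQFT gives an isomorphism of the shape $V^{\catname C}(\widehat\Sigma) \cong V^{\catname C}(\Sigma)\otimes \overline{V^{\catname C}(\Sigma)}$, compatible with the diagonal embedding $\Mod(\Sigma)\hookrightarrow \Mod(\widehat\Sigma)$. On this tensor product the projective cocycles of the two factors cancel, making $V^{\catname C}(\widehat\Sigma)$ into a genuine $\Mod(\Sigma)$-representation isomorphic to $\End(V^{\catname C}(\Sigma))$. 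By the hypothesis that $V^{\catname C}(\Sigma)$ is projectively irreducible, Schur's lemma applies and the invariant subspace $V^{\catname C}(\widehat\Sigma)^{\Mod(\Sigma)}$ is one-dimensional, spanned by the image of the identity endomorphism.

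Next I would place both sides of \eqref{eq:intro-mcgavg-cor} in this line. The correlator $\operatorname{Cor}^{\catname C}_A(\Sigma)$ is, by construction in the FRS approach, a $\Mod(\Sigma)$-invariant element of $V^{\catname C}(\widehat\Sigma)$. For the average, the finiteness hypothesis on the representation image lets us form the finite group $G=V^{\catname C}(\Mod(\Sigma))$ acting on $V^{\catname C}(\widehat\Sigma)$ and define $\langle x\rangle_\Sigma = \frac{1}{|G|}\sum_{g\in G} g\cdot x$, which is tautologically $\Mod(\Sigma)$-invariant. Thus both live in the one-dimensional subspace identified in the previous step.

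The remaining task, and the step I expect to be the main obstacle, is to show $\langle x \rangle_\Sigma \neq 0$ from the non-degeneracy condition $\pairing_\Sigma(x)\neq 0$. The strategy is to verify that the functional $\pairing_\Sigma\colon V^{\catname C}(\widehat\Sigma)\to \mathbbm{k}$ is itself $\Mod(\Sigma)$-invariant, which should follow from its geometric definition via the TQFT pairing on the double; once this is in hand, each term $g\cdot x$ in the average satisfies $\pairing_\Sigma(g\cdot x)=\pairing_\Sigma(x)$, so that $\pairing_\Sigma(\langle x\rangle_\Sigma)=\pairing_\Sigma(x)\neq 0$ and in particular $\langle x\rangle_\Sigma\neq 0$. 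The subtle point here is ensuring that the projective ambiguity in the action of individual mapping classes does not spoil either the well-definedness of the averaging (which is why one passes from $\Mod(\Sigma)$ to its finite image $G$) or the invariance of $\pairing_\Sigma$; this is where the interplay between $V^{\catname C}(\Sigma)$ being projective and $V^{\catname C}(\widehat\Sigma)$ being linear must be handled carefully.

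Combining the three steps, $\operatorname{Cor}^{\catname C}_A(\Sigma)$ and $\langle x\rangle_\Sigma$ are two elements of a one-dimensional space, and $\langle x\rangle_\Sigma$ is non-zero, so there exists a unique $\lambda_\Sigma\in\mathbbm{k}$ realising \eqref{eq:intro-mcgavg-cor}. The scalar $\lambda_\Sigma$ may then be extracted explicitly by applying $\pairing_\Sigma$ to both sides, giving $\lambda_\Sigma = \pairing_\Sigma(\operatorname{Cor}^{\catname C}_A(\Sigma))/\pairing_\Sigma(x)$.
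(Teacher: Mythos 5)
Your proposal is correct and follows essentially the same route as the paper: tensoriality of the modular functor plus the mapping-class-group-invariant non-degenerate pairing identify $V^{\catname C}(\widehat\Sigma)$ with $\End(V^{\catname C}(\Sigma))$ as a (non-projective) $\Mod(\Sigma)$-representation, Schur's lemma gives a one-dimensional invariant line containing both $\operatorname{Cor}^{\catname C}_A(\Sigma)$ and $\langle x\rangle_\Sigma$, and invariance of $\pairing_\Sigma$ yields $\pairing_\Sigma(\langle x\rangle_\Sigma)=\pairing_\Sigma(x)\neq 0$. The only addition beyond the paper's argument is your explicit formula $\lambda_\Sigma = \pairing_\Sigma(\operatorname{Cor}^{\catname C}_A(\Sigma))/\pairing_\Sigma(x)$, which is a correct and harmless refinement.
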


Our second generalisation is to show property F and irreducibility for all MFCs with Ising fusion rules and for all surface with framed points (see Theorems \ref{thm:Ising-irred} and \ref{thm:Ising-propF}):

\begin{theorem}\label{thm:intro-ising-irred-propF}
    Let $\mathcal{C}$ be an MFC of Ising-type and let $\Sigma$ be a surface whose framed points are labelled by simple objects. Then $V^\mathcal{C}(\Sigma)$ is irreducible and the representation image $V^\mathcal{C}(\Mod(\Sigma))$ is finite. 
\end{theorem}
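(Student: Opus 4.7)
The plan is to fix a pants decomposition $\mathcal{P}$ of $\Sigma$ in which each framed point sits on a boundary circle of a pair of pants. This identifies $V^{\mathcal{C}}(\Sigma)$ with the space spanned by admissible labellings of the internal edges of the dual trivalent graph by simple objects $\mathbbm{1},\sigma,\psi$, where the external edges carry the fixed labels of the marked points and admissibility is dictated by the Ising fusion rules. Since fusion is multiplicity-free, this gives an honest basis without further vertex multiplicities. The mapping class group $\Mod(\Sigma)$ is generated by Dehn twists along the curves of $\mathcal{P}$, together with F-moves and S-moves that relate $\mathcal{P}$ to neighbouring pants decompositions (Moore--Seiberg presentation).

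For property F, I check that each Moore--Seiberg generator has finite order projectively. A Dehn twist along an internal curve of $\mathcal{P}$ acts diagonally with eigenvalues in $\{1,\theta_\sigma,-1\}$, where $\theta_\sigma$ is a $16$th root of unity determined by the Ising subclass; in particular the twist has order dividing $16$ projectively. The only non-trivial F-move of Ising-type is the $\sigma^{\otimes 4}$ Hadamard matrix $\tfrac{1}{\sqrt{2}}\bigl(\begin{smallmatrix}1 & 1 \\ 1 & -1\end{smallmatrix}\bigr)$, and each S-move is a finite-order matrix built from the Ising modular data. Extracting a uniform bound on the order of words in these generators is the actual content: I would follow the closed-surface argument of \cite{Jian:2019ubz}, which realises the projective image as a finite extension of a symplectic group over $\mathbb{F}_2$, and extend it by treating the Dehn twists around marked points as additional central generators.

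For irreducibility, let $T\in\End\bigl(V^{\mathcal{C}}(\Sigma)\bigr)$ commute with the whole representation. The three twist eigenvalues $1,\theta_\sigma,-1$ are pairwise distinct, so the simultaneous eigenspaces of all the internal-curve Dehn twists are exactly the one-dimensional lines spanned by pants basis vectors; hence $T$ is diagonal in this basis. It then suffices to prove that any two admissible labellings can be linked by a sequence of F- and S-moves each of which has a non-zero matrix entry between the corresponding basis vectors, since each such step forces $T$ to have equal diagonal entries on them. The non-trivial $\sigma^{\otimes 4}$ F-move mixes the internal labels $\mathbbm{1}$ and $\psi$ along any edge flanked by four $\sigma$-labels, and the non-diagonal blocks of the Ising S-move mix labellings along non-separating curves. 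The claim thus reduces to a combinatorial connectivity statement for the resulting move graph on admissible $\{\mathbbm{1},\sigma,\psi\}$-labellings of the dual graph of $\mathcal{P}$.

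The main obstacle is this combinatorial connectivity in the presence of marked points: fixed external labels can create local obstructions to applying the $\sigma^{\otimes 4}$ F-move. I would handle this by a case analysis on the genus and on the number of $\mathbbm{1}$-, $\sigma$-, and $\psi$-marked points, using that (i) a local $\sigma$-labelled region can be enlarged or shifted across the graph by combining F- and S-moves, (ii) once four $\sigma$-labels accumulate around an internal edge its label can be toggled between $\mathbbm{1}$ and $\psi$, and (iii) the closed-surface case is already settled in \cite{Jian:2019ubz}. Together these steps should yield connectivity, hence irreducibility; combined with property F this completes the theorem.
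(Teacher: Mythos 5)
Your overall setup (pants/graph basis, diagonal Dehn twists along the decomposition curves, connectivity of labellings) matches the paper's, but both halves of the argument have genuine gaps at exactly the points you defer. For property F, finite projective order of each generator does not imply finiteness of the generated group, and the proposed fix --- following the closed-surface argument of \cite{Jian:2019ubz} and ``treating the Dehn twists around marked points as additional central generators'' --- misses the generators that are actually new in the marked-point case. Besides the twists around individual framed points (which do act by the scalars $\theta_{l_k}$), $\PMod_{g,n}$ contains Dehn twists along curves encircling \emph{several} framed points and a handle (the $\delta$-curves of Figure~\ref{fig:mcggenerators-w-insertions}); these act by non-diagonal matrices built from pure braids of $\sigma$-labelled strands and are neither central nor covered by the closed-surface analysis. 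The paper sidesteps the group-theoretic route entirely by exhibiting an explicit finite $\PMod_{g,n}$-invariant spanning set $X_{g,n}=\{\theta_\sigma^k\lambda^{m_{\tilde\chi}}[\tilde\chi]\}$ (signed combinations of basis vectors generalising $e_\unit\pm e_\varepsilon$ on the torus) and checking invariance generator by generator, with the $\delta$-twists handled by two pure-braid lemmas (Lemmas~\ref{lem:purebraid} and~\ref{lem:purebraid2}). Some such device is needed; your outline does not supply one.

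For irreducibility, your first step is sound: the three twist eigenvalues $1,-1,\theta_\sigma$ are pairwise distinct, so the Dehn twists along the decomposition curves separate basis vectors and any commuting operator is diagonal (the paper implements the same idea via the polynomial projectors $P_j(\gamma)$ and the diagonal matrices $E(\hat\chi)$). Two problems remain. First, F- and S-moves are changes of basis between pants decompositions, not elements of $\Mod(\Sigma)$, so a non-zero F-matrix entry between two labellings does not by itself force equal diagonal entries of your commuting operator $T$; you need actual mapping classes --- Dehn twists along curves transverse to $\mathcal{P}$, such as the $S_k$, the $\gamma_{ij}$ and the $\delta$-curves --- whose matrices (computed via F/S data) have the required non-zero entries. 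Second, the ``combinatorial connectivity'' you reduce to is precisely where the framed points bite: toggling the labels of the edges $D_m$ between framed points requires the pure-braid computations and a case split on the number of $\sigma$-labelled points and on whether $g\geq 1$ (this is the last step of the paper's proof of Theorem~\ref{thm:Ising-irred}). You correctly identify this as the main obstacle but leave it unresolved, so the proof is incomplete as written.
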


Irreducibility and property F which are part of the hypothesis in Theorem~\ref{thm:intro-mcgavg-cor} are of independent mathematical interest. In fact, we have studied in \cite{RR} MFCs $\mathcal{C}$ with irreducible mapping class group representations and we have found that they possess a unique Morita class of simple non-degenerate Frobenius algebras. 
This turns out to be connected to the conjecture on the absence of global symmetries in quantum gravity \cite{Harlow:2018jwu} 
as we explain in Section~\ref{subsec:global-sym}.
Property F is conjectured to hold for all weakly-integral MFCs \cite{NR} (at least with respect to all braid group representations) meaning for MFCs whose simple objects $i\in I$ have Frobenius-Perron dimensions which square to integers.

Not imposing irreducibility can allow for a higher-dimensional space of mapping class group invariants. Hence, mapping class group averages can produce a superposition of independent CFT partition functions, 
but also unphysical modular invariants,
depending on their input. 
This is related to the idea that there is no single CFT on the boundary, but rather the boundary theory is an ensemble average over a family CFTs \cite{CotlerJensen, CollierM}. 

\medskip

Let us describe the outline of the paper. In Section~\ref{sec:RT+MCG} we recall the mapping class group representations given by the Reshetikhin-Turaev TQFT and provide the action in terms of the mapping class group generators explicitly. Section~\ref{sec:conformal-corr} begins with a brief overview of the topological approach to RCFT, defines mapping class group averages and establishes the bulk-boundary Theorem~\ref{thm:mcgavg-cor}.
In Section~\ref{sec:IrredPropF} 
we prove irreducibility and property F for Ising-type MFCs. In Section~\ref{sec:physics-review} discuss these mathematical results in the context of 3D\,gravity.

\subsubsection*{Acknowledgements}
We would like to thank Ga\"etan Borot and C\'esar Galindo for useful discussions.
IRo is currently supported by the Simons Collaboration on Global Categorical Symmetry. The majority of this work was carried out as part of IRo's PhD research \cite{Romaidis-Thesis} during which IRo was supported by the Deutsche Forschungsgemeinschaft
(DFG, German Research Foundation) under Germany’s Excellence Strategy - EXC 2121 “Quantum Universe” - 390833306.
IRu is supported in part by EXC 2121 “Quantum Universe” - 390833306.

\subsubsection*{Conventions}
Throughout this text, unless otherwise specified $\catname C$ will always be a modular fusion category over an algebraically closed field $\mathbbm{k}$ of characteristic $0$. Moreover, $I$ will denote a set of representatives of simple objects in the associated modular fusion category. 

\section{Reshetikhin-Turaev TQFT and mapping class group representations}\label{sec:RT+MCG}

In this section we introduce Reshetikhin-Turaev TQFT and its underlying modular functor. We review the notions of mapping class groups and their explicit (projective) action on the Reshetikhin-Turaev state spaces. 

\subsection{Reshetikhin-Turaev topological quantum field theories}\label{sec:RT}

The Reshetikhin-Turaev (RT) TQFT is a construction of a 3-dimensional TQFT associated to an MFC $\catname{C}$. In this section, we recall the geometric input of this TQFT and the underlying modular functor, in particular the decorated bordism category $\operatorname{Bord}^{\catname{C}}$ and the mapping class groupoid $\Mod^\catname{C}$ following \cite[Chapter IV]{Tur}. 

\medskip

A \textit{connected decorated surface}, or \textit{connected d-surface}, is a (smooth, oriented, closed)\footnote{For the purpose of this text, we will
always assume that a surface is smooth, oriented and closed unless specified otherwise.
Smoothness is not a strictly stronger condition here, as for $n\leq 3$ topological and smooth $n$-manifolds are equivalent. 
}
surface $\Sigma$ together with a finite set of framed points. A framed point is a point $p\in \Sigma$ equipped with a tangent vector $v_p$, a label $X\in \catname{C}$ and a sign $\epsilon_p \in \{\pm\}$. More generally, a d-surface is a disjoint union of connected d-surfaces.
The \textit{negation} of a d-surface $\Sigma$ is the d-surface obtained by orientation reversal $-\Sigma$ and replacing the tangent vectors by $-v_p$ and signs by $-\epsilon_p$. For simplicity we often leave signs unspecified for which we assume every sign to be positive. The negation of such labels will result to the corresponding dual objects. 
A diffeomorphism of d-surfaces is a diffeomorphism between the underlying surfaces, which in addition preserves the framed points. Two such diffeomorphisms are \textit{isotopic}, if they are isotopic through diffeomorphisms of decorated surfaces. 

A \textit{decorated 3-manifold} is a smooth, oriented, compact 3-dimensional manifold with an embedded $\catname{C}$-coloured ribbon graph such that all ribbons end on coupons or the boundary $\partial M$. 
The colouring and framing of the ribbon graph give the boundary $\partial M$ the structure of a decorated surface. A \textit{decorated bordism}, or \textit{d-bordism}, $\Sigma \rightarrow \Sigma'$ between two decorated surfaces is a decorated 3-manifold $M$ together with a boundary parametrisation, \ie a d-diffeomorphism $\partial M \xrightarrow{\simeq} -\Sigma \sqcup \Sigma'$. The d-surfaces $\Sigma$ and $\Sigma'$ are referred to as the \textit{ingoing} respectively \textit{outgoing} boundaries of $M$. Two d-bordisms are \textit{equivalent} if there is a diffeomorphism compatible with ribbon graphs and boundary parametrisations. 

The d-bordism category $\operatorname{Bord}^{\catname C}$ is the symmetric monoidal category formed by d-surfaces as objects and classes of d-bordisms as morphisms. The monoidal product is given by the disjoint union and the unit is the empty set $\emptyset$.

\medskip

The construction of Reshetikhin-Turaev TQFT (RT TQFT) assigns to any d-surface $\Sigma$ a finite dimensional vector space $V^\catname{C}(\Sigma)$, called the \textit{state space}, and to a d-bordism $(M;\Sigma,\Sigma',\varphi)$ it assigns a linear map
\be\label{eq:operator-inv}
\mathcal{Z}^{\catname C}(M;\Sigma,\Sigma',\varphi): V^{\catname C}(\Sigma)\rightarrow V^{\catname C}(\Sigma')~,
\ee 
between the respective state spaces. In the tuple $(M;\Sigma,\Sigma',\varphi)$ the element $\varphi$ denotes the boundary parametrisation $\varphi: \partial M \rightarrow -\Sigma\sqcup \Sigma'$. The RT TQFT  turns out to have gluing anomalies meaning that the assignment \eqref{eq:operator-inv} preserves the gluing of bordisms only up to scalars, \ie functoriality holds only projectively. In order to fix that and produce a symmetric monoidal functor, one introduces the extended version of the bordism category.

The extended bordism category $\widehat{\operatorname{Bord}}{}^\catname{C}$ has objects of the form $(\Sigma,\lambda)$ where $\Sigma \in \operatorname{Bord}^\catname{C}$ and $\lambda \subset H_1(\Sigma;\mathbb{R})$ is a Lagrangian subspace. Its morphisms are given by pairs $(M,n)$ where $M$ is a morphism in $\operatorname{Bord}^\catname{C}$ and $n$ is an integer, also referred to as the weight. This gives an anomaly-free TQFT as a symmetric monoidal functor 
\be\label{eq:RTextended}
\mathcal{Z}^{\catname C}: \widehat{\operatorname{Bord}}{}^\catname{C}\rightarrow \operatorname{Vect}~.
\ee

We will mostly focus on one particular feature
of RT TQFTs, namely
that of mapping class group representations. 
The \textit{mapping class groupoid} $\Mod^\catname{C}$ is the symmetric monoidal groupoid consisting of decorated surfaces as objects and isotopy classes of diffeomorphisms of decorated surfaces as morphisms. The mapping class groupoid can be considered as the subgroupoid of the bordism category via the symmetric monoidal functor 
\be \label{eq:mod-in-bord}
M: \Mod^{\catname C}\hookrightarrow \operatorname{Bord}^{\catname C} \ ,
\ee 
which acts as the identity on objects and to a mapping class $f$ it assigns the bordism 
\be\label{eq:mapping-cylinder}
M(f):= (\Sigma\times [0,1];\Sigma,\Sigma, f\sqcup \id{}) \ ,
\ee
where $f$ appears in the boundary parametrisation by ``twisting" the ingoing boundary. 
The bordism $M(f)$ is called the \textit{mapping cylinder} of $f$ as it consists of the underlying cylinder over $\Sigma$ and the mapping class $f$, which appears in the boundary parametrisation. 

Restricting the RT TQFT on the mapping class groupoid via the functor in \eqref{eq:mod-in-bord} yields the \textit{modular functor}. 
However, as with the TQFT, functoriality holds only projectively due to the gluing anomalies. Similarly, one introduces the extended mapping class groupoid $\widehat{\Mod}^{\catname C}$ with the same objects as $\widehat{\operatorname{Bord}}^{\catname C}$ and pairs $(f,n)$ as morphisms, where $f$ is a mapping class and $n$ is an integer. The outcome is now a symmetric monoidal functor 
\be\label{eq:ex-modular-functor}
V^{\mathcal{C}}: \widehat{\Mod}^{\catname C}\rightarrow \operatorname{Vect}~.
\ee
We will often omit writing $\mathcal{C}$ in the superscript when it is clear from the context. 

An important ingredient of the modular functor is formed by the gluing isomorphisms which describe how the state spaces behave under cutting and gluing of surfaces. Let $\gamma$ be a simple closed curve on a d-surface $\Sigma$. By cutting along $\gamma$ we obtain a surface $\Sigma\backslash\gamma$ which has two boundary circles $\gamma^{(1)}$ and $\gamma^{(2)}$ obtained from cutting. Denote by $\Sigma^\gamma(i,j)$ the d-surface obtained from filling the holes in $\Sigma\backslash\gamma$ with disks and inserting in the middle of each disk a framed point labelled by an object $i$ for the $\gamma^{(1)}$ component and an object $j$ for the $\gamma^{(2)}$ component, see Figure~\ref{fig:mf-cutting}.

\begin{figure}[tb]
    \centering
    \begin{equation*}
    	\boxpic{1}{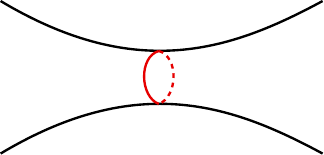}{
    		\put (10,20) {$\Sigma$}
    	    \put (47,10) {$\gamma$}
    }\quad 
\rightsquigarrow
\quad 
\boxpic{1}{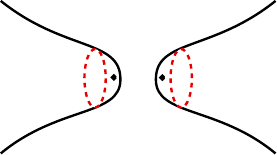}{
\put (52,25) {$j$}
\put (45,25) {$i$}
}
    \end{equation*}
    \caption{Cutting a d-surface $\Sigma$ along a simple closed curve $\gamma$ and the resulting d-surface obtained by inserting labels $i$ and $j$ in the filled holes.}
    \label{fig:mf-cutting}
\end{figure}

\begin{figure}[tb]
\centering
\boxpic{1}{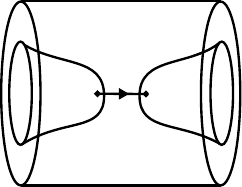}{
\put (50,43) {$i$}
}
\caption{A local picture of $N^\gamma_i$ around the cut along $\gamma$.}
\label{fig:Gluing-iso-3-mfd}
\end{figure}

Then, there are gluing isomorphisms 
\be\label{eq:mf-gluing-iso}
g_{\gamma,\Sigma} = \bigoplus_{i\in I}{g_i}: ~\bigoplus_{i\in I}{V(\Sigma^\gamma(i^\ast,i))} \rightarrow V(\Sigma) \ ,
\ee
which are natural in $\Sigma$ and compatible with the symmetric monoidal structure. These isomorphisms are provided by the RT TQFT as follows: Define the 3-manifold 
\be 
N^\gamma_i := (\Sigma^\gamma(i^\ast, i) \times [0,1])/\sim \ ,
\ee
where $\sim$ identifies on $\Sigma^\gamma(i^\ast,i)\times \{1\}$ the glued-in disks, therefore resulting in $\Sigma$. The manifold $N^\gamma_i$ has an incoming boundary at $0$ which is $\Sigma^\gamma(i^\ast,i)$ and an outgoing boundary at $1$ being $\Sigma$. The two framed points on the ingoing boundary are connected via an $i$-labelled ribbon in the interior of $N^\gamma_i$, see Figure~\ref{fig:Gluing-iso-3-mfd} and \cite[Sec.~2.6]{FRS5} for more details. Therefore, it defines a bordism 
\be 
N^\gamma_i: \Sigma^\gamma(i^\ast,i) \rightarrow \Sigma
\ee
in $\operatorname{Bord}^\mathcal{C}$. The TQFT evaluation on this bordism defines the $i$'th summand $g_i$ in \eqref{eq:mf-gluing-iso}, \ie $g_i= \mathcal{Z}^\mathcal{C}(N^\gamma_i)$.

For a d-surface $\Sigma$, the group of automorphisms $\Mod^{\catname C}(\Sigma):=\Mod^{\catname C}(\Sigma, \Sigma)$ is called the \textit{mapping class group} (MCG) of $\Sigma$. 
This is closely related to the ordinary geometric definition of mapping class groups as in \cite[Chapter 2]{FM}. For instance, let $(X,\epsilon) \in ob(\catname C)\times \{\pm\}$ be the label of every framed point on a d-surface $\Sigma$. Then, $\Mod^\catname{C}(\Sigma)$ corresponds to the framed mapping class group $\Mod(\Sigma)$ as in \cite[Chapter 2]{FM}. Here, we include a superscript of $\catname{C}$ to indicate that framed points carry labels. In general, $\Mod^{\catname C}(\Sigma)$ will be a subgroup of $\Mod(\Sigma)$ as its elements are required to also preserve labels. 

Furthermore, consider the subgroup of diffeomorphisms (up to isotopy), which fix each framed point pointwise. This subgroup is called the (framed) \textit{pure mapping class group} and we denote this by $\PMod(\Sigma)$, where we omit the superscript of $\catname C$ as such diffeomorphisms preserve by definition the label of each framed point. The mapping class group $\Mod(\Sigma_{g,n})$ of a surface $\Sigma_{g,n}$ with genus $g$ and $n$ framed points will be also denoted by $\Mod_{g,n}$ for short and the associated pure mapping class group by $\PMod_{g,n}$.
In fact, there is a short exact sequence 
\be\label{eq:ses-non-pure}
1\rightarrow \PMod_{g,n} \rightarrow \Mod_{g,n} \rightarrow S_n\rightarrow 1 \ ,
\ee
where the first arrow is the inclusion of $\PMod_{g,n}$ in $\Mod_{g,n}$ and the second arrow is determined the restriction of mapping classes onto the set of framed points. 
The \textit{unframed} version is obtained by simply omitting the data of tangent vectors in the description above. 
The unframed pure mapping class group $\PMod_{g,n}^\mathrm{un}$ is finitely generated by Dehn twists around the simple closed curves shown in Figure~\ref{fig:mcggenerators-w-insertions} \cite[Section 4.4.4]{FM}. A generating set for the framed pure mapping class group $\PMod_{g,n}$ is then obtained by adding Dehn twists around each point to the generating set. 
\begin{figure}[tb]
    \centering
    \boxpic{0.7}{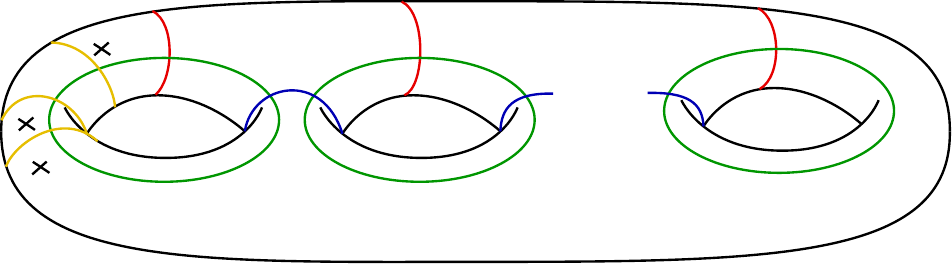}{
        \put (17,27.5) {$\alpha_1$}
        \put (43,28.5) {$\alpha_2$}
        \put (81,27.5) {$\alpha_g$}
        \put (15,5.5) {$\beta_1$}
        \put (43,5.5) {$\beta_2$}
        \put (82,6) {$\beta_g$}
        \put (30,19.5) {$\gamma_1$}
        \put (56,19.5) {$\gamma_2$}
        \put (67,19.5) {$\gamma_{g-1}$}
        \put (61,17.5) {$\dots$}
        \put (-3,8) {$\delta_1$}
        \put (-4,15) {$\delta_2$}
        \put (0,25) {$\delta_{n-1}$}
        \put (4,18.5) {$\iddots$}
    }
    \caption{Generator curves of the (unframed) pure mapping class group.}
    \label{fig:mcggenerators-w-insertions}
\end{figure}
\begin{example}\label{ex:braid-group}
Genus 0 mapping class groups are directly related to (framed) braid groups. For labels $X_1,\dots, X_n$ in $\catname C$ one may define the braid group with such labels as the mapping class group of the disk 
\be\label{eq:braid-group-labels} 
\operatorname{B}^{\catname{C}}_n(X_1,\dots, X_n) := \Mod(\mathbb{D}_n^2(X_1,\dots, X_n), \partial) \ ,
\ee
where $\mathbb{D}_n^2(X_{1},\dots, X_n)$ denotes the unit disk with $n$ framed points labelled by $X_i$'s. Mapping classes are also required to fix pointwise the boundary. 
By forgetting labels and framings, one obtains the usual definition of the braid group (for more on this example see \cite[Ex.\ 3.1]{Romaidis-Thesis}). 
\end{example}

From now on, for simplicity we will only work with the non-extended mapping class groups as the projectivity of the associated representations will not pose a problem in our discussions. Notice that the extended mapping class group relates to the ordinary via the short exact sequence
\be\label{eq:ses-extendedMCG}
0\rightarrow \mathbb{Z} \rightarrow \widehat{\Mod}_{g,n}\rightarrow \Mod_{g,n}\rightarrow 1 \ ,
\ee
where the first arrow maps $n\in\mathbb{Z}$ to the weighted identity mapping class $(\id{},n)$ and the second arrow is the projection onto the first factor $(f,n)\mapsto f$.

\begin{remark}\label{rem:anomalies}
	For an MFC $\mathcal{C}$ define the scalars 
	\begin{equation}\label{eq:p_pm}
		p_\pm := \sum_{i\in I}{\theta_i^{\pm 1} d_i^2}
  \ ,
 \end{equation}
where $\theta_i$ and $d_i$ denote the twist eigenvalue respectively the quantum dimension of the object $i\in I$. 
The projective factors appearing from the gluing anomalies are half-integer powers of the so-called \textit{anomaly factor} $p_+/p_-$.
Importantly, the anomaly factor $p_+/p_-$ of an MFC is always a root of unity.

If $p_+/p_-=1$ then $\mathcal{C}$ is called \textit{anomaly-free} and its RT TQFT resp.\ its modular functor descend to symmetric monoidal functors on the non-extended bordism category resp.\ non-extended mapping class groupoid. 

The Drinfeld centre $\mathcal{Z}(\mathcal{C})\simeq \mathcal{C}\boxtimes \mathcal{C}^\mathrm{rev}$ of any MFC is automatically anomaly-free. Its RT TQFT is of Turaev-Viro type in the sense that $\mathcal{Z}^{\mathcal{Z}(\mathcal{C})}\cong \mathcal{Z}^{\mathrm{TV},\mathcal{C}}$ 
(see \cite{KirBalsam, TuVi2}) 
with the latter being the Turaev-Viro TQFT of $\mathcal{C}$ \cite{TViro,BW}. 
\end{remark}

Given a d-surface $\Sigma$, its \textit{double} is the d-surface 
\be 
\widehat\Sigma = \Sigma \sqcup -\Sigma~.
\ee
Naturally, the associated state space $V^{\catname C} (\widehat\Sigma)$ carries a projective $\Mod(\widehat\Sigma)$-action. Notice that $\Mod(\Sigma)$ embeds diagonally in $\Mod(\widehat\Sigma)$ via $f \mapsto f \sqcup f$. 
Hence, $V^\mathcal{C}(\widehat\Sigma)$ restricts to $\Mod(\Sigma)$-representation which is in fact non-projective representation (the projective factors cancel out). In Section~\ref{sec:conformal-corr} we will consider the space of invariants under this diagonal action, \ie
\begin{equation}\label{eq:hat-Sigma-inv}
    V^\mathcal{C}(\widehat\Sigma)^{\Mod(\Sigma)}~.
\end{equation}

An important property of the modular functor associated to the RT TQFT is that there exists a non-degenerate pairing
\be 
\label{eq:selfdualpairing}
\pairing_\Sigma: V^{\catname C}(\Sigma) \otimes V^{\catname C}(-\Sigma)\rightarrow \mathbbm{k} 
\ee
(see \cite[Eq.~III.(1.2.4)]{Tur} and \cite[Thm.~III.2.1.1]{Tur}). Here, natural means that the pairing is compatible with d-diffeomorphisms. In particular, it is invariant under the action of the mapping class group, \ie for any $(x,y) \in V^{\catname C}(\Sigma) \times V^{\catname C}(-\Sigma)$ and any mapping class $f\in \Mod(\Sigma)$
\be
\label{eq:pairing-inv}
\pairing_\Sigma(f.x,f.y) ~=~ \pairing_\Sigma(x,y)~.
\ee
In fact, $\pairing_\Sigma$ is the result of applying the RT TQFT on the cylinder $\Sigma\times [0,1]$ seen as a bordism $\Sigma\sqcup -\Sigma \rightarrow \emptyset$. For a mapping class $f$ on $\Sigma$ the diffeomorphism $f\times \id{[0,1]}$ does not change the equivalence class of $\Sigma\times [0,1]$, which implies invariance of the pairing under mapping class group action.

It will be convenient to describe the vector spaces $V^\catname{C}(\Sigma)$ explicitly in terms of morphism spaces. 
Let $\Sigma$ be a d-surface and fix an ordering of its labels $(X_1,\epsilon_1), \dots, (X_n,\epsilon_n)$. The state space $V(\Sigma_{g,n})$ is isomorphic to the morphism space 
\be\label{eq:V_g,n-def}
V_{g,n} ~=~ \catname{C}(\unit, X_1^{\epsilon_1}\otimes\dots\otimes X_n^{\epsilon_n}\otimes L^{\otimes g}) \ ,
\ee
where  $X^+\equiv X$, $X^- \equiv X^\ast$ and $L=\bigoplus_{i\in I}{i\otimes i^\ast}$.
We decompose $V_{g,n}$ into a direct sum as
\begin{equation}\label{eq:V_i-decomp}
    V_{g,n} ~=~ \bigoplus_{i \in I^g} V_i \ ,
\end{equation}
where $V_{i} = \mathcal{C}(\unit, X_1^{\epsilon_1}\otimes\dots\otimes X_n^{\epsilon_n}\otimes \bigotimes_{k=1}^g {i_{k} \otimes i_k^\ast})$ for $i=(i_{1}, \dots, i_g) \in I^g$. 
\begin{figure}[tb]
    \centering
    \begin{overpic}[scale = 0.9]{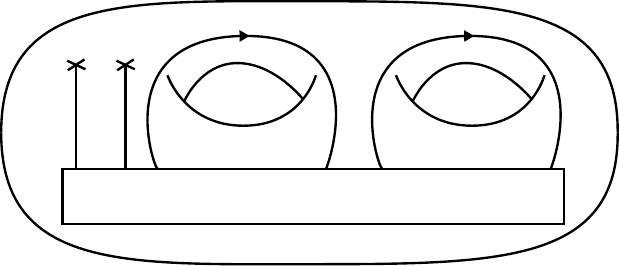}
        \put (50,9.5) {$f$}
        \put (10,34) {$X_1^{\epsilon_1}$}
        \put (18.5,34) {$X_n^{\epsilon_n}$}
        \put (13.5, 20) {$\cdots$}
        \put (54.5,20) {$\cdots$}
        \put (35,38.5) {$i_1$}
        \put (71,38.5) {$i_g$}
    \end{overpic}
    \caption{A handlebody with embedded ribbon graph and boundary $\Sigma_{g,n}$. The coupon is labelled by a morphism
    $f \in V_i \subset \catname{C}(\unit, X_1^{\epsilon_1}\otimes\dots\otimes X_n^{\epsilon_n}\otimes L^{\otimes g})$.
    }
    \label{fig:handlebodycoupon}
\end{figure}
For a vector $f\in V_i$, consider the handlebody with an embedded ribbon graph in Figure \ref{fig:handlebodycoupon}, where the $k$'th ribbon strand is directed upwards respectively downwards if $\epsilon_k = +$ respectively $\epsilon_k = -$ and the coupon is labelled by the morphism $f$. In this figure, the framing is the one where all tangent vectors point to the right. This represents a d-bordism $\emptyset\rightarrow \Sigma_{g,n}$. 
Evaluating the RT TQFT on this bordism provides an isomorphism $V_{g,n} \cong V(\Sigma_{g,n})$. 

Let us describe the pairing in terms of morphism spaces as in \eqref{eq:V_g,n-def}. Fixing the isomorphism between $V(\Sigma_{g,n})$ and \eqref{eq:V_g,n-def} fixes an isomorphism for $V(-\Sigma_{g,n})$ and 
\begin{equation}
    \overline{V}_{g,n} ~=~ \mathcal{C}(X_1^{\epsilon_1}\otimes\dots\otimes X_n^{\epsilon_n}\otimes L^{\otimes g},\unit) ~=~ V_{g,n}^\ast \ .
\end{equation}
The pairing between a morphism $f \in V_{g,n}$ and a morphism $g\in \overline{V}_{g,n}$ is then given by 
\begin{equation}
    d(f,g) = g\circ f \ ,
\end{equation}
also known as the trace pairing (see \cite[Sec.\ IV.1.7]{Tur} and note that the special ribbon graph representing the cylinder bordism is given in \cite[Fig.\ IV.2.4]{Tur}). 
In particular, for $f_{i} \in V_{i}$ and $g_{j} \in \overline{V}_j$ we have 
\begin{equation}
    d(f_i,g_j) \neq 0 \quad \Leftrightarrow \quad i = j,~g_i \circ f_i \neq 0 \ .
\end{equation}

We describe this for a specific family of vectors in $V_i$. Let $T$ be a morphism in $V_i$ which is obtained by a ribbon tangle with no link components, \ie it is represented by a ribbon graph with no coupons and no loops and whose colours are determined by the outgoing labels, see Figure~\ref{fig:tangle-handlebody} for an example. In particular, if $T$ is non-zero then $n$ is even since for each $X_i^{\epsilon_i}$ the label $X_{i}^{-\epsilon_i}$ is also contained in the label set. 

Reflecting along the horizontal plane and reversing orientations for each strand gives rise to a morphism in $\overline{V}_i$ denoted by $-T$. The tensor product $\widehat{T}:=T\otimes -T $ represents an element in $V(\widehat\Sigma)$. 

\begin{lemma}\label{lem:tangle-pairing}
    Let $T$ be a morphism in $V_{i}$ represented by a $\mathcal{C}$-coloured ribbon tangle with no link components. Then we have 
    \begin{equation}
        \pairing_\Sigma (\widehat{T}) \neq 0~. 
    \end{equation}
\end{lemma}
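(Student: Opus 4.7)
The plan is to use the trace-pairing description of $\pairing_\Sigma$ recalled in the excerpt. Under the isomorphism $V(\Sigma_{g,n})\cong V_{g,n}$ and $V(-\Sigma_{g,n})\cong \overline{V}_{g,n}$, the element $\widehat{T}=T\otimes(-T)$ is paired by composition, so
\[
   \pairing_\Sigma(\widehat{T}) \;=\; (-T)\circ T \;\in\;\catname{C}(\unit,\unit)=\mathbbm{k}.
\]
The task is therefore to show that this composition, viewed as the RT-evaluation of the closed $\catname{C}$-coloured ribbon diagram obtained by stacking $-T$ on top of $T$, is a non-zero scalar.

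Next, I would analyse the topology of the stacked diagram. Since $T$ has no closed components, its underlying graph consists solely of arcs, each with both endpoints on the ``top'' boundary (among the labelled framed points). Stacking $-T$ on top of $T$ joins each arc $\gamma$ of $T$ to its horizontal mirror $\bar\gamma$ in $-T$ along their shared endpoints, producing one closed loop per arc of $T$. Because $-T$ is by construction the horizontal reflection (with orientations reversed) of $T$, each loop $\gamma\cup\bar\gamma$ bounds an embedded disk in $\mathbb{R}^3$ swept out by the ``fold'' between $\gamma$ and its mirror, and any two such loops can be separated by an ambient isotopy respecting the mirror symmetry. Consequently the stacked diagram is isotopic to a split link of framed unknots, each coloured by an object among the prescribed labels $X_i^{\pm\epsilon_i}$ and the gluing labels $i_k^{\pm}$.

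The final step is then immediate: the RT invariant of a split link of framed unknots coloured by objects $Y_j$ factorises as a product of the form $\prod_j \theta_{Y_j}^{n_j}\, d_{Y_j}$, where $n_j\in\mathbb{Z}$ records the framing of the $j$-th component and $d_{Y_j}$ is its quantum dimension. All $Y_j$ are simple objects of the MFC $\catname{C}$, for which both $d_{Y_j}$ and $\theta_{Y_j}$ are non-zero scalars, so the whole product is non-zero and $\pairing_\Sigma(\widehat{T})\neq 0$.

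I expect the main obstacle to lie in the topological argument of the second step, namely verifying that stacking a ribbon tangle $T$ with its horizontal mirror image $-T$ really does yield a split union of unknotted framed circles rather than a potentially knotted or linked configuration. This is essentially a ``doubling trick'' for ribbon tangles: the reflection $-T$ acts as a topological inverse for $T$, and one needs a clean ambient isotopy realising this cancellation strand-by-strand. Once this picture is in place, non-vanishing follows from the two basic MFC facts that simple objects have invertible twists and non-zero quantum dimensions.
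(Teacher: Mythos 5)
Your overall strategy coincides with the paper's: identify $\pairing_\Sigma(\widehat T)$ with the scalar $(-T)\circ T$, i.e.\ with the evaluation of the closed ribbon diagram obtained by stacking the mirror $-T$ on top of $T$, and then argue that this closed diagram evaluates to a product of non-zero quantum dimensions. The difference lies in how the cancellation between $T$ and its mirror is effected, and this is exactly where your argument has a genuine gap. The paper writes $T=T_t\circ T_b\circ T_c$ (framing twists on vertical strands, composed with a braid, composed with coevaluations) and uses that the horizontal reflection of a braid is its inverse braid and the reflection of a framing twist is the opposite twist, so that $-T\circ T$ collapses \emph{algebraically} to $-T_c\circ T_c$, a disjoint union of coloured $0$-framed circles. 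You instead assert a purely topological statement: that for \emph{any} tangle with no closed components the stacked diagram is a split link of framed unknots, because each doubled arc ``bounds an embedded disk swept out by the fold''. This is false in the stated generality. First, the swept surface is embedded only when the planar projection of the arc is injective, so it cannot serve as a bounding disk as soon as the diagram has crossings. Second, and more seriously, if $\gamma$ is an arc in the lower half-space whose plat closure (cap the two endpoints off by an arc in the boundary plane) is a non-trivial knot $K$, then splitting along the boundary sphere shows that $\gamma\cup\bar\gamma$ is the connected sum $K\#\overline{K}$ of $K$ with its mirror image, which by additivity of Seifert genus is never the unknot. A tangle containing such a knotted arc is a perfectly admissible ``ribbon tangle with no link components'', so the step you yourself flag as the main obstacle really does fail: there is no ambient isotopy performing the strand-by-strand cancellation in that case.

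To be fair, the paper's ``without loss of generality'' silently excludes exactly these configurations: a knotted arc has at least two local minima in every representative, hence is not of the form $(\text{braid})\circ(\text{coevaluations})$, and a similar remark applies to arcs that link each other in a way not realised by a braid acting on the cups. Within the class of tangles admitting the decomposition $T_t\circ T_b\circ T_c$ --- which contains all the seeds the paper actually uses, including the empty tangle and the example of Figure~\ref{fig:tangle-handlebody} --- the algebraic cancellation is the clean way to make your ``doubling trick'' rigorous, and you should either adopt it or explicitly restrict your statement to that class. For genuinely knotted or linked arcs the claim would instead reduce to the non-vanishing of the evaluation of $K\#\overline{K}$ coloured by a simple object $i$, which equals the product of the evaluations of $K$ and of $\overline{K}$ divided by $d_i$; this is not automatic and would require a separate argument. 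Your final step (a split union of framed unknots evaluates to $\prod_j\theta_{Y_j}^{n_j}d_{Y_j}\neq 0$) is fine once the topological reduction is secured.
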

\begin{proof}
    Without loss of generality we may write $T$ as a composition $T = T_t \circ T_b \circ T_c$,
    where $T_t$ is represented by a ribbon tangle whose underlying unframed tangle is the identity, $T_b$ is a braid with no framings and $T_c$ consists of a tensor product of coevaluation maps. 
    The pairing is given by 
    \[ -T \circ T = -T_c \circ -T_b \circ -T_t \circ T_t \circ T_b \circ T_c = -T_c \circ T_c \neq 0 \ ,\]
    where the second equality follows from definition and $-T_c\circ T_c$ is non-zero as it is a product of
    quantum dimensions, all of which are non-zero.
\end{proof}

\begin{figure}[tb]
    \centering
    \boxpic{0.8}{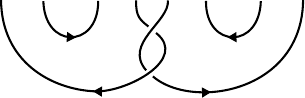}{
    \put (-1,34) {$i$}
    \put (12,34) {$i^\ast$}
    \put (30,34) {$i$}
    \put (43,34) {$i^\ast$}
    \put (53,34) {$j^\ast$}
    \put (65,34) {$j$}
    \put (83,34) {$j^\ast$}
    \put (97,34) {$j$}
    }
    $\rightsquigarrow$
    \boxpic{0.8}{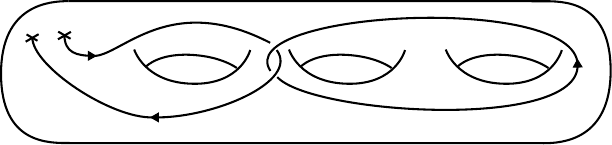}{ 
    \put (95,10) {$j$}
    \put (10,5) {$i$}
    }
    \caption{A tangle morphism in $V_{3,2}$ with labels $i$ and $i^\ast$ which represents a vector in $V(\Sigma_{3,1})$ via the associated handlebody on the right.}
    \label{fig:tangle-handlebody}
\end{figure}

\begin{example}\label{ex:handlebody-pairing}
	Let $\Sigma$ be a d-surface without framed points and let $H_\Sigma$ be the handlebody bounding $\Sigma$. Evaluating the TQFT on $H_\Sigma$ seen as a bordism $\emptyset \rightarrow \Sigma$ gives a vector 
	\begin{equation}
		x:=\mathcal{Z}^\mathcal{C}(H_\Sigma)\in V^\mathcal{C}(\Sigma) \ .
	\end{equation}
 Similarly, evaluating the handlebody with reversed orientation $-H_\Sigma: \emptyset \rightarrow -\Sigma$ gives a vector
 \begin{equation}
 	\hat{x}:= \mathcal{Z}^\mathcal{C}(-H_\Sigma)\in V^\mathcal{C}(-\Sigma)~.
  \end{equation}  
By Lemma~\ref{lem:tangle-pairing} for the empty tangle $T = \emptyset$ we have 
\begin{equation}\label{eq:x-hatx=1}
	\pairing_\Sigma(\hat{x}) \neq 0~.
\end{equation}
\end{example}

\subsection{Mapping class group action on RT state spaces}\label{subsec:MCGaction}

We will only need explicit expressions for the action of the pure mapping class group.
Fix a surface $\Sigma_{g,n}$ of genus $g$ and $n$ framed points labelled by simple objects $l_1,\dots, l_n \in I$. 
For a simple closed curve $\gamma$ on $\Sigma_{g,n}$ one can define the Dehn twist $T_\gamma$ along $\gamma$ as a mapping class in $\PMod_{g,n}$. It is obtained by cutting out an annular neighbourhood of $\gamma$, performing a full twist and gluing it back to the surface. A set of generators of the pure mapping class group is given as the set of Dehn twists along the curves in Figure~\ref{fig:mcggenerators-w-insertions} and also (due to the framings) Dehn twists $T_{\lambda_k}$ around each framed point, see \cite[Corollary~4.15]{FM}. 
We further define the so-called $S$-transformations by $S_k := T_{\alpha_k}\circ T_{\beta_k}\circ T_{\alpha_k}$ and we replace in our generating set the Dehn twists $T_{\beta_k}$ by the corresponding $S$-transformations $S_k$.

Following \cite[Ch.\,IV]{Tur}
we can now describe explicitly how these generators act on vectors of the state space $V^{\catname C}(\Sigma_{g,n})$ up to projectivity. Recall from \eqref{eq:V_g,n-def} that this vector space is identified with the morphism space 
\begin{equation}\label{eq:hom-space-state-space}
\catname C(\unit, l_1\otimes \cdots \otimes l_n \otimes L^{\otimes g} ) \ ,
\end{equation}
which in turn
decomposes into a direct sum of $V_i$'s as in \eqref{eq:V_i-decomp}.
For 
$i = (i_1,\dots, i_g) \in I^g$ and
a vector $f_i \in V_i$, the mapping class group generators act\footnote{The action is projective with projective factors given by half-integer powers of $p_+/p_-$ (Remark \ref{rem:anomalies}).}
as follows: 
\vspace{1ex}
\begin{align}\label{eq:MCG-gen-action}
T_{\lambda_k}(f_i) = \theta_{l_k} &\boxpic{0.7}{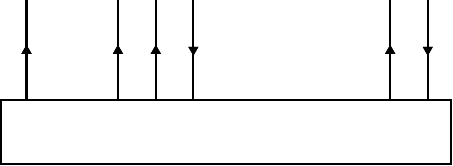}{
        \put (3,39) {$l_1$}
        \put (23,39) {$l_n$}
        \put (32,39) {$i_1$}
        \put (42,39) {$i_1^\ast$}
        \put (85,39) {$i_g$}
        \put (94,39) {$i_g^\ast$}
        \put (45,5) {$f_i$}
        \put (10,25) {$\cdots$}
        \put (60,25) {$\cdots$}
        }
&
T_{\delta_k}(f_i) =
    \boxpic{0.7}{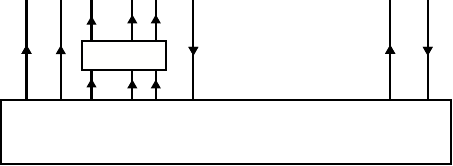}{
        \put (2,39) {$l_1$}
        \put (10,39) {$l_k$}
        \put (17,39) {\tiny$\cdots$}
        \put (26.5,39) {$l_n$}
        \put (32,39) {$i_1$}
        \put (42,39) {$i_1^\ast$}
        \put (85,39) {$i_g$}
        \put (94,39) {$i_g^\ast$}
        \put (45,5) {$f_i$}
        \put (25,22) {\small$\theta$}
        \put (60,25) {$\cdots$}
        }
\nonumber\\[2ex]
T_{\alpha_k}(f_i) = \theta_{i_k} &\boxpic{0.7}{figures/f_i.pdf}{
        \put (2,39) {$l_1$}
        \put (23,39) {$l_n$}
        \put (32,39) {$i_1$}
        \put (42,39) {$i_1^\ast$}
        \put (85,39) {$i_g$}
        \put (94,39) {$i_g^\ast$}
        \put (45,5) {$f_i$}
        \put (10,25) {$\cdots$}
        \put (60,25) {$\cdots$}
        }
&
T_{\gamma_k}(f_i) =
    \boxpic{0.7}{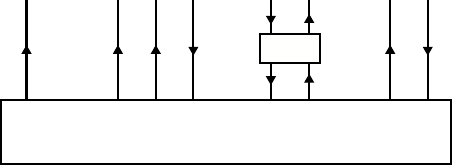}{
        \put (2,39) {$l_1$}
        \put (23,39) {$l_n$}
        \put (32,39) {$i_1$}
        \put (42,39) {$i_1^\ast$}
        \put (85,39) {$i_g$}
        \put (94,39) {$i_g^\ast$}
        \put (45,5) {$f_i$}
        \put (56,39) {$i_k^\ast$}
        \put (65,39) {$i_{k+1}$}
        \put (62,23.5) {\small$\theta$}
        \put (10,25) {$\cdots$}
        \put (47,25) {$\cdots$}
        \put (73,25) {$\cdots$}
        }
\nonumber\\[2ex]
S_k(f_i)=\bigoplus_{j\in I} \frac{d_j}{D}
    &\boxpic{0.7}{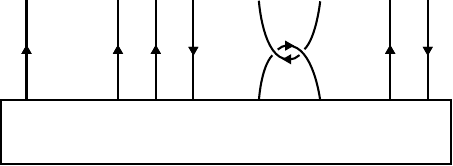}{
        \put (2,39) {$l_1$}
        \put (23,39) {$l_n$}
        \put (32,39) {$i_1$}
        \put (42,39) {$i_1^\ast$}
        \put (85,39) {$i_g$}
        \put (94,39) {$i_g^\ast$}
        \put (45,5) {$f_i$}
        \put (54,39) {$j$}
        \put (68,39) {$j^\ast$}
        \put (10,25) {$\cdots$}
        \put (47,25) {$\cdots$}
        \put (73,25) {$\cdots$}
    }
\end{align}
Here, we used $\theta$ to denote the twist morphism in $\catname C$ and $D$ is a square root of its quantum dimension, \ie $D = \sqrt{\sum_{i\in I}{d_i^2}}$. 
Note that all generators except for $S_k$ map the direct summand $V_i$ back to itself.

\medskip

\begin{figure}[tb]
    \centering
    \boxpic{0.7}{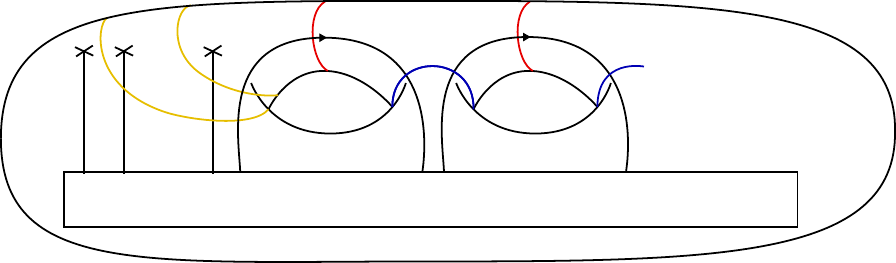}{
        \put (34,30.5) {$\alpha_1$}
        \put (58,30.5) {$\alpha_2$}
        \put (45,23.5) {$\gamma_1$}
        \put (68,23.5) {$\gamma_2$}
        \put (77,18) {$\dots$}
        \put (10,29) {$\delta_1$}
        \put (20,30) {$\delta_{n-1}$}
        \put (6.5,15) {$l_1$}
        \put (10.5,15) {$l_2$}
        \put (20,15) {$l_{n}$}
        \put (27.5,25) {$i_1$}
        \put (50,25) {$i_2$}
        \put (45,6) {$f_i$}
    }
    \caption{Arrangement of coupon and ribbons in the handlebody used to derive the mapping class group action on the Hom-space \eqref{eq:hom-space-state-space}.}
    \label{fig:mcggenerators-and-ribbongraph}
\end{figure}
To arrive at these expressions, consider a handlebody with a coupon labelled by $f_i$ as in Figure~\ref{fig:mcggenerators-and-ribbongraph}.
The pure mapping class group acts on the state space assigned to the boundary surface via mapping cylinders as in \eqref{eq:mapping-cylinder}.
Thus the action of a mapping class $h$ on $f_i$ is obtained by gluing the mapping cylinder $M(h)$ over $\Sigma_{g,n}$ to the handlebody.

For a simple closed curve $\gamma$ on $\Sigma_{g,n}$, which is contractible with respect to the handlebody, the Dehn twist $T_\gamma$ acts by twisting the ribbons passing through the disc in the handlebody bounded by the curve. This is the origin of the twists $\theta$ in the first four equations in \eqref{eq:MCG-gen-action}.

Details on the derivation of the last equation in \eqref{eq:MCG-gen-action} can be found \eg in \cite[Def.\ 3.1.15]{BK}.

\section{Correlators as mapping class group averages}\label{sec:conformal-corr}

The goal of this section is to establish a correspondence between RCFT correlators and MCG averages. This is surprisingly motivated by physics considerations in 3D\,quantum gravity, which will be explained in the later Section~\ref{sec:physics-review}. In Section
\ref{subsec:RCFT} we briefly recall the basics of rational CFT and its RT TQFT description. In Section \ref{subsec:MCGaverage} we define mapping class group averages under a finiteness condition and in Section \ref{subsec:correspondence} we relate for certain theories RCFT correlators with MCG averages.  	
	
\subsection{RCFT correlators}\label{subsec:RCFT}

A 2-dimensional rational CFT is characterised by its rational VOA algebra $\mathcal{A}$ and its correlation functions that live on Riemann surfaces with field insertions. 
Its representation category $\catname C = \operatorname{Rep}(\mathcal{A})$  
is an MFC \cite{Huang}. The TQFT approach to RCFT in \cite{FRS1} utilises the RT TQFT associated to $\catname C$ to provide all consistent systems of correlators.

The state space $V^{\catname C} (\Sigma)$ for a surface $\Sigma$ models the space of chiral conformal blocks.
Similarly, correlators are viewed as elements in the state space of the double $\widehat\Sigma$.
We summarise the topological description of correlators: 
\begin{itemize}
    \item The correlator $\operatorname{Cor}(\Sigma)$ on a surface $\Sigma$ is an element of the state space $V^\catname{C}(\widehat\Sigma)$ of the double surface $\Sigma$, \ie 
    \be\label{eq:hol-fact} 
    \operatorname{Cor}(\Sigma) \in V^\catname{C}(\widehat\Sigma)~.
    \ee 
    This is known as \textit{holomorphic factorisation} in that correlators solve the chiral Ward identities. 
    \item The correlator $\operatorname{Cor}(\Sigma)$ is invariant under the diagonal action of the mapping class group $\Mod(\Sigma)$, \ie
    \be\label{eq:Cor-mod-inv}
    \operatorname{Cor}(\Sigma) \in V^\mathcal{C}(\widehat\Sigma)^{\Mod(\Sigma)}~.
    \ee 
    This invariance, also called modular invariance, implies that correlation functions are single-valued. 
    \item Correlators solve sewing constraints, \ie they behave nicely under cutting and gluing of a surface along some boundary circle. 
    This property is captured in terms of the modular functor by the gluing isomorphism $g$ from \eqref{eq:mf-gluing-iso}. Let $\Sigma$ be a d-surface and $\gamma$ be a simple closed curve on $\Sigma$. Then
    \be\label{eq:cor-factorisation}
    \operatorname{Cor}(\Sigma) = g_{\gamma, \widehat\Sigma}\left(\sum_{i,j,\lambda}\operatorname{Cor}(\Sigma_\gamma(i,j))\right)
    \ ,
    \ee
    where $\Sigma_{\gamma}(i,j)$ is the cut surface with label $(i,j)\in \catname C \boxtimes \catname C^\mathrm{rev}$ resp.\ its dual and $\lambda$ runs over possible multiplicities. The sum is over intermediate states which means over all simple labels $i,j\in I$. Finally, the $g_{\gamma,\widehat\Sigma}$ is the gluing isomorphism \eqref{eq:mf-gluing-iso} when gluing along $\gamma$ in both copies in $\widehat\Sigma$.
    This is also called the \textit{factorisation property} of correlators and for the sphere with four framed points this is the realisation of an operator product expansion (OPE).
\end{itemize}

The explicit construction of RCFT correlators for a given MFC $\catname C$ relies further on a symmetric special Frobenius algebra $A$ in $\catname C$. The corresponding CFT correlators depending on $A$ are denoted by 
\be 
\operatorname{Cor}^{\catname C}_A(\Sigma) \in V^{\catname C}(\widehat\Sigma) \ .
\ee
Recall that symmetric special Frobenius algebras give rise to surface defects in $\catname C$ \cite{KS,FSV,CRS2}.
We can exploit this to construct the correlators given in \cite{FRS5} in terms of the RT TQFT with defects
\begin{equation}
    \mathcal{Z}^{\mathcal{C},\mathrm{def}}: \operatorname{Bord}_3^{\mathcal{C},\mathrm{def}} \rightarrow \Vect 
\end{equation}
introduced in \cite{CRS2}. The category $\operatorname{Bord}_3^{\mathcal{C},\mathrm{def}}$ above denotes the category of bordisms which also include defects (of any codimension) with labels depending on the codimension. 

Let $\Sigma$ be a d-surface with framed points labelled by pairs $(i,j) \in \catname C\times \catname C$ which correspond to bulk field insertions. Consider the defect cylinder\footnote{Its underlying 3-manifold is the cylinder $\Sigma \times [0,1]$.} 
\begin{equation}
\widehat{M}_\Sigma : \emptyset \rightarrow \widehat\Sigma \ ,
\end{equation}
with a surface defect $\Sigma \times \{1/2\}$ labelled by the symmetric special Frobenius algebra $A$. The candidate for the RCFT correlator on $\Sigma$ is then defined as
\be\label{eq:RCFT-cor-via-def}
\operatorname{Cor}_A^{\catname C}(\Sigma) = \mathcal{Z}^{\mathcal{C},\mathrm{def}}
(\widehat{M}_\Sigma)~
\ee
by applying the defect RT TQFT on the right (Figure~\ref{fig:cor-bnd-def} on the right hand side). 

There is a dual description in terms of boundary conditions for the double theory $\doublecat$. In terms of boundary conditions consider the cylinder over $\Sigma$ as a bordism 
\begin{equation}
M_\Sigma: \emptyset \rightarrow \Sigma    
\end{equation}
by declaring $\Sigma\times \{0\}$ as a boundary condition labelled by $A$ (Figure~\ref{fig:cor-bnd-def} on the left hand side). Then,
we may also write
\be\label{eq:RCFT-cor-via-bnd}
\operatorname{Cor}_A^{\catname C}(\Sigma) = \mathcal{Z}^\mathcal{C\boxtimes C^\mathrm{rev},\mathrm{bnd}}(M_\Sigma) \in V^{\catname C \boxtimes \catname{C}^\mathrm{rev}}(\Sigma) \ ,
\ee
where the TQFT of $\catname C\boxtimes \catname C^\mathrm{rev}$ (which includes boundary conditions) is applied to the cylinder $M_\Sigma$. The two approaches \eqref{eq:RCFT-cor-via-def} and \eqref{eq:RCFT-cor-via-bnd} are illustrated by Figure~\ref{fig:cor-bnd-def}. 
For a detailed overview on defects and boundary conditions refer to \cite[Sec.\ 4]{Romaidis-Thesis}. Moreover, note that the approach of boundary conditions and the associated figure in Figure~\ref{fig:cor-bnd-def} resembles the sandwich picture of 
\cite{FreedMooreTeleman}.
\begin{figure}[bt]
    \centering
    \begin{equation*}
    	\boxpic{0.9}{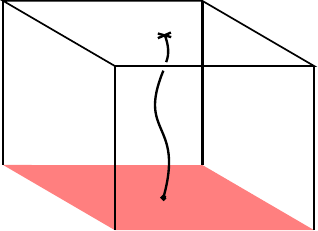}{
    		\put (4,30) {$\doublecat$}
    		\put (43,64.5) {$(i,j)$}
    	}
    	\quad \sim \quad 
    	\boxpic{0.9}{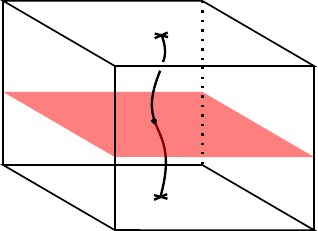}{
    		\put (10,50) {$\mathcal{C}$}
    		\put (10,25) {$\mathcal{C}$}
    		\put (49,63.5) {$i$}
    		\put (48,3) {$j$}
    	}
    \end{equation*}
    \caption{The construction of correlators via boundary in $\catname C\boxtimes \catname C^\mathrm{rev}$ or defects in $\catname C$. The picture on the left represents the cylinder over $\Sigma$ seen as a bordism from $\emptyset$ to $\Sigma$ by declaring the bottom boundary to be a boundary condition. The picture on the right represents again the cylinder, now seen as a bordism from $\emptyset \rightarrow \widehat\Sigma$ together with a defect in the middle. Correlators are obtained by evaluating with the associated TQFTs.}
    \label{fig:cor-bnd-def}
\end{figure}

The following theorem is a result from \cite{FRS5}:
\begin{theorem}
Let $\catname C$ be an MFC and $A$ a symmetric special Frobenius algebra in $\catname C$. The correlators $\{\operatorname{Cor}_A^\mathcal{C}\}_\Sigma$ defined in \eqref{eq:RCFT-cor-via-def} form a consistent system of correlators, \ie they satisfy \eqref{eq:hol-fact}, \eqref{eq:Cor-mod-inv} and \eqref{eq:cor-factorisation}.
\end{theorem}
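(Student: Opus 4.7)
The plan is to verify each of the three properties \eqref{eq:hol-fact}, \eqref{eq:Cor-mod-inv}, and \eqref{eq:cor-factorisation} separately, using the defect TQFT description \eqref{eq:RCFT-cor-via-def} (the boundary description \eqref{eq:RCFT-cor-via-bnd} gives an equivalent argument).

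First, \eqref{eq:hol-fact} is immediate from the definition: the defect bordism $\widehat{M}_\Sigma : \emptyset \to \widehat\Sigma$ has outgoing boundary $\widehat\Sigma$, and applying the defect TQFT $\mathcal{Z}^{\mathcal{C},\mathrm{def}}$ to such a bordism produces a vector in $V^\mathcal{C}(\widehat\Sigma)$ since on the boundary the defect data reduces to the standard d-surface data.

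Next, for modular invariance \eqref{eq:Cor-mod-inv}, I would show that for any $f \in \Mod(\Sigma)$ the mapping cylinder $M(f\sqcup f)$ glued on top of $\widehat{M}_\Sigma$ yields an equivalent defect bordism to $\widehat{M}_\Sigma$ itself. The key observation is that $f$ is a d-diffeomorphism of $\Sigma$, so $f \times \id_{[0,1]}$ is a diffeomorphism of the underlying cylinder $\Sigma \times [0,1]$ which preserves the defect surface $\Sigma \times \{1/2\}$ (where $f$ acts as a symmetry of the $A$-labelling since the label is constant on the defect). This diffeomorphism induces an equivalence of defect bordisms between $M(f\sqcup f) \circ \widehat{M}_\Sigma$ and $\widehat{M}_\Sigma$, and functoriality of $\mathcal{Z}^{\mathcal{C},\mathrm{def}}$ then gives $(f\sqcup f).\operatorname{Cor}_A^\mathcal{C}(\Sigma) = \operatorname{Cor}_A^\mathcal{C}(\Sigma)$.

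The factorisation property \eqref{eq:cor-factorisation} is the most delicate step and where I expect the real work. Given a simple closed curve $\gamma$ on $\Sigma$, I would cut $\widehat{M}_\Sigma$ along $\widehat\gamma \times [0,1]$ (where $\widehat\gamma = \gamma \sqcup \gamma$ lies in $\widehat\Sigma$), using the gluing axiom of the TQFT to re-express $\mathcal{Z}^{\mathcal{C},\mathrm{def}}(\widehat{M}_\Sigma)$ as a sum over intermediate simple labels $(i,j) \in I \times I$ (with multiplicity label $\lambda$) inserted on the cut. By construction of the gluing isomorphism $g_{\gamma,\widehat\Sigma}$ in \eqref{eq:mf-gluing-iso} as the TQFT applied to $N^\gamma_i$ (together with the analogous construction on the $-\Sigma$ copy), this sum is precisely $g_{\gamma,\widehat\Sigma}(\sum_{i,j,\lambda} \operatorname{Cor}_A^\mathcal{C}(\Sigma^\gamma(i,j)))$, where the summand $\operatorname{Cor}_A^\mathcal{C}(\Sigma^\gamma(i,j))$ is the defect TQFT applied to the cut defect cylinder with the induced insertions on the resulting d-surface $\Sigma^\gamma(i,j)$. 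The subtle point, and the main obstacle, is that cutting the defect cylinder introduces a boundary component of the defect surface $\Sigma \times \{1/2\}$ meeting the cut, and the identification of this with a sum over insertions $(i,j)$ (together with the natural $\lambda$-multiplicity) requires that $A$ be a symmetric special Frobenius algebra: specialness ensures that the cut defect decomposes into the correct completeness-type relation, while the symmetric Frobenius structure fixes the pairing between the two copies in $\widehat\Sigma$. This is exactly the content worked out in detail in \cite{FRS5}, which I would cite for the verification of the combinatorial identities underlying the sewing.
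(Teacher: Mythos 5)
Your proposal is correct and is essentially the paper's own treatment: the paper states this theorem as a result of \cite{FRS5} and offers no independent proof, while your arguments for \eqref{eq:hol-fact} and \eqref{eq:Cor-mod-inv} (the latter via the diffeomorphism $f\times\id{[0,1]}$ of the defect cylinder preserving the $A$-labelled defect $\Sigma\times\{1/2\}$) match the style of argument the paper itself uses for the mapping class group invariance of the pairing $\pairing_\Sigma$. For the factorisation property \eqref{eq:cor-factorisation} you correctly locate the substance in the symmetric special Frobenius structure of $A$, which controls how the cut defect decomposes into a sum over intermediate insertions $(i,j,\lambda)$, and you defer the detailed verification to \cite{FRS5}, exactly as the paper does.
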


The result is even stronger in that
it is shown
in \cite{Fjelstad:2006aw} that Morita equivalent symmetric special Frobenius algebras in $\catname C$ lead to the same correlators and thus equivalent CFTs. This forms in fact a one-to-one correspondence between 2D\,RCFTs and pairs $(\catname C, [A])$ of an MFC $\catname C$ and a Morita class of a symmetric special Frobenius algebra $A$ in $\catname C$. 

\subsection{Mapping class group averages}\label{subsec:MCGaverage}

We now define the mapping class group averages, which should be thought as candidates for quantum gravity correlators (discussed in Section~\ref{sec:physics-review}). 

For a given MFC $\catname C$ we ask for a finiteness condition on its mapping class group representations. This will allow for a well-defined notion of mapping class group average. 
\begin{definition}\label{def:propF}
We say that an MFC $\catname{C}$ \textit{has property F with respect to a d-surface} $\Sigma$ if the associated mapping class group image $V^\catname{C}(\Mod^\mathcal{C}(\Sigma))$ is a finite group. 
If $\mathcal{C}$ has property F with respect to all d-surfaces, we say that $\catname{C}$ \textit{has property F}.
\end{definition}
\begin{remark}\label{rem:propF}
    \begin{enumerate}
        \item This terminology was originally used when studying a finiteness property for braided fusion categories \cite{NR} and their braid group representations, which we adopt and use for MFCs and mapping class groups. We may also specify to only a family of surfaces, for example with respect to surfaces without framed points. For more details see Section~\ref{subsubsec:propFgenus0}.
        \item It does not matter if we define property F with respect to the whole mapping class group $\Mod^\mathcal{C}$ or just the pure mapping class group $\PMod$,
        i.e.\ the 
        following conditions are equivalent: 
	\begin{enumerate}
		\item $V^\catname{C}(\Mod^\mathcal{C}(\Sigma))$ is finite.
		\item $V^\mathcal{C}(\PMod(\Sigma))$ is finite.
	\end{enumerate}
		a) $\Rightarrow$ b): The pure mapping class group is a subgroup and therefore $ V^\catname{C}(\PMod(\Sigma)) \subset V^\catname{C}(\Mod^\mathcal{C}(\Sigma))$. 

		b) $\Rightarrow$ a): The mapping class group $\Mod^\mathcal{C}(\Sigma)$ is an extension of $\PMod(\Sigma)$ by a subgroup $H$ of the symmetric group $S_n$, cf.\ \eqref{eq:ses-non-pure}, where $n$ is the number of framed points and $H$ is determined by allowed permutations of framed points. In particular $H$ is a finite group and $V^\mathcal{C}(\Mod^\catname{C}(\Sigma))$ is finite as an extension of a finite group by a finite group.
		
		Therefore, we will not distinguish between the two conditions and work with the pure mapping class group instead.  
	\item Let $T_1,\dots, T_n$ denote the Dehn twists around each framed point in $\PMod_{g,n}$. Such Dehn twists act on the state space by scalar multiplication of $\theta_l$ where $l$ is the label of the corresponding point, see \eqref{eq:MCG-gen-action}. Since MFCs have twist eigenvalues of finite order, the image of the subgroup $\langle T_1,\dots,T_n\rangle$ will always be finite. Hence, it is sufficient to check property F with respect to the remaining generators which are Dehn twists around the curves in Figure~\ref{fig:mcggenerators-w-insertions}, corresponding to the unframed mapping class group $\PMod_{g,n}^\mathrm{un}$.
    \end{enumerate}
\end{remark}

If an MFC has property F, and in particular mapping class group orbits are finite we give the following definition of a mapping class group average.
\begin{definition}\label{def:mcgaverage}
Let $\Sigma$ be a d-surface such that the representation image $G:=V^{\catname C}(\Mod(\Sigma))$ is finite in $\End(V^{\catname C}(\Sigma))$. Then, define the linear map 
\begin{equation}
\langle - \rangle_\Sigma: V^{\catname C}(\widehat \Sigma) \rightarrow V^{\catname C}(\widehat\Sigma),~ x \mapsto \langle x \rangle_{\Sigma} := \frac{1}{|G|}\sum_{g\in G}{g.x} \ ,
\end{equation}
where $\Mod(\Sigma)$ acts diagonally on $V^{\catname C}(\widehat\Sigma)$.
\end{definition}
The image of $\langle - \rangle_\Sigma$ is the subspace of $\Mod(\Sigma)$-invariants as in \eqref{eq:hat-Sigma-inv}, \ie 
\be 
\operatorname{Im}(\langle - \rangle_\Sigma) = V^\mathcal{C}(\widehat\Sigma)^{\Mod(\Sigma)}~.
\ee 
\begin{remark}\label{rem:mcg-avg-def}
\begin{enumerate}
	\item One can express the mapping class group average as an orbit sum 
	\be\label{eq:orbit-sum}
	 \langle x \rangle_\Sigma := \frac{1}{|G|}\sum_{g\in G}{g.x} = \frac{1}{|G.x|}\sum_{y\in G.x}{y} \ ,
	\ee
	where $G.x$ is the mapping class group orbit of $x$.
	In fact to define $\langle x \rangle_\Sigma$ it is sufficient to require finiteness of its orbit and not of the representation image. To go beyond the finiteness requirement itself requires a suitable invariant measure to integrate with. This fails for the mapping class group itself as it is non-amenable as pointed out in \cite{Jian:2019ubz}. However, it could be that the mapping class group image is amenable, even if it fails to be finite, but we are not aware of results in this direction.
	\item 
	For the torus $\Sigma = T^2$ the finiteness condition is always satisfied. This follows from a result of \cite{Ng} showing that the kernel of the associated mapping class group representation is a congruence group and thus a finite index group.
 
	\item The semisimplicity of the MFC $\catname{C}$ proves to play an important role in the finiteness condition. For non-semisimple modular tensor categories mapping class group representations generalising the semisimple case were constructed in \cite{Lyub} and extended to 3D\,TQFTs in \cite{DeRenziGGPR,DeRenziGGPR2}.
    In non-semisimple modular tensor categories, the ribbon-twist can have infinite order, and it can happen that Dehn twists along any essential closed curve have infinite order, see \cite[Prop.~5.1]{DeRenziGGPR2} for an example. This makes it harder to satisfy Property F, and in fact we are not aware of a non-semisimple modular tensor category with property F.
\end{enumerate} 
\end{remark}

\subsection{Main correspondence}\label{subsec:correspondence}

The state space of the double $\widehat\Sigma$ can be decomposed using tensoriality of the modular functor, \ie 
\be\label{eq:mf-tensoriality}
V^{\catname C}(\widehat\Sigma) ~\cong~ V^{\catname C}(\Sigma)\otimes V^{\catname C}(-\Sigma) \ .
\ee
Naturality of these isomorphisms implies in particular that they are $\Mod(\Sigma)$-intertwiners. 

By abuse of notation let $\pairing_\Sigma: V^{\catname C}(\widehat \Sigma)\rightarrow \mathbbm{k}$ denote the pullback of the pairing \eqref{eq:selfdualpairing}. If in addition to the finiteness property $\catname C$ obeys an irreducibility property, we get a relation between conformal correlators and mapping class group averages as stated in the following theorem. 
\begin{theorem}\label{thm:mcgavg-cor}
Let $\Sigma$ be a d-surface such that the projective representation $V^{\catname C}(\Sigma)$ is irreducible, $V^{\catname C}(\Mod(\Sigma)) \subset \End(V^{\catname C}(\Sigma))$ is finite and let $x\in V^{\catname C}(\widehat\Sigma)$ be an element such that $\pairing_{\Sigma}(x)\neq 0$. In addition, suppose that $A$ is a symmetric special Frobenius algebra in $\catname C$. Then, there exists $\lambda_\Sigma\in \mathbbm{k}$ such that 
\begin{equation}\label{eq:mcgavg-cor}
	 \operatorname{Cor}^{\catname C}_A(\Sigma) = \lambda_\Sigma\, \langle x \rangle_\Sigma~.
\end{equation}
\end{theorem}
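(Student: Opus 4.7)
The plan is to show that both $\operatorname{Cor}^{\catname C}_A(\Sigma)$ and $\langle x \rangle_\Sigma$ lie in the subspace $V^\mathcal{C}(\widehat\Sigma)^{\Mod(\Sigma)}$ of diagonal $\Mod(\Sigma)$-invariants, to prove that this subspace is one-dimensional under the irreducibility hypothesis, and to verify that $\langle x \rangle_\Sigma$ is nonzero. The proportionality in \eqref{eq:mcgavg-cor} is then immediate, with $\lambda_\Sigma$ determined by any linear functional that does not annihilate $\langle x \rangle_\Sigma$.

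For the main step, combining the tensoriality isomorphism $V^{\catname C}(\widehat\Sigma) \cong V^{\catname C}(\Sigma) \otimes V^{\catname C}(-\Sigma)$ of \eqref{eq:mf-tensoriality} with the non-degenerate $\Mod(\Sigma)$-invariant pairing $\pairing_\Sigma$ of \eqref{eq:selfdualpairing}--\eqref{eq:pairing-inv}, I would identify $V^{\catname C}(-\Sigma) \cong V^{\catname C}(\Sigma)^\ast$ as $\Mod(\Sigma)$-representations, giving
\begin{equation*}
V^{\catname C}(\widehat\Sigma) \;\cong\; V^{\catname C}(\Sigma) \otimes V^{\catname C}(\Sigma)^\ast \;\cong\; \End\bigl(V^{\catname C}(\Sigma)\bigr),
\end{equation*}
where the diagonal $\Mod(\Sigma)$-action on the left corresponds to the conjugation action on the right. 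The crucial observation is that the projective cocycle governing $V^{\catname C}(\Sigma)$ cancels in conjugation, so the right hand side carries a genuine (non-projective) action of $\Mod(\Sigma)$. By Schur's lemma applied to the irreducible projective representation $V^{\catname C}(\Sigma)$, the operators commuting with the image $V^{\catname C}(\Mod(\Sigma))$ are exactly scalar multiples of the identity; hence $\dim V^\mathcal{C}(\widehat\Sigma)^{\Mod(\Sigma)} = 1$.

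To verify non-vanishing, note that $\Mod(\Sigma)$-invariance of $\pairing_\Sigma$ yields $\pairing_\Sigma(g.x) = \pairing_\Sigma(x)$ for every $g\in G := V^{\catname C}(\Mod(\Sigma))$, so $\pairing_\Sigma(\langle x\rangle_\Sigma) = \pairing_\Sigma(x) \neq 0$ and in particular $\langle x \rangle_\Sigma \neq 0$. Thus $\langle x \rangle_\Sigma$ spans the one-dimensional invariant subspace. The correlator $\operatorname{Cor}^{\catname C}_A(\Sigma)$ lies in the same subspace by modular invariance \eqref{eq:Cor-mod-inv}, and must therefore equal some scalar multiple $\lambda_\Sigma \langle x \rangle_\Sigma$, as claimed.

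The only delicate point is the justification of the projective Schur step: one must confirm that the conjugation action on $\End V^{\catname C}(\Sigma)$ factors through an honest group homomorphism $\Mod(\Sigma)\to \Aut(\End V^{\catname C}(\Sigma))$ and that projective irreducibility is enough to conclude that its commutant is $\mathbbm{k}\cdot\id{}$. Once that is done, the rest is formal. Note that finiteness of $V^{\catname C}(\Mod(\Sigma))$ is used only to guarantee that $\langle - \rangle_\Sigma$ makes sense as in Definition~\ref{def:mcgaverage}; the dimension count and the non-vanishing argument above would work under the weaker assumption of a finite $\Mod(\Sigma)$-orbit of $x$, cf.\ \eqref{eq:orbit-sum}.
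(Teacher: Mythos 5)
Your proposal is correct and follows essentially the same route as the paper: the paper's Lemma~\ref{lem:mcg-invariant-subsp} uses exactly the identification $V^{\catname C}(\widehat\Sigma)\cong V^{\catname C}(\Sigma)\otimes V^{\catname C}(\Sigma)^\ast\cong \End(V^{\catname C}(\Sigma))$ via the invariant pairing together with Schur's lemma to get one-dimensionality of the invariants, and the theorem's proof then concludes with the same non-vanishing computation $\pairing_\Sigma(\langle x\rangle_\Sigma)=\pairing_\Sigma(x)\neq 0$. Your remarks on the cancellation of the projective cocycle under conjugation and on needing only a finite orbit of $x$ also match observations made in the paper.
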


The proof of the above theorem makes use of the following lemma: 

\begin{lemma}\label{lem:mcg-invariant-subsp}
	Let $\Sigma$ be a d-surface such that the projective representation $V^\mathcal{C}(\Sigma)$ is irreducible. The space of mapping class group invariants on the double $\widehat\Sigma$ is one-dimensional, \ie 
	\be
	V^\mathcal{C}(\widehat\Sigma)^{\Mod(\Sigma)} ~\cong~ \mathbbm{k}~.
	\ee 
\end{lemma}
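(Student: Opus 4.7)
The plan is to reduce the computation of $V^\mathcal{C}(\widehat\Sigma)^{\Mod(\Sigma)}$ to the computation of self-intertwiners of the irreducible projective representation $V^\mathcal{C}(\Sigma)$ and then invoke Schur's lemma.

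First, I would use the tensoriality isomorphism \eqref{eq:mf-tensoriality}
\begin{equation*}
    V^\mathcal{C}(\widehat\Sigma) ~\cong~ V^\mathcal{C}(\Sigma)\otimes V^\mathcal{C}(-\Sigma) \ ,
\end{equation*}
which, by naturality of the modular functor, is an intertwiner of $\Mod(\Sigma)$-representations when the target carries the diagonal action. Next, using the non-degenerate pairing \eqref{eq:selfdualpairing} and its $\Mod(\Sigma)$-invariance \eqref{eq:pairing-inv}, I would identify $V^\mathcal{C}(-\Sigma)\cong V^\mathcal{C}(\Sigma)^\ast$ as $\Mod(\Sigma)$-representations. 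Combining these two steps yields a $\Mod(\Sigma)$-equivariant isomorphism
\begin{equation*}
    V^\mathcal{C}(\widehat\Sigma) ~\cong~ V^\mathcal{C}(\Sigma) \otimes V^\mathcal{C}(\Sigma)^\ast ~\cong~ \End(V^\mathcal{C}(\Sigma)) \ ,
\end{equation*}
where $\Mod(\Sigma)$ acts on the right-hand side by conjugation $\phi\mapsto \rho(f)\,\phi\, \rho(f)^{-1}$ for $f\in\Mod(\Sigma)$ with $\rho$ the (projective) representation on $V^\mathcal{C}(\Sigma)$.

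The key point is that although $\rho$ is only projective, the scalar ambiguity $\rho(f)\sim \zeta\,\rho(f)$ cancels in the conjugation action, so the induced action on $\End(V^\mathcal{C}(\Sigma))$ is a genuine linear representation. Consequently, the subspace of diagonal invariants on $V^\mathcal{C}(\widehat\Sigma)$ corresponds exactly to the space of $\Mod(\Sigma)$-intertwiners $V^\mathcal{C}(\Sigma)\to V^\mathcal{C}(\Sigma)$ (of the underlying projective representation).

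Finally, since $\mathbbm{k}$ is algebraically closed and $V^\mathcal{C}(\Sigma)$ is irreducible as a projective representation, Schur's lemma for projective representations gives that every self-intertwiner is a scalar multiple of the identity, so this space is one-dimensional. The main subtlety to be careful about is precisely this passage through a projective representation, but as noted it is harmless because the conjugation action is genuine; otherwise the argument is a direct composition of the tensoriality of the modular functor, duality via the pairing, and Schur's lemma.
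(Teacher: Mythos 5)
Your proposal is correct and follows essentially the same route as the paper's proof: identify $V^\mathcal{C}(-\Sigma)\cong V^\mathcal{C}(\Sigma)^\ast$ via the invariant pairing, combine with tensoriality to get $V^\mathcal{C}(\widehat\Sigma)\cong \End(V^\mathcal{C}(\Sigma))$ with the conjugation action, and apply Schur's lemma. Your explicit remark that the projective scalars cancel under conjugation is a point the paper makes separately in Section~\ref{sec:RT} rather than inside the proof, but the argument is the same.
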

\begin{proof}
	The non-degeneracy of the pairing $\pairing_\Sigma: V^\catname{C}(\Sigma)\otimes V^\catname{C}(-\Sigma)\rightarrow \mathbbm{k}$ gives an isomorphism $\psi:V^\catname{C}(-\Sigma)\xrightarrow{\sim} V^\mathcal{C}(\Sigma)^\ast,~y\mapsto \pairing_\Sigma(-,y)$. This is a $\Mod(\Sigma)$-intertwiner (with respect to the dual action on $V^\mathcal{C}(\Sigma)^\ast$). Indeed, for any $y\in V^\mathcal{C}(-\Sigma)$ and $f\in \Mod(\Sigma)$: 
	\[
	\psi(f.y) := \pairing_\Sigma(-,f.y) \overset{\eqref{eq:pairing-inv}}{=} \pairing_\Sigma(f^{-1}.-,y) = f.\psi(y)  \ .
	\]
	Furthermore, there is an obvious $\Mod(\Sigma)$-isomorphism $V^\mathcal{C}(\Sigma)\otimes V^\mathcal{C}(\Sigma)^\ast \cong \End(V^\mathcal{C}(\Sigma))$, which intertwines the diagonal action on the left with the conjugation action on the right. Combining this with \eqref{eq:mf-tensoriality} we obtain a $\Mod(\Sigma)$-isomorphism 
	\[V^\catname{C}(\widehat\Sigma) ~\cong~ \End(V^\mathcal{C}(\Sigma))~.\]
	In particular, they have isomorphic spaces of $\Mod(\Sigma)$-invariants. The space of $\Mod(\Sigma)$-invariants in $\End(V^\mathcal{C}(\Sigma))$ coincides with the subspace $\End_{\Mod(\Sigma)}(V^\mathcal{C}(\Sigma))$ of $\Mod(\Sigma)$-intertwiners. By Schur's Lemma, irreducibility of $V^\mathcal{C}(\Sigma)$ implies $\End_{\Mod(\Sigma)}(V^\mathcal{C}(\Sigma))\cong \mathbbm{k}$ which concludes the proof. 
\end{proof}

\begin{proof}[Proof of Theorem~\ref{thm:mcgavg-cor}]
Irreducibility of $V^{\catname C}(\Sigma)$ implies by Lemma~\ref{lem:mcg-invariant-subsp} that the space of modular invariants $V^{\catname C}(\widehat\Sigma)^{\Mod(\Sigma)}$ is 1-dimensional.
The correlators $\operatorname{Cor}^{\catname C}_A(\Sigma)$ are mapping class group invariant \eqref{eq:Cor-mod-inv} and so is the mapping class group average $\langle x \rangle_\Sigma$ by definition and, therefore, they both lie in the same 1-dimensional space. For the statement to hold, we only need to show that $\langle x \rangle_\Sigma$ is non-zero. To show that this is the case, it is sufficient to check that $\pairing_\Sigma(\langle x\rangle_\Sigma)\neq 0$. This follows from the invariance property \eqref{eq:pairing-inv} and linearity of $\pairing_\Sigma$, namely 
\be
\pairing_\Sigma(\langle x\rangle_\Sigma) = \frac{1}{|G|} \sum_{g\in G}{\pairing_\Sigma(g.x)} = \frac{1}{|G|}\sum_{g\in G}{\pairing_\Sigma (x)} = \pairing_\Sigma(x) \neq 0~.  
\ee
\end{proof}

The element $x$ in Theorem~\ref{thm:mcgavg-cor} is referred to as the \textit{seed}. 
Consider a handlebody $H_\Sigma$ with boundary $\Sigma$ without framed points, seen as a bordism $H_\Sigma:\emptyset \rightarrow \Sigma$. The RT TQFT evaluated on its double 
offers a natural choice of a seed
\begin{equation}\label{eq:vacuum-handlebody}
	x_\mathrm{vac} := \mathcal{Z}^\mathcal{C}(H_\Sigma\sqcup -H_\Sigma)
\end{equation}
called \textit{vacuum seed}. 
Notice that $x_\mathrm{vac}$ satisfies $\pairing_\Sigma(x_\mathrm{vac}) \neq 0$ which follows directly from Example~\ref{ex:handlebody-pairing} (where $x_\mathrm{vac} = x \otimes \hat{x}$). 
As we will see in Section~\ref{sec:physics-review} $x_\mathrm{vac}$ will be the vacuum contribution to the gravity path integral. For the torus $\Sigma = T^2$ the element $\mathcal{Z}^\mathcal{C}(H_{T^2})$ corresponds to the vacuum character $\chi_0(\tau)$ \cite[Eq.~(5.15),(5.16)]{FRS1} ($\tau$ is the conformal parameter) and $x_\mathrm{vac}$ to $|\chi_0(\tau)|^2$.

The correspondence \eqref{eq:mcgavg-cor} has been established for the vacuum seed $x_\mathrm{vac}$ and the Ising CFT on the torus in \cite{Castro:2011zq} and then extended to partition functions on higher genus surfaces in \cite{Jian:2019ubz}. 
Theorem~\ref{thm:mcgavg-cor} extends these results to arbitrary RCFTs satisfying irreducibility and property F, and to more general seeds $x$ (i.e.\ all $x$ s.th.\ $\pairing_{\Sigma}(x)\neq 0$). In Lemma~\ref{lem:tangle-pairing} we described a large class of possible seeds $x$ in terms of tangles. 

\begin{remark}\label{rem:thm-mcg-avg-cor}
    Theorem \ref{thm:mcgavg-cor} gives a correspondence between conformal correlators and mapping class group averages for a fixed choice of d-surface $\Sigma$. Suppose now that the MFC $\catname C$ satisfies the assumptions in Theorem \ref{thm:mcgavg-cor} for all d-surfaces. Then we can obtain any conformal correlator via a mapping class group average (up to a non-zero scalar). This is less surprising in view of the main result in \cite{RR}:
    \begin{quote}
    Let $\catname C$ be an MFC such that the mapping class group representations $V^{\catname C}(\Sigma)$ are irreducible for any surface $\Sigma$ (with no insertions). Then there exists a unique Morita class of simple symmetric special Frobenius algebras, namely that of the trivial algebra $\unit$.
    \end{quote}
The above statement implies in particular that for such $\catname C$ there is a unique (up to equivalence) consistent collection of correlators $\operatorname{Cor}^{\catname C}(\Sigma)$, and it is these correlators which are produced by the mapping class group averages.
\end{remark}

\section{Irreducibility and property F}
\label{sec:IrredPropF}
In this section we discuss theories with the irreducibility 
    and
finiteness property used in the hypothesis of the main correspondence Theorem~\ref{thm:mcgavg-cor}. We review in Section~\ref{subsec:irred-propF-review} some known examples and in Section~\ref{subsec:irred-Ising} 
we show that Ising categories satisfy both properties, extending results from \cite{Jian:2019ubz} to surfaces with framed points. 

\subsection{Irreducibility and property F examples}\label{subsec:irred-propF-review}

    \subsubsection{Irreducibility on surfaces with and without framed points}\label{sec:irred-points-or-not}

A trivial example which has irreducible representations and property F with respect to any d-surface $\Sigma $ is $\Vect$. There are further trivial examples of theories with irreducible representations when restricting to specific surfaces. For instance, any MFC $\catname{C}$ has an irreducible representation on the sphere, as $V^\mathcal{C}(\mathbb{S}^2)\cong \mathbbm{k}$.
However, we are interested in examples which satisfy the irreducibility property for a large family of surfaces.

\medskip
 
The only non-trivial examples with irreducible representations $V^{\catname C}(\Sigma)$'s for all surfaces $\Sigma$ 
we are aware of 
are Ising-type categories and the MFC $\catname C(\mathfrak{sl}_2,k)$ associated to the affine Lie algebra $\widehat{\mathfrak{sl}}_2$ at certain levels $k \in \mathbb{Z}_{>0}$. 
Let us list these examples, as well as some non-examples. 
\begin{enumerate}
	\item 
	It is shown in \cite{Roberts:1999} that for $\catname C = \catname C(\mathfrak{sl}_2,k)$ and $r = k+2$ prime,
	all projective representations $V^{\catname C}(\Sigma_{g,0})$, $g \ge 0$ are irreducible.
	
	Most of the remaining cases can be excluded already by looking at $g=1$. Namely by \cite[App.\,A]{Gepner:1986hr} and \cite[Prop.\,1]{Cappelli:1987xt}, invariants in the representation $V^{\doublecat}_{g=1}$ are obtained from divisors $d$ of $r$, with divisors $d$ and $r/d$ describing the same invariant subspace, and where $d$ with $d^2=r$ is excluded.
	Thus, when $r \ge 3$ is not a prime or a square of a prime,
	the space of invariants satisfies
	$\dim (V^{\doublecat}_{1})^{\Mod_{1}}>1$, and so by Lemma~\ref{lem:mcg-invariant-subsp}, $V^{\catname C}_{1}$ is not irreducible. 
	
	On the other hand, for $k=2$ ($r = 4$), one obtains a category of Ising-type, for which all $V^{\catname C}_{g}$ are irreducible, see point 2. 
	Some results on the irreducibility of $V^{\catname C}_{g \ge 2}$ for the remaining cases of $r = p^2$ with $p>2$ prime can be found in \cite{Korinman} but this remains an open problem and we hope to return to this in the future.
	
	\item 
	The Ising model without framed points is studied in \cite{Castro:2011zq,Jian:2019ubz}. Irreducibility of all $V^{\catname C}_g$, $g \ge 0$ is shown in \cite[Sec.\,4.3]{Jian:2019ubz}. In the next section we will extend this result to all 16 Ising-type MFCs and to surfaces with framed points. 
	
	\item 
	Let $\catname C = \catname C(\mathfrak{sl}_N,k)$ be the MFC for the affine Lie algebra $\widehat{\mathfrak{sl}}_N$ at any
	level $k \in \mathbb{Z}_{>0}$, for $N \ge 3$. It is shown in \cite[Thm.\,3.6]{Andersen:2009} that the $V^{\catname C}_g$ are reducible for each $g \ge 1$. 
	
	\item 
	For $\catname C(\mathfrak{sl}_2,k)$, irreducibility has also been studied for the mapping class group of surfaces with framed points in \cite{KoSa} where it is shown with respect to surfaces which contain at least one framed point labelled by the fundamental label 1. Irreducibility for surfaces with boundary has been studied in \cite{KuMi}. For Ising-type MFCs, irreducibility in the presence of framed points will be shown in the next section.
\end{enumerate}

\subsubsection{Property F on higher genus surfaces}

Similar to irreducibility, there are trivial examples where property F is present. For example, according to the Remark~\ref{rem:propF}, if the unframed pure mapping class group $\PMod_{g,n}^\mathrm{un}$ is trivial (i.e.\ for $(g,n)=(0,n)$ with $n\leq 3$) then $\PMod_{g,n}$ is fully generated by Dehn twist around its framed points and its representation image is finite. 

A non-trivial result of \cite{Ng}, briefly mentioned in Remark~\ref{rem:thm-mcg-avg-cor}, guarantees the presence of property F on the torus $\Sigma = T^2$ for any MFC $\catname{C}$. However, we are also interested in examples on a larger family of surfaces, particularly of higher genus. Such examples include:

\begin{enumerate}
	\item The MFC $\catname{C}=\operatorname{Rep}(D^\omega G)$ of representations of the twisted Drinfeld double of a finite group $G$ is shown in \cite{Gustaf} to have property F with respect to any d-surface (any genus and number of framed points). 
	\item  In \cite{Jian:2019ubz} it is shown that the MFC associated to the Ising CFT has property F with respect to surfaces without framed points. In the next section we extend this to all 16 Ising MFCs and all d-surfaces. 
\end{enumerate}

\subsubsection{Property F on genus $\boldsymbol{0}$ surfaces}
\label{subsubsec:propFgenus0}

As mentioned before, the term property F is borrowed from \cite{NR} where such a finiteness property is studied with respect to braid group representations. 

Let $\catname{B}$ be a braided fusion category. For an object $X \in \catname{B}$, the endomorphism space $\End(X^{\otimes n})$ carries a natural action of the braid group by mapping a braid generator in $B_n$ to the corresponding braiding of the strands labelled by $X$. The braided fusion category is said to have property $\mathrm{F}_\mathrm{braid}$ with respect to $X$ if the braid group image is finite for all $n$. If this is the case for every $X$ and any $n\in \mathbb{N}$, then $\mathcal{B}$ is said to have property $\mathrm{F}_\mathrm{braid}$ as a braided fusion category. 

\begin{remark}
	Since braid groups can be seen as mapping class groups as in Example~\ref{ex:braid-group}, we may compare the above mentioned notion of finiteness with that of the mapping class group. 
	
	Let $\catname{C}$ be a MFC and $X$ an object in $\catname{C}$. Let $\mathbb{S}^2(X,\dots, X,i)$ be the sphere with the first $n$ points labelled by $X$ and the $(n+1)$'th point labelled by $i\in I$. By semisimplicity, property $\mathrm{F}_\mathrm{braid}$ with respect to $X$ (as a braided fusion category) is equivalent to property F being valid for $\mathbb{S}^2(X,\dots, X,i)$ for all choices $i\in I$. 
We are not aware of a counterexample MFC which satisfies property $\mathrm{F}_\mathrm{braid}$ but not property F.   
\end{remark}
   
Before discussing results on property $\mathrm{F}_\mathrm{braid}$, we recall the following notions.   
   
A fusion category $\mathcal{A}$ is \textit{weakly integral} if its Frobenius-Perron dimension\footnote{
The Frobenius-Perron dimensions are determined by the unique character on the fusion ring of $\mathcal{A}$ such that $\operatorname{FPdim}(i) > 0 $ for all $i\in I$, see \cite[Ch.\ 3.3]{EGNO} for more details. 
}
is an integer, \ie $\mathrm{FPdim}(\catname{A})\in \mathbb{N}$. This is equivalent to the condition that $\operatorname{FPdim}(X)^2 \in \mathbb{N}$ for all simple objects $X$ \cite[Prop.~8.27]{ENO}. If $\operatorname{FPdim}(X)\in \mathbb{N}$ for all simple objects $X$, then $\mathcal{A}$ is \textit{integral}.
Ising categories are examples of weakly integral (but non-integral) fusion categories, since $\operatorname{FPdim}(\sigma) = \sqrt{2}$ for the spin object $\sigma \in I$. Ising categories will be introduced in the next section in detail. 

A fusion category is \textit{group-theoretical} if it is Morita equivalent to a pointed fusion category, i.e.\ equivalent to $\Vect_G^\omega$ for some group $G$. Equivalently, a group-theoretical fusion category $\mathcal{A}$ has Drinfeld centre $\mathcal{Z}(\mathcal{A})\simeq \operatorname{Rep}(D^\omega G)$. Every group-theoretical fusion category is integral, but the converse does not hold \cite{Nikshych08}.

There is a further notion of \textit{weakly group-theoretical} fusion categories, which by definition means that they are Morita equivalent to
nilpotent\footnote{A fusion category $\mathcal{N}$ is \textit{nilpotent} if the sequence $\mathcal{N}\supset \mathcal{N}_\mathrm{ad} \supset (\mathcal{N}_\mathrm{ad})_\mathrm{ad}\supset \dots$, constructed by taking adjoint subcategories, converges to $\Vect$. 
} 
fusion categories. Weakly group-theoretical fusion categories are weakly integral \cite{ENO2}. The converse statement is an open question. 

\medskip

Regarding property $\mathrm{F}_\mathrm{braid}$, it is conjectured by the authors in \cite[Sec.\,1]{NR} that
a braided fusion category $\mathcal{B}$ has property $\mathrm{F}_\mathrm{braid}$ if and only if it is weakly integral.
This conjecture has been verified partially in one direction by showing that any weakly group-theoretical braided fusion category has property $\mathrm{F}_\mathrm{braid}$ \cite{GrN}. 
This extends an earlier result in \cite{ERW} that group-theoretical braided fusion categories have property $\mathrm{F}_\mathrm{braid}$ as well as some weakly group-theoretical examples \cite{RW,GRR}. 

\begin{remark}
	\begin{enumerate}
	\item
	One motivation for studying braid group representations and particularly their finiteness or lack thereof is due to topological quantum computing. To have universal quantum computation one is interested in dense (thus infinite) braid group images. A famous example is that of the Fibonacci category $\operatorname{Fib}$. There are many non weakly integral examples coming from $\catname{ C}(\mathfrak{g},k)$. For instance, $\catname{C}(\mathfrak{sl}_2,k)$ is weakly integral only if  $k\in \{2,3,4,6\}$ \cite{NR}. If the conjecture in \cite[Sec.\,1]{NR} holds, this also describes all cases where $\catname{C}(\mathfrak{sl}_2,k)$ has property F.

 \item From part 1 one can observe that there are many levels which admit irreducible mapping class group representations but not $\mathrm{F}_\mathrm{braid}$, \eg for all $k\geq 5$ with $k+2$ prime. Conversely, property F does not imply irreducibility (\eg levels $k=4,6$).
 \end{enumerate}
\end{remark}

\subsection{Irreducibility and property F of Ising categories}\label{subsec:irred-Ising}
In this section, we will review Ising-type MFCs and compute their associated mapping class group representations explicitly. Thereafter, we will present a proof that such categories satisfy property F and the irreducibility property, thus extending the result of \cite{Jian:2019ubz} to surfaces with framed
points and all 16 Ising-type MFCs.

\subsubsection{Ising categories}
\label{subsec:isingcats}
We introduce Ising categories following \cite[App.\,B]{DGNO} and compute their mapping class group action explicitly following the previous section. Up to equivalence there are 16 Ising modular fusion categories. This family is parameterised by an 8'th root of $-1$, which will be denoted by $\zeta$, and a sign $\nu \in \{\pm\}$. The first determines the braided structure of the category, whereas the second determines the spherical structure. 
We will also need
\be \lambda := \zeta^2 + \zeta^{-2} \ ,
\ee
which satisfies $\lambda^2 = 2$. 

The Ising category $\catname{C}(\zeta, \nu)$ has three simple objects $\unit, \varepsilon, \sigma$ and the fusion rules are given in the following table:
\begin{equation}
\begin{tabular}{c|c c c}        $\otimes$ & $\unit$ & $\varepsilon$ & $\sigma$ \\ \hline
         $\unit$ & $\unit$ & $\varepsilon$ & $\sigma$ \\
         $\varepsilon$ & $\varepsilon$ & $\unit$ & $\sigma$ \\
         $\sigma$ & $\sigma$ & $\sigma$ & $\unit \oplus \varepsilon$
    \end{tabular}
\end{equation}    
In particular, there are no multiplicities in the fusion basis, i.e.\ $N_{ij}^k \in \{0,1\}$. This simplifies the notation of $R$,$F$,$B$-matrices and the associated fusion graphs. Recall that $F$-matrices encode associativity of the monoidal structure, $R$-matrices econde the braided structure and $B$-matrices are a combination of both. For a detailed introduction of $R$,$F$,$B$-matrices see Appendix~\ref{app:rfb-matrices}. 
The dimensions are $d_\unit = d_{\varepsilon} = 1 $, $d_\sigma = \nu \lambda $, and the twist values are $\theta_\unit = 1$, $\theta_\varepsilon = -1$ and $\theta_\sigma  = \nu \zeta^{-1} $. 

The braidings in terms of $R$-matrices are given by: 
\be
R^{(\varepsilon \varepsilon)\unit} = -1
~,\quad  
R^{(\varepsilon \sigma)\sigma}= R^{(\sigma \varepsilon)\sigma} = \zeta^4
~, \quad 
R^{(\sigma \sigma)\unit} = \zeta
~,\quad 
R^{(\sigma \sigma)\varepsilon} = \zeta^{-3} \ .
\ee
The (non-trivial) F-matrices are given as: 
\begin{align}
    F^{(\varepsilon \sigma \varepsilon) \sigma} &= F^{(\sigma \varepsilon \sigma) \varepsilon} = -1 \ ,
    \nonumber
\\
F^{(\sigma \sigma \sigma) \sigma} &= \begin{pmatrix} F^{(\sigma \sigma \sigma) \sigma}_{\unit \unit} & F^{(\sigma \sigma \sigma) \sigma}_{\unit \varepsilon} \\ F^{(\sigma \sigma \sigma) \sigma}_{\varepsilon \unit} & F^{(\sigma \sigma \sigma) \sigma}_{\sigma\sigma} \end{pmatrix}= \frac{1}{\lambda} \begin{pmatrix} 1 & 1 \\ 1 & -1 \end{pmatrix}
~.
\end{align}
The $S$-matrix is
\be\label{eq:IsingSmatrix}
S =\frac{1}{2} \begin{pmatrix} 1& 1& \nu \lambda\\
1 & 1 & -\nu \lambda \\
\nu\lambda & -\nu\lambda & 0 \end{pmatrix}
~~,
\ee
where the order of the basis is $\{\unit, \varepsilon,\sigma\}$. 

Notice that the F-matrices are self-inverse, \ie $F^{(ijk)l} =G^{(ijk)l}$. 
It will be convenient to use the following notation: $\{\varepsilon_a \}_{a\in \mathbbm{Z}_2} $, where $\varepsilon_{0}= \unit $ and $\varepsilon_{1} = \varepsilon$. With this notation we have for example: 
$R^{(\sigma \sigma)\varepsilon_a} = \zeta^{1-4a}$.

Using the above data one can also compute the B-matrices according to \eqref{eq:B-matrix-explicit} and \eqref{eq:B-matrix-rev-explicit}. When $\Hom(l,i\otimes j \otimes k)$ is 1-dimensional (in other words when at least one of the labels is not $\sigma$) we have the following non-trivial B-matrices: 
\begin{align}\label{eq:Ising-B-matrix-1}
	B^{\pm(\varepsilon_a \varepsilon \varepsilon)\varepsilon_a} &= (-1)^a ~,
 \nonumber\\
        B^{\pm(\varepsilon_a \sigma \varepsilon_b)\sigma} &= B^{\pm(\sigma \varepsilon_b \sigma)\varepsilon_a} = B^{\pm(\sigma \sigma \varepsilon_b)\varepsilon_a} = B^{\pm(\varepsilon_a \varepsilon_b \sigma)\sigma} = (-1)^{ab} \zeta^{\pm4b} ~,
        \nonumber\\
	B^{\pm(\varepsilon_a \sigma \sigma)\varepsilon_b} &= \zeta^{\pm(1- 4[a+b])}~.
\end{align} 
Here, $[a+b] \in \{0,1\}$ is defined to be equal to $a+b \mod 2$. 
When all labels $i,j,k$ and $l$ are $\sigma$-labels the corresponding $\Hom$-space is 2-dimensional. By inserting R- and F-matrices into \eqref{eq:B-matrix-explicit} and \eqref{eq:B-matrix-rev-explicit} one obtains the following $2\times2$-matrices: 
\begin{equation}\label{eq:Ising-B-matrix-2}
	B^{\pm(\sigma \sigma \sigma)\sigma} = \frac{\zeta^{\pm1}}{\lambda}
 \begin{pmatrix}
		\zeta^{\mp2} & \zeta^{\pm2} \\
		\zeta^{\pm2} & \zeta^{\mp2}
	\end{pmatrix}~.
\end{equation}
\begin{remark}
	Half of the 16 Ising categories are unitary and the other half are non-unitary dictated by the positivity respectively negativity of $d_\sigma$.	
	The Ising category associated to the Ising CFT as well as the $\catname{C}(\mathfrak{sl}_2,k)$ at level $k=2$ are among the unitary Ising categories.
\end{remark}

\subsubsection{Mappling class group action of Ising categories}
\label{subsec:ising-mcg-action}

In this section, we will explicitly describe how the pure mapping group acts on the associated state spaces in the case of Ising categories. Therefore, we fix an Ising category $\catname{C} = \catname{C}(\zeta,\nu)$ as introduced in 
Section~\ref{subsec:isingcats}. 

Let $\Sigma$ be a surface of genus $g$ and $n$ framed points labelled by simple objects $l_1,\dots, l_n$, \ie the underlying vector space of the state space \eqref{eq:V_g,n-def} is
\begin{equation}\label{eq:Ising-V_gn}
 V_{g,n} = \catname C(\unit, l_1\otimes \dots l_n \otimes L^{\otimes g}) ~.   
\end{equation}
We now describe a basis of this space that we will use to give the action of the generators as listed in \eqref{eq:MCG-gen-action}.
Consider the oriented graph  
\vspace{2ex}
\be \label{eq:basisgraph}
\Gamma :=
    \boxpic{1}{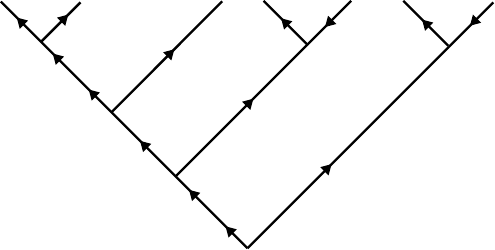}{
        \put (-3,51) {$L_1$}
        \put (15,51) {$L_2$}
        \put (42,51) {$L_n$}
        \put (52,51) {$J_1$}
        \put (70,51) {$J'_1$}
        \put (81,51) {$J_g$}
        \put (99,51) {$J'_g$}
        \put (6,35) {$D_3$}
        \put (52,25) {$K_1$}
        \put (67,12) {$K_g$}
        \put (22,15) {$D_{n+1}$}
        \put (40,-1) {$A_g$}
        \put (58,19) {$\ddots$}
        \put (15,25) {$\ddots$}
        \put (26,51) {$\dots$}
    }
\vspace{2ex}
\ee
where $L_1,\dots, L_n, D_3,\dots, D_{n+1}, K_1,\dots, K_{g}, A_2,\dots, A_{g}, J_1,\dots, J_{g}, J'_1, \dots , J'_{g}$ are the edges of the graph. This choice of notation will be convenient later. 

\medskip

A colouring is a map $\chi: E(\Gamma)\rightarrow I$, which assigns to every edge of the graph a label in $I$.
We say that a colouring is \textit{admissible} if at any trivalent vertex of the graph the associated fusion rule is non-zero. For instance, the colouring $\chi$ is admissible at the first vertex in \eqref{eq:basisgraph} if $N_{\chi(L_1)\chi(L_2)}^{\chi(D_3)}\neq 0$. 
Let $\colG$
denote the set of admissible colourings of the graph $\Gamma$. 
Moreover, we consider the subset $\colGo \subset \colG$ consisting of all colourings $\chi$ such that 
\begin{equation}
\chi(L_1)= l_1 ~,~ \dots ~,~ \chi(L_n)= l_n
\quad \text{and} \quad  \chi(J_k)=\chi(J'_k)~~,~~k= 1,\dots, g~.
\end{equation}
 Notice that for $\chi\in \colGo$ we have $\chi(K_m) \in \{\unit, \varepsilon\}$ for all $m=1, \dots, g$ as $\chi(K_m)= \sigma$ is not admissible due to the condition of $\chi(J_m)= \chi(J'_m)$ and the Ising fusion rules. 

In terms of this notation, for $\chi\in \colGo$ the graph $\Gamma$ coloured by $\chi$ represents a vector 
$\hat{\chi}~\in~V_{g,n}$,
with $V_{g,n}$ as in \eqref{eq:Ising-V_gn}, and where the embedding $\chi(J_k)\otimes \chi(J_k)^\ast \to L$ is implicit.
Altogether, the set 
\begin{equation}\label{eq:V_gn-basis}
\{ \hat\chi \}_{\chi \in \colGo}
\end{equation}
forms a basis of $V_{g,n}$.

We proceed by computing the mapping class group action on $V(\Sigma_{g,n})$ with respect to this basis using the equations in \eqref{eq:MCG-gen-action}. It is clear that 
\be
T_{\lambda_m} \hat{\chi} = \theta_{l_m}\, \hat{\chi}
\ee
and 
\be\label{eq:alpha-action}
T_{\alpha_m} \hat{\chi} = \theta_{\chi(J_m)}\,\hat{\chi}~.
\ee
To give the expression for the action of $S_m$ on a basis vector $\hat{\chi}$ we make the following distinction: 

For $\chi(J_m)= \varepsilon_p$, which by admissibility also implies $\chi(K_m) = \unit$, we compute
\be\label{eq:Saction-1}
S_m \hat\chi = \frac{1}{2}\left( \hat{\chi}|_{J_m=\unit} + \hat{\chi}|_{J_m=\varepsilon} + (-1)^p \nu\lambda\, \hat{\chi}|_{J_m=\sigma}\right) \ ,
\ee
where the slash notation indicates that for example $\hat{\chi}|_{J_m= \sigma}$ represents the vector with the same colouring $\chi$ everywhere up to the edge $J_m$, which is now labelled by $\sigma$. The factor $\frac{1}{2}$ appears as $D=2$
for Ising categories.

For $\chi(J_m)=\sigma$ and $\chi(K_m)=\unit$ we find
\be\label{eq:Saction-sigma}
S_m \hat{\chi} = \frac{\lambda}{2}\left( \hat{\chi}|_{J_m=\unit} - \hat{\chi}|_{J_m=\varepsilon}\right)~. 
\ee

 Equations \eqref{eq:Saction-1} and \eqref{eq:Saction-sigma} can be formed into one: 
\be 
S_m \hat\chi = \sum_{j\in I}{S_{\chi(J_m),j}\,\hat{\chi}|_{J_m = j}}
\ee
using the $S$-matrix $S$ from \eqref{eq:IsingSmatrix} where $\chi$ such that $\chi(K_m) = \unit$.

The last case is when $\chi(K_m)=\varepsilon$ and hence $\chi(J_m) = \sigma$. The result is 
\be\label{eq:Saction-sigma2}
S_m \hat\chi = \zeta^2 \,\hat\chi
\ee
which follows from the computation (see \cite[Eq.\ (6.16)]{Romaidis-Thesis})\vspace{1ex}
\begin{equation}\label{eq:sactioncomp}
\boxpic{1}{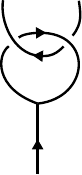}{
    \put (18,-13) {$k$}
\put (1,103) {$\sigma$}
\put (40,103) {$\sigma$}
\put (-3,40) {$\sigma$}
}~
	=~\delta_{k,\varepsilon}~\frac{2\zeta^2}{d_\sigma}~
	\boxpic{1}{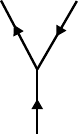}{
	 \put (20,-13) {$k$}
	\put (-1,103) {$\sigma$}
	\put (47,103) {$\sigma$}
}~.
	\vspace{2ex}
\end{equation}

The action of $T_{\gamma_m}$ is described by twisting with $\theta$ the tensor product of two neighbouring strands as seen in equation \eqref{eq:MCG-gen-action}. If the fusion is again a simple object, the action produces just the twist eigenvalue of this simple object. That is, let $\chi \in \colGo$ such that $\chi(J_m)\otimes \chi(J_{m+1})$ is simple. Then, we have 
\be\label{eq:Tgamma-nonsigma}
T_{\gamma_m} \hat\chi = \theta_{\chi(J_m)\otimes\chi(J_{m+1})}\,\hat{\chi}~. 
\ee
If $\chi(J_m)\otimes \chi(J_{m+1})$ is not simple, then we have $\chi(J_m)= \chi(J_{m+1}) = \sigma$ according to the Ising fusion rules. In this case, we find \cite[Eq.\ (6.20)]{Romaidis-Thesis}
\be\label{eq:Tgamma-sigma}
T_{\gamma_m} \hat\chi = \hat{\chi}|_{\{K_m,K_{m+1},A_{m+1}\}\otimes \varepsilon} \ ,
\ee 
where $\{K_m,K_{m+1},A_{m+1}\}\otimes \varepsilon$ indicates that we change the colouring of the edges $K_m, K_{m+1}$ and $A_{m+1}$ to $\chi(K_m)\otimes \varepsilon, \chi(K_{m+1})\otimes \varepsilon$ and $\chi(A_{m+1})\otimes \varepsilon$ respectively. 
 
To compute the action of $T_{\delta_m}$ we make use of the tensoriality property of the twist. 
For instance, the action of $T_{\delta_m}$ on $\hat{\chi}$ is the composition of pure braids in strands labelled by $\chi(L_{m+1}),\dots, \chi(L_n), \chi(J_1)$ and products of the twist eigenvalues of these labels. Therefore, we study the action of pure braids.  

We start by considering pure braids in $l_1,\dots, l_n$. Let
\vspace{2ex} 
\be
T_{ij}~ \hat{\chi}~=\quad \boxpic{1}{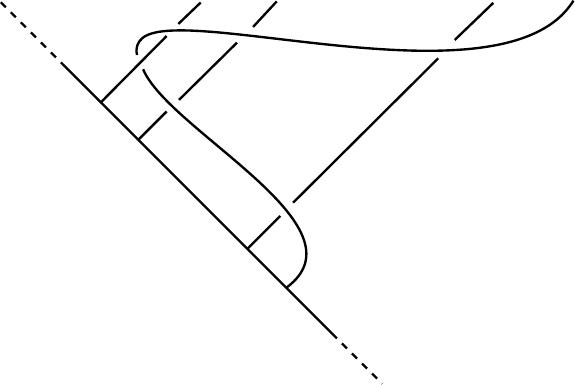}{
\put (3,48) {$\chi(D_i)$}
\put (8,40) {$\chi(D_{i+1})$}
\put (35,17) {$\chi(D_j)$}
\put (42,6) {$\chi(D_{j+1})$}
\put (34,68) {$l_i$}
\put (47,68) {$l_{i+1}$}
\put (82,68) {$l_{j-1}$}
\put (99,68) {$l_j$}
\put (25,27) {$\ddots$}
\put (63,68) {$\dots$}
}\ee
be the pure braid between the $l_i$- and $l_j$-coloured strands. 

If $l_i$ and $l_j$ have a unique fusion, \ie if $l_i\otimes l_j$ is a simple object, then $T_{ij}$ acts by a phase $\theta_{l_i\otimes l_j}\theta^{-1}_{l_i} \theta^{-1}_{l_j}$, which is obtained by expressing the double braid using the twist and its inverses. 

Otherwise, for $l_i = l_j = \sigma$, we prove the following lemma which proves useful later. 

\begin{lemma}\label{lem:purebraid}
If $l_i = l_j =\sigma$, then \[ T_{ij} \hat{\chi} = \zeta^2 (-1)^{p(\chi(D_{j+1}))} \zeta^{4m}\, \hat{\chi}|_{\{D_{i+1},\dots, D_{j}\}\otimes \varepsilon} \ ,
\]
where $p(\sigma)=0$, $p(\varepsilon_a) = a$ 
and some $m\in \mathbb{Z}_4$ which may depend on $\chi(D_i),\dots, \chi(D_{j})$ but it does not depend on $\chi(D_{j+1})$.
\end{lemma}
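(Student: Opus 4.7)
The approach is to reduce the action of $T_{ij}$ to a diagonal action by exploiting the topological description of $T_{ij}$ as the pure monodromy of $L_i = \sigma$ around $L_j = \sigma$. In a recoupled fusion-tree basis where $L_i$ and $L_j$ fuse directly into an intermediate label $Y \in \{\unit, \varepsilon\}$, the operator $R^2_{\sigma,\sigma}$ acts diagonally with eigenvalues $(R^{(\sigma\sigma)\unit})^2 = \zeta^2$ and $(R^{(\sigma\sigma)\varepsilon})^2 = \zeta^{-6} = -\zeta^2$ (using $\zeta^8 = -1$). The ``tensoring with $\varepsilon$'' shift on $\chi(D_{i+1}), \ldots, \chi(D_j)$ asserted in the lemma is precisely the effect of propagating the $Y = \varepsilon$ sector back through the F-moves that connect the two bases, using that $\varepsilon$ is invertible in Ising with $\unit \otimes \varepsilon = \varepsilon$, $\varepsilon \otimes \varepsilon = \unit$, and $\sigma \otimes \varepsilon = \sigma$. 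I would proceed by induction on $j - i$.

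For the base case $j = i + 1$, the strands $L_i, L_{i+1}$ are already adjacent and $T_{i, i+1} = \sigma_i^2$. A single F-move $F^{(\chi(D_i), \sigma, \sigma) \chi(D_{i+2})}$ diagonalises the double braid; recoupling, applying $R^2$, and returning via the inverse F-matrix reduces the claim to a case analysis on $\chi(D_i), \chi(D_{i+2})$. When both equal $\sigma$, the relevant F-matrix is the self-inverse matrix $\lambda^{-1} \bigl(\begin{smallmatrix} 1 & 1 \\ 1 & -1 \end{smallmatrix}\bigr)$, and a direct computation yields $F \cdot \zeta^2 \operatorname{diag}(1,-1) \cdot F = \zeta^2 \bigl(\begin{smallmatrix} 0 & 1 \\ 1 & 0 \end{smallmatrix}\bigr)$, so $\chi(D_{i+1}) \in \{\unit, \varepsilon\}$ is swapped with $\chi(D_{i+1}) \otimes \varepsilon$ with overall phase $\zeta^2$, matching the formula with $(-1)^{p(\sigma)} = 1$ and $m = 0$. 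Otherwise at least one of $\chi(D_i), \chi(D_{i+2})$ lies in $\{\unit, \varepsilon\}$, so $\chi(D_{i+1}) = \sigma$ is forced, the F-matrix is a scalar, and the action is a scalar phase $\pm \zeta^2$ that matches $\zeta^2 (-1)^{p(\chi(D_{i+2}))} \zeta^{4m}$ for an appropriate $m \in \mathbb{Z}_4$ depending on $\chi(D_i)$ alone.

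For the inductive step $j - i > 1$, I would use the braid relation $A_{ij} = \sigma_{j-1} \, A'_{i, j-1} \, \sigma_{j-1}^{-1}$, where applying $\sigma_{j-1}^{-1}$ first places $L_j = \sigma$ at position $j - 1$, so that $A'_{i, j-1}$ is a pure braid between two $\sigma$-strands at positions $i$ and $j - 1$ to which the inductive hypothesis applies, and $\sigma_{j-1}$ afterwards restores the configuration. Each conjugating half-braid acts on our basis through an Ising $B$-matrix from \eqref{eq:Ising-B-matrix-1}--\eqref{eq:Ising-B-matrix-2} at the vertex joining $\chi(D_{j-1}), l_{j-1}, l_j, \chi(D_{j+1})$ via the intermediate label $\chi(D_j)$. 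Composing the outer $B$-matrix action with the inductive factor $\zeta^2 (-1)^{p(\chi(D_j'))} \zeta^{4m'}$ (where $\chi(D_j')$ is the label at the modified vertex) and with the $\varepsilon$-shift on $\chi(D_{i+1}), \ldots, \chi(D_{j-1})$ extends the $\varepsilon$-shift to $\chi(D_j)$ and replaces $(-1)^{p(\chi(D_j'))}$ by $(-1)^{p(\chi(D_{j+1}))}$, with any remaining $\pm\zeta^{4k}$ absorbed into $\zeta^{4m}$.

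The main obstacle is verifying in the inductive step that the $\chi(D_{j+1})$-dependence of the two conjugating $B$-matrices and of the inner inductive phase combines to yield only the single factor $(-1)^{p(\chi(D_{j+1}))}$, with the remaining phase $\zeta^{4m}$ genuinely $\chi(D_{j+1})$-independent. This amounts to a case-by-case check using the explicit $B$-matrix formulas \eqref{eq:Ising-B-matrix-1}--\eqref{eq:Ising-B-matrix-2}, organised by whether each of $l_{j-1}, \chi(D_{j-1}), \chi(D_j), \chi(D_{j+1})$ equals $\sigma$ or lies in $\{\unit, \varepsilon\}$, with the $2 \times 2$ all-$\sigma$ case handled separately via \eqref{eq:Ising-B-matrix-2}. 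The cancellation works because the $\mathbb{Z}/2$-graded structure of Ising fusion forces all $B$-matrix entries outside the all-$\sigma$ block to have the form $\pm \zeta^{4k}$, and the $\chi(D_{j+1})$-dependent entries cancel between the forward and inverse half-braids up to the expected sign $(-1)^{p(\chi(D_{j+1}))}$.
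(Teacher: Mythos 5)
Your proposal matches the paper's proof: both proceed by induction on $j-i$, with the base case obtained by squaring the Ising $B$-matrices (your $F\cdot R^2\cdot F$ recoupling is the same computation, since the $B$-matrix is defined through exactly that $F$--$R$--$G$ composition) and the inductive step obtained by conjugating the pure braid by a single elementary braid generator and verifying case by case, via the explicit $B$-matrix entries, that the $\chi(D_{j+1})$-dependence collapses to the single sign $(-1)^{p(\chi(D_{j+1}))}$ while $\zeta^{4m}$ stays independent of it. Your structural remark that the $\mathbb{Z}/2$-grading of the Ising fusion rules forces all entries outside the all-$\sigma$ block to be of the form $\pm\zeta^{4k}$ is a nice explanatory addition, but the route is otherwise the one taken in the paper.
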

\begin{proof}
We prove this by induction on $|i-j|$. We start with $j = i +1 $. This is the case of applying the $B$-matrix twice on the strands $l_i$ and $l_{i+1}$. We compute this for the different admissible labels of $D_{i+1} = D_{j}$. 
\begin{itemize}
    \item For $\chi(D_{i+1}) = \sigma$, let $\chi(D_{i}) = \varepsilon_a$ and $\chi(D_{i+2}) = \varepsilon_{b}$ be the admissible labels. Then, we find: 
    \[ T_{ii+1} \hat{\chi} = (B^{(\varepsilon_a \sigma \sigma) \varepsilon_b})^2 \,\hat{\chi} \overset{\eqref{eq:Ising-B-matrix-1}}{=} \zeta^2 (-1)^{a+b} \,\hat{\chi} \ .
    \]
    \item For $\chi(D_{i+1}) = \varepsilon_a$, we get admissible labels $\chi(D_i) =\sigma$ and $\chi(D_{i+2}) = \sigma$. The result is:
    \[ T_{ii+1} \hat{\chi} \overset{\eqref{eq:Ising-B-matrix-2}}{=} \zeta^2\, \hat{\chi}|_{D_{i+1}\otimes \varepsilon}~.\]
\end{itemize}

Assuming that the statement is true for fixed $i$ and $j$, we make the induction step and prove it for $i$ and $j+1$:
\begin{align*}
        T_{ij+1} \hat{\chi} =\zeta^2 \zeta^{4 m} \sum_{s_{j+1},d_{j+1}'}{(-1)^{p(s_{j+1})}\, B^{-(d_j l_{j+1} l_j)d_{j+2}}_{d_{j+1}s_{j+1}}\, B^{(\varepsilon \otimes d_j l_{j} l_{j+1})d_{j+2}}_{s_{j+1}d'_{j+1}}} \\
        \hat{\chi}|_{\{D_{i+1},\dots,D_{j}\}\otimes \varepsilon, D_{j+1}= d'_{j+1}}.
\end{align*}
where $m$ depends only on $d_{i} \equiv \chi(D_i),\dots, \chi(D_j)\equiv d_{j}$ allowing us to take the factor $\zeta^{4m}$ out of the sum. 
This is the whole reason why we assume this special dependence of $m$ on the D-labels; for the induction to work. 
The above is obtained by using the inverse $B$-matrix, the induction assumption and the $B$-matrix again.
Recall that there is a pure braid relation 
\[
T_{i j+1} = (\id{}\otimes c^{-1}_{j-1,j})\circ T_{ij}\circ (\id{}\otimes c_{j-1,j}) \ ,
\] where $T_{ij}$ denotes the pure braid of strands $i$ and $j$ as before and $c$ 
denotes the braid generator with $c^{-1}$ the inverse braid.
After computing this for all labels, one verifies the statement. We give the phases for each case: 

\begin{itemize}
    \item For $l_{j} = \varepsilon$: $\zeta^2 \zeta^{4m} (-1)^{p(d_{j+2})+1}$.
    \item For $l_{j} = \sigma$: \begin{enumerate}
        \item for $d_{j+2} = \varepsilon_{a}$: $\zeta^2 \zeta^{4m+2p(d_j)}(-1)^a$
        \item for $d_{j+2} = \sigma$: $\zeta^2 \zeta^{4m} \zeta^4 (-1)^{p(d_{j+1})}$
    \end{enumerate}
\end{itemize}
\end{proof}

Similar to this, we can prove the following lemma, which includes braids with $i_{1}$.

\begin{lemma}\label{lem:purebraid2}
If $l_i = i_1 =\sigma$, then \[ T_{l_ii_1} ~\hat\chi ~=~ \zeta^2 \zeta^{4m}\, \hat{\chi}|_{\{D_{i+1},\dots, D_{n+1},K_1\}\otimes \varepsilon}
\]
with $p$ as in Lemma~\ref{lem:purebraid}
and $m\in \mathbb{Z}_4$ which may depend on $\chi(D_i),\dots, \chi(D_{n+1})$ but not on $\chi(A_2)$.
\end{lemma}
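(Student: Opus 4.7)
The plan is to parallel the proof of Lemma~\ref{lem:purebraid}, but now inducting on $n-i$ rather than on $|i-j|$, with the base case $i=n$ handled by a direct $B$-matrix computation at the trivalent vertex that attaches the last framed point to the first handle. The key conceptual difference is that the "target" strand $i_1$ lives on the closed loop formed by $J_1$, $K_1$, $J'_1$ rather than on an open framed point, so the admissibility constraint $\chi(J_1)=\chi(J'_1)=\sigma$ couples the braid action to the internal label $\chi(K_1)\in\{\unit,\varepsilon\}$. This is exactly why the conclusion involves tensoring $K_1$ by $\varepsilon$ and why the sign $(-1)^{p(\chi(D_{j+1}))}$ from Lemma~\ref{lem:purebraid} is absent: there is no "outgoing" $D$-edge beyond the braid since the braid terminates at a closed handle loop.

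For the base case $i=n$, I would compute $T_{l_n i_1}$ as the square of the $B$-matrix applied at the $D_{n+1}$–$L_n$–$J_1$ vertex, combined with the recoupling that threads the action through the $K_1$ loop. Since $\chi(L_n)=\chi(J_1)=\sigma$, admissibility forces $\chi(D_{n+1})=\varepsilon_a$ for some $a$, and the computation is analogous to the $\chi(D_{i+1})=\varepsilon_a$ case in Lemma~\ref{lem:purebraid}, except that instead of producing a pure phase $\zeta^2(-1)^{a+b}$ the $\varepsilon$-charge is deposited on the $K_1$ edge via the handle identification. Concretely, this uses \eqref{eq:Ising-B-matrix-1} together with the fact that braiding $\sigma$ past the identification point $J_1=J'_1$ acts on the label $\chi(K_1)$ by $\chi(K_1)\mapsto\chi(K_1)\otimes\varepsilon$, reminiscent of how the double $B$-matrix in \eqref{eq:Ising-B-matrix-2} toggles the intermediate $\sigma$-$\sigma$ fusion channel.

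For the inductive step, assume the result for $i+1$ and use the pure-braid relation
\[
T_{l_i i_1} \;=\; (\id{}\otimes c_{i,i+1}^{-1})\circ T_{l_{i+1} i_1}\circ (\id{}\otimes c_{i,i+1})\,,
\]
expanding each generator $c_{i,i+1}^{\pm 1}$ via the appropriate $B$-matrix from \eqref{eq:Ising-B-matrix-1}–\eqref{eq:Ising-B-matrix-2}. Splitting on whether $l_i=\varepsilon$ or $l_i=\sigma$, one checks case by case---exactly as in the inductive step of Lemma~\ref{lem:purebraid}---that the composition with the induction hypothesis yields an overall factor of the form $\zeta^2\zeta^{4m'}$ and relabels precisely the edges $D_{i+1},\dots,D_{n+1},K_1$ by tensoring with $\varepsilon$. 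One needs to verify that the exponent $m'$ picks up only contributions depending on $\chi(D_i),\dots,\chi(D_{n+1})$, and in particular is independent of $\chi(A_2)$; this mirrors the independence of $m$ from $\chi(D_{j+1})$ in Lemma~\ref{lem:purebraid} and follows because $A_2$ lies strictly beyond the handle vertex into which the braid is absorbed.

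The main obstacle I expect is the base case: carefully packaging the effect of the braid at the "closed end" of the pure braid into a flip of $\chi(K_1)$. The bookkeeping requires treating the handle loop as an effective strand and verifying that the recoupling moves needed to reduce the pure braid to $B$-matrix applications produce no extra phases beyond the claimed $\zeta^2\zeta^{4m}$. The inductive step, by contrast, is essentially a transcription of the computation already carried out in Lemma~\ref{lem:purebraid}, with the roles of $D_{j+1}$ and the sign $(-1)^{p(\cdot)}$ replaced by $K_1$ and its tensoring with $\varepsilon$.
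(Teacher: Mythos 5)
Your overall strategy is the one the paper intends: the paper itself offers no proof of Lemma~\ref{lem:purebraid2} beyond the remark that it is ``similar'' to Lemma~\ref{lem:purebraid}, and your plan --- induct towards the handle with the elementary double braid at the junction of the framed-point tree and the first handle as base case, and conjugation by elementary braidings as the inductive step --- is a faithful elaboration of that. The identification of $K_1$ as the analogue of the outermost $D$-edge, and of the independence from $\chi(A_2)$ as the analogue of the independence from $\chi(D_{j+1})$, is also correct.

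That said, the one genuinely new ingredient relative to Lemma~\ref{lem:purebraid} is the base case, and you describe it but do not carry it out; since you yourself flag it as ``the main obstacle,'' this is the substantive gap. For the record, it does work: after two $F$-moves the relevant two-dimensional space (indexed by $\chi(D_{n+1})\in\{\unit,\varepsilon\}$ with $\chi(K_1)=\chi(D_{n+1})\otimes\chi(A_2)$ determined at fixed $\chi(A_2)$) is taken to the basis in which $L_n$ and $J_1$ fuse directly, where the double braid is diagonal with entries $\theta_{\varepsilon_a}\theta_\sigma^{-2}=(-1)^a\zeta^2$; conjugating back gives $\zeta^2$ times the swap, exactly as $(B^{(\sigma\sigma\sigma)\sigma})^2=\zeta^2\left(\begin{smallmatrix}0&1\\1&0\end{smallmatrix}\right)$ from \eqref{eq:Ising-B-matrix-2}, and the swap is precisely $\{D_{n+1},K_1\}\otimes\varepsilon$ with $A_2$ untouched. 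Two smaller imprecisions: (i) the reason $\chi(D_{n+1})\in\{\unit,\varepsilon\}$ is not the adjacency of $L_n$ and $J_1$ but the global constraint $\catname{C}(\sigma,L^{\otimes g})=0$; (ii) in the inductive step the case split must be on the label $l_{i+1}$ of the strand being conjugated past (not on $l_i$, which is $\sigma$ by hypothesis), and the relation $T_{l_i i_1}=c_{i,i+1}^{-1}\circ T_{(\cdot)\, i_1}\circ c_{i,i+1}$ has to be read in the permuted configuration in which the $\sigma$-labelled strand sits at position $i+1$ --- the same reading that is implicit in the paper's own proof of Lemma~\ref{lem:purebraid}.
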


\subsubsection{Irreducibility property of Ising categories}
\label{subsec:irred}

In this section we prove our second main result:

\begin{theorem}\label{thm:Ising-irred}
Let $\catname C = \catname C(\zeta,\nu)$ be an Ising-type MFC and let $\Sigma_{g,n}$ be a d-surface whose framed points are labelled by simple objects. Then, $V^{\catname C}(\Sigma_{g,n})$ is an irreducible projective representation of the pure mapping class group.
\end{theorem}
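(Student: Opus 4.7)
The plan is to show that any nonzero pure-mapping-class-group-invariant subspace $W\subseteq V^\catname{C}(\Sigma_{g,n})$ equals the whole space, by working in the basis $\{\hat\chi\}_{\chi\in\colGo}$ from \eqref{eq:V_gn-basis}. The strategy has two stages: first isolate a single basis vector $\hat\chi_\ast\in W$ via spectral projection using diagonal MCG elements, and then show that the MCG orbit of $\hat\chi_\ast$ exhausts the basis.

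For the first stage I would exploit that $T_{\alpha_1},\ldots,T_{\alpha_g}$ act diagonally with eigenvalues $\theta_{\chi(J_m)}$; since the three Ising twists $\theta_\unit=1$, $\theta_\varepsilon=-1$, $\theta_\sigma=\nu\zeta^{-1}$ are pairwise distinct, polynomial combinations produce spectral projectors onto each fixed pattern $(j_1,\ldots,j_g)\in I^g$ of $J$-labels, and these lie in the group algebra of the representation so they preserve $W$. Within a sector the remaining internal labels $D_i, K_m, A_m$ must be separated by further generators: $T_{\gamma_m}$ contributes diagonal scalars $\theta_{\chi(J_m)\otimes\chi(J_{m+1})}$ by \eqref{eq:Tgamma-nonsigma} whenever $\chi(J_m)\otimes\chi(J_{m+1})$ is simple, which imposes new constraints on the $D$-labels, while pure braids among framed points contribute diagonal scalars $\theta_{l_i\otimes l_j}\theta_{l_i}^{-1}\theta_{l_j}^{-1}$ via Lemma~\ref{lem:purebraid}. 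On $\sigma\sigma$-runs of the $j$-pattern the relevant $T_{\gamma_m}$ is an involutive label-shift on $K_m, K_{m+1}, A_{m+1}$ via \eqref{eq:Tgamma-sigma}, and its $\pm 1$-eigenspace projectors separate those labels. Combining all such projectors cuts out a rank-one subspace, so $\hat\chi_\ast\in W$ for some $\chi_\ast\in\colGo$.

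For the second stage, acting on $\hat\chi_\ast$ by $S_m$ via \eqref{eq:Saction-1} or \eqref{eq:Saction-sigma} produces (when $\chi_\ast(K_m)=\unit$) a linear combination supported on all three possible values of $\chi(J_m)$, from which the stage-one projectors recover each term; iterating this over all $m$ shows every $J$-pattern appears in $W$. Within each sector, the $\sigma\sigma$-case of $T_{\gamma_m}$ toggles $K_m$-labels, and transitions of $D_i$-labels between different fusion channels are induced by the non-diagonal part of $B^{\pm(\sigma\sigma\sigma)\sigma}$ in \eqref{eq:Ising-B-matrix-2} acting through the pure braids of Lemmas~\ref{lem:purebraid} and \ref{lem:purebraid2}. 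These moves together act transitively on $\colGo$, forcing $W = V^\catname{C}(\Sigma_{g,n})$ and hence irreducibility.

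The main obstacle I expect is the bookkeeping along consecutive $\sigma$-labels in $(j_1,\ldots,j_g)$: then $T_{\gamma_m}$ is not diagonal and shifts the three labels $K_m, K_{m+1}, A_{m+1}$ simultaneously, while $S_m$ applied at a $\sigma$-edge with $\chi(K_m)=\varepsilon$ degenerates to the scalar $\zeta^2$ by \eqref{eq:Saction-sigma2} and yields no new direction. Disentangling the $K,A$-labels along such a $\sigma$-run will likely require an induction on the run length, combined with the non-diagonal piece of $B^{\pm(\sigma\sigma\sigma)\sigma}$ applied through the pure braid action to flow between different $K$-assignments; handling this combinatorics uniformly across all $\sigma$-runs, and across the boundary between a $\sigma$-run and a non-$\sigma$ region, is where the argument becomes delicate.
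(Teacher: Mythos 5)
Your overall architecture (spectrally project onto a single vector inside an invariant subspace, then move transitively through the basis) is close in spirit to the paper's proof, which constructs all matrix units $E(\hat\chi,\hat\chi')$ inside $\mathbbm{k}[V(\PMod_{g,n})]$. Your second stage is essentially the paper's: $S_m$ combined with projectors changes $J_m$-labels, Dehn twists along curves joining two handles change $K$-labels via \eqref{eq:Tgamma-sigma}, and Lemmas~\ref{lem:purebraid} and~\ref{lem:purebraid2} change $D$-labels. The problem is in your first stage.

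The gap is that the diagonal operators you list cannot separate the internal $D$-, $K$- and $A$-labels, and your proposed substitute does not do the job. The eigenvalue $\theta_{\chi(J_m)\otimes\chi(J_{m+1})}$ of $T_{\gamma_m}$ in \eqref{eq:Tgamma-nonsigma} depends only on the $J$-labels, so it imposes no constraint on the $D$-labels, contrary to what you claim. More seriously, on a $\sigma\sigma$-run $T_{\gamma_m}$ is (up to phase) an involution permuting basis vectors, $\hat\chi\leftrightarrow\hat\chi|_{\{K_m,K_{m+1},A_{m+1}\}\otimes\varepsilon}$, and its $\pm1$-eigenprojectors land on the combinations $\hat\chi\pm\hat\chi|_{\cdots\otimes\varepsilon}$, not on the individual basis vectors; the same applies to the signed label-shifts of Lemma~\ref{lem:purebraid} on the $D$-labels. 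Intersecting all of your projectors therefore generically isolates a character-sum vector of the abelian group generated by these involutions, so the claim ``cuts out a rank-one subspace, so $\hat\chi_\ast\in W$ for some $\chi_\ast\in\colGo$'' is unjustified, and your stage two (which reprojects onto individual basis vectors) does not then apply as written. The missing idea, which is how the paper proceeds, is that \emph{every} edge $a\in\{D_m,J_m,A_m,K_m\}$ of the graph \eqref{eq:basisgraph} has an associated simple closed curve $\gamma_a$ on the surface (Figure~\ref{fig:projcurves}) whose Dehn twist acts diagonally with eigenvalue $\theta_{\chi(a)}$; since $\theta_\unit=1$, $\theta_\varepsilon=-1$, $\theta_\sigma=\nu\zeta^{-1}$ are pairwise distinct, the product over all edges of the resulting spectral projectors $P_{\chi(e)}(\gamma_e)$ is exactly the rank-one projector $E(\hat\chi)$ onto a basis vector. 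With that ingredient added, the rest of your outline goes through along the lines of the paper.
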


\begin{remark}
\begin{enumerate}
\item 
The case without framed points, i.e.\ $n=0$, was already shown in \cite[Sec.\,4.3]{Jian:2019ubz} (for one of the 16 Ising-type categories). It turns out that the same method works in all 16 cases and that it can be easily adapted to the case with framed points, and so our proof follows closely that of \cite{Jian:2019ubz}.

\item Irreducibility with respect to the pure mapping class group $\PMod_{g,n}$ implies irreducibility with respect to the mapping class group $\Mod^\mathcal{C}(\Sigma_{g,n})$, as $\PMod_{g,n}$ is a subgroup.

\end{enumerate}
\end{remark}

\begin{proof}[Proof of Theorem~\ref{thm:Ising-irred}]
Let $\Sigma_{g,n}$ be a d-surface 
with simple point labels and set $ V_{g,n} := V^{\catname C}(\Sigma_{g,n})$. To prove irreducibility, we will show that the inclusion $\mathbbm{k}[{V(\PMod_{g,n})}] \subset \End(V_{g,n}) $ is actually an equality. In terms of the basis $\{\hat\chi\}$ of $V_{g,n}$ from \eqref{eq:V_gn-basis}, denote by
$E(\hat\chi,\hat\chi')$ the elementary matrix in $\End(V_{g,n})$, which maps $\hat\chi$ to $\hat\chi'$ and which maps all other basis elements to zero. We will have completed the proof once we show that
\begin{equation}
    \forall \chi,\chi' \in \colGo ~:~ 
    E(\hat\chi,\hat\chi') \in \mathbbm{k}[{V(\PMod_{g,n})}] ~.
\end{equation}

As a first step towards this goal,
define the operators:  
\be
P_\unit (\gamma) = \frac{1}{16}\sum_{k=1}^{16}{V(T_{\gamma}^k)}
~, \quad 
P_{\sigma}(\gamma)  = \frac{(\unit - V(T_{\gamma}^2))}{1-\zeta^{-2}}
~,\quad
P_{\varepsilon}(\gamma) = \unit - P_{\unit}(\gamma) - P_{\sigma}(\gamma) 
~,
\ee
where $ \gamma  $ is a simple closed curve and $ T_{\gamma} $ is the Dehn twist around $\gamma$.   
Using $\frac{1}{16}\sum_{k=1}^{16} \theta_i^k = \delta_{\unit,i} $ and $\frac{1-\theta_i^2}{1-\zeta^{-2}} = \delta_{\sigma,i}$ one can check that 
that the $P_j(\gamma)$, $j \in \{\unit,\sigma,\varepsilon\}$,
for $\gamma$ a fixed simple closed curve,
are pairwise orthogonal idempotents. 
Note that for all simple closed curves $\gamma$, by construction 
\begin{equation}
P_\unit (\gamma)
\,,~
P_{\sigma}(\gamma)
\,,~
P_{\varepsilon}(\gamma) 
~\in~\mathbbm{k}[{V(\PMod_{g,n})}]~.
\end{equation}

\begin{figure}[t]
\centering
    \boxpic{0.7}{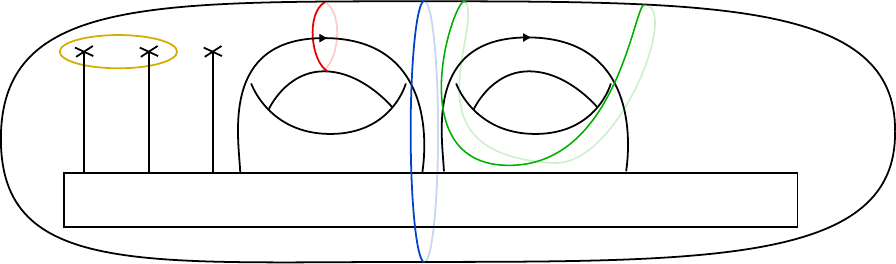}{
    	\put (34,30.5) {$J_1$}
    	\put (45,30.5) {$A_2$}
    	\put (77,18) {$\dots$}
    	\put (8,28) {$D_{i+1}$}
    	\put (6.5,15) {$l_1$}
    	\put (14,15) {$l_i$}
    	\put (20,15) {$l_{n}$}
    	\put (51,30.5) {$K_2$}
    }
    \caption{Curves of type $D,I,A,K$ on the surface based on the handlebody with an embedded graph and a coupon.}
    \label{fig:projcurves}
\end{figure}

Let $a$ be one of the edge labels $D_m$, $J_m$, $A_m$, or $K_m$ as used in \eqref{eq:basisgraph} and denote by $\gamma_a$ the corresponding simple closed curve as shown in Figure \ref{fig:projcurves}.
The Dehn twist along $\gamma_a$ 
acts diagonally on each basis vector $\hat\chi$,
giving twist eigenvalue of the edge colour $\chi(a)$.
From this it is easy to check that the idempotents $P_j(\gamma_a)$ for $j\in I$ project onto those basis elements where the edge $a$ is labelled by $j$:
\begin{equation}
    P_j(\gamma_a)~ \hat{\chi}~ = ~\delta_{j,\chi(a)}~ \hat{\chi}
    ~.
\end{equation}
This implies that diagonal maps of the form $E(\hat \chi,\hat\chi) \equiv E(\hat \chi)$ are in $\mathbbm{k}[V(\Mod_{g,n})]$ as they are realised as a product of $P_{j}$ maps, namely 
\be
E(\hat{\chi}) = \prod_{e\text{ edge}}{P_{\chi(e)}(e)}~.
\ee 
More precisely, $e$ runs over the edges $D_3,\dots, D_{n+1},K_1,\dots, K_g, A_2,\dots,A_g, J_1,\dots, J_g$ and by abuse of notation they also denote the corresponding curves in Figure~\ref{fig:projcurves}. We will now describe how to move through different basis elements by changing the respective labels, \ie how to construct the rest of the $E(\hat\chi,\hat\chi')$ maps using mapping classes. We will use the computations of Section~\ref{subsec:ising-mcg-action}.

Changing only $J_m$ labels: Let $\chi\in \colGo$ be a colouring such that $\chi(K_m) = \unit$. Then, the colourings $\{\chi|_{J_m = j}\}_{j \in I}$ are all admissible. To jump between the corresponding basis vectors, we use \eqref{eq:Saction-1} and \eqref{eq:Saction-sigma} to show
\be\label{eq:Inothing-nothing}
E(\hat{\chi}|_{J_m= \varepsilon_p};\hat{\chi}|_{J_m= \varepsilon_{p+1}}) = 2 P_{\varepsilon_{p+1}}(J_m) \,S_m \,E(\hat{\chi}|_{J_m= \varepsilon_p})\ee 
and 
\be\label{eq:Inothing-sigma}
    E(\hat{\chi}|_{J_m= \varepsilon_p};\hat{\chi}|_{J_m= \sigma}) 
    = (-1)^p \lambda P_{\sigma}(J_m)\, S_m\, E(\hat{\chi}|_{J_m= \varepsilon_p})
\ee
and 
\be\label{eq:Isigma-nothing}
E(\hat{\chi}|_{J_m= \sigma};\hat{\chi}|_{J_m= \varepsilon_p})\\ = (-1)^p \lambda P_{\varepsilon_{p}}(J_m)\, S_m\, E(\hat{\chi}|_{J_m= \sigma}).
\ee
Note that $\chi(K_m) = \varepsilon$ only allows $\sigma$ as an $J_m$-label. 

\medskip

Changing $K_m$-labels: We will now describe how to change $K_m$-labels while keeping the labels of $D_3,\dots,D_{n+1}$ unchanged. Therefore, we will fix labels $d_3 \equiv\chi(D_3),\dots, d_{n+1}\equiv \chi(D_{n+1})$ and only consider basis elements with such labels. The labels of $A_2,\dots, A_g$ are in fact uniquely determined by the labels for $D_{n+1},K_1,\dots, K_g$ and they are all in the label set $\{\unit, \varepsilon\}$.

Consider first the case where $d_{n+1}= \unit$. In this case, $\chi(K_1)= \dots = \chi(K_g) = \unit $ and $\chi(J_1)= \dots = \chi(J_g) = \unit $ are admissible and let $\chi_0$ denote this colouring with the already fixed labels for $D_3,\dots, D_{n+1}$. 
For any other basis vector $\hat{\chi}$ (with the same $D$-labels $d_{n+1}= \unit $ as mentioned) we will construct the operators $E(\hat{\chi}_0, \hat\chi)$. Let $\{K_{m_l}\}_{l=1,\dots,L}$ be the maximal subset of $K$-edges such that $\chi(K_{m_l}) = \varepsilon$ and $m_1 <\dots <m_L$. In particular, we have $\chi(J_{m_l}) = \sigma$ for all $l$. The fusion rules imply that the number of non-trivial $K$-labels $L$ is even as we have fixed $d_{n+1}=\unit$. Then  
\be
E(\hat{\chi_0}, \hat{\chi}) ~=~E(\hat{\chi'},\hat{\chi})\, T_{\gamma_{m_{L-1},m_L}}\cdots T_{\gamma_{m_3,m_4}}\,T_{\gamma_{m_1,m_2}}\, E(\hat{\chi_0},\hat{\chi_0}|_{J_{m_l}=\sigma}) \ ,
\ee
where the curves $\gamma_{i,j}$
\begin{figure}[t]
\centering
\boxpic{0.7}{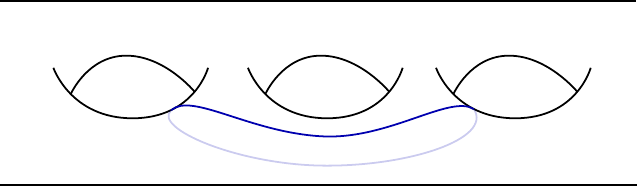}{
\put (30,8) {$\gamma_{ij}$}
}
\caption{The curve $\gamma_{ij}$ connects the $i$'th with the $j$'th genus.}
\label{fig:gammacurve}
\end{figure}
connect the $i$'th with the $j$'th hole as shown in Figure \ref{fig:gammacurve}. The action of $T_{\gamma_{m_i m_{i+1}}}$ is obtained similar to \eqref{eq:Tgamma-sigma} and changes the labels of $K_{m_i}$ and $K_{m_{i+1}}$ to $\varepsilon$. The operator $E(\hat{\chi_0},\hat{\chi_0}|_{\{J_{m_1},\dots,J_{m_L}\}=\sigma})$ changes the $J_{m_l}$ labels to $\chi(J_{m_l}) = \sigma$ and is a product of the operators in \eqref{eq:Inothing-sigma}. The vector $\hat\chi'$ is obtained from the colouring $\chi$, but with $\chi(J_m) = \unit$ for all $m\notin\{m_1,\dots,m_L\}$. Similarly, the operator $E(\hat{\chi'},\hat{\chi})$ is obtained as a product of operators in \eqref{eq:Inothing-nothing}, \eqref{eq:Inothing-sigma}.

In the opposite direction, one gets 
\be
E(\hat{\chi},\hat{\chi_0})~=~ E(\hat{\chi}',\hat{\chi_0})\, T_{\gamma_{m_{L-1},m_L}}\cdots T_{\gamma_{m_1,m_2}}\, E(\hat{\chi})~,
\ee
where $\hat{\chi}'$ now is obtained by $\chi$ but with all $K_m$-labels set to $\unit$.

Consider now the case where $d_{n+1} = \varepsilon$. In this case, fix the colouring $\chi_0$ with labels $\chi_0(K_1) = \varepsilon,\chi_0(J_1) = \sigma$ and $\chi_0(K_2)= \dots = \chi_0(K_g) = \unit$ and $\chi(J_2)= \dots = \chi(J_g)=\unit$. As before, let $\hat{\chi}$ be any basis element (now with $d_{n+1}= \varepsilon$) and consider the maximal subsequence $\{K_{m_l}\}_{l=1,\dots, L}$ with $\chi(K_{m_l}) = \varepsilon$, which is in this case has an odd number of elements $L$. Then, similar to the previous considerations we get 
\be
E(\hat{\chi_0}, \chi) ~=~E(\hat\chi', \chi)\, T_{\gamma_{m_{L-1},m_L}}\cdots T_{\gamma_{m_2,m_3}}\,T_{\gamma_{1,m_1}}\, E(\hat{\chi}_0, \hat{\chi}_0|_{J_{m_l}=\sigma}) \ ,
\ee
where we omit the $\gamma_{1,m_1}$ Dehn twist if $m_1=1$. Similarly,  
\be
E(\hat\chi,\hat{\chi_0})~=~ E(\hat{\chi}', \hat{\chi}_0)\, T_{\gamma_{m_{L-1},m_L}}\cdots T_{\gamma_{m_2,m_3}}\, T_{\gamma_{1,m_1}}\, E(\hat{\chi},\hat{\chi}|_{J_1 = \sigma})~.
\ee
Finally, note that the label $\sigma$ is not admissible for $d_{n+1}$ due to the Ising fusion rules, $\catname{C}(\sigma, L^g) = 0$ for Ising categories.

To conclude
we have constructed for any two basis elements $\hat\chi,\hat\chi'$ with fixed labels $d_3,\dots, d_{n+1}$ the operators $E(\hat\chi,\hat\chi')$, namely by passing through the distinguished basis element $\chi_0$, \ie
\be 
E(\hat\chi,\hat\chi')~=~ E(\hat{\chi}_0,\hat{\chi}')\, E(\hat{\chi},\hat{\chi}_0)~.
\ee
Changing $D_m$-labels: Changing the labels $D_m$-labels which previously were left unchanged, is the final step to complete the irreducibility proof. For $D_m$ to have two admissible labels, \ie such that both $\unit$ and $\varepsilon$ are admissible, we have either $\#(\sigma-\text{labelled framed points})\geq 4$ or $\#(\sigma-\text{labelled framed points})\geq 2$ and $g\geq 1$. Therefore, consider the case where $\unit$ and $\varepsilon$ are both admissible for $D_m$. There exists some $m_-<m$ such that $l_{m_-} = \sigma$ and let $m_-$ be the maximal such index, which directly implies that the only admissible label for $D_{m_-}$ is $\sigma$. 

If there exists some $m_+\geq m$ such that $l_{m_+}= \sigma$, let $m_+$ be the minimal such index. Then, consider the curve $\delta_{m_-,m_+}$ that encircles the $m_-$'th and $m_+$'th framed point and all the framed points in between. Then, using the result of Lemma~\ref{lem:purebraid} 
\be 
E(\hat{\chi}, \hat{\chi}|_{D_{m}\otimes\varepsilon}) = \alpha T_{\delta_{m_-,m_+}}E(\hat{\chi})
\ ,
\ee
where $\alpha$ is a phase. To see how Lemma~\ref{lem:purebraid} is applied, note that the twist of the product of the strands from $m_-$ to $m_+$ will lead to a pure braid thereof multiplied by their respective twist eigenvalues. The twist eigenvalues are absorbed in the phase $\alpha$ and the pure braid is a product of the pure braid generators whose action we described in Lemma~\ref{lem:purebraid}.

If however there does not exist such $m_+\geq m$, then $g\geq 1$ and therefore consider the curve $\delta_{m-1}$ in Figure \ref{fig:mcggenerators-and-ribbongraph}. Then
\be 
E(\hat{\chi},\hat{\chi}|_{D_m\otimes \varepsilon}) = \alpha T_{\delta_{m-1}}E(\hat{\chi}) \ ,
\ee
where the phase $\alpha$ appears according to Lemma \ref{lem:purebraid2}.
\end{proof}

\subsubsection{Property F of Ising Categories}
\label{sec:propf}
In this section we prove that Ising categories have property F with respect to any d-surface. 
\begin{theorem}\label{thm:Ising-propF}
Let $\catname C = \catname C(\zeta, \nu)$ be any Ising-type modular fusion category. Then, $\catname C$ has property F with respect to d-surfaces. 
\end{theorem}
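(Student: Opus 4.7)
The plan is to show finiteness of the image by realizing $V^{\mathcal{C}}(\PMod_{g,n})$ as a subgroup of $\mathrm{GL}_N(R)$ for an explicit ring $R$ of $S$-integers in a cyclotomic field, and then showing it is bounded at every place of $R$, so that standard finiteness theorems for bounded subgroups of $\mathrm{GL}_N$ over rings of $S$-integers apply. By Remark~\ref{rem:propF}, it is enough to consider the unframed pure mapping class group $\PMod_{g,n}^{\mathrm{un}}$, since Dehn twists around framed points act by roots of unity and contribute only a finite subgroup.

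First I would use the explicit formulas \eqref{eq:MCG-gen-action} together with the Ising data of Section~\ref{subsec:isingcats} and the pure-braid computations in Lemmas~\ref{lem:purebraid}--\ref{lem:purebraid2} to write, in the basis $\{\hat\chi\}_{\chi \in \colGo}$ of \eqref{eq:V_gn-basis}, every generator of $\PMod_{g,n}^{\mathrm{un}}$ as a matrix with entries in $R = \mathbb{Z}[\zeta, \tfrac{1}{2}]$. Denominators $1/2$ arise only from the $S$-transformations \eqref{eq:Saction-1}--\eqref{eq:Saction-sigma}. Each generator individually has finite order: the Dehn twists along the curves of Figure~\ref{fig:mcggenerators-w-insertions} are in this basis a permutation composed with multiplication by twist eigenvalues, all of which are $16$-th roots of unity, while $S_m$ is of order dividing $8$ because $S^2 = \id{}$ for the Ising $S$-matrix~\eqref{eq:IsingSmatrix} on the sector $\chi(K_m) = \unit$, and $S_m$ acts as the scalar $\zeta^2$ on the sector $\chi(K_m) = \varepsilon$ by \eqref{eq:Saction-sigma2}.

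Archimedean boundedness is established by putting on $V_{g,n}$ the positive-definite Hermitian form making the basis $\{\hat\chi\}$ orthonormal and verifying, for each generator, that it preserves this form up to a scalar of modulus one after every embedding $\mathbb{Q}(\zeta) \hookrightarrow \mathbb{C}$. For the Dehn twists this is clear (they are unitary up to a phase by the previous paragraph); for $S_m$ it follows from the unitarity of the Ising $S$-matrix under the natural embedding, together with the invariance of this property under the Galois group $\mathrm{Gal}(\mathbb{Q}(\zeta)/\mathbb{Q})$ acting by permuting the $\phi(16)/2 = 4$ archimedean places of $R$. Thus at each archimedean place the image is contained in the compact group $\mathrm{U}(V_{g,n})$.

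The hard part, and the main obstacle, is boundedness at the unique non-archimedean place of $R$ above $2$, because $S_m$ carries the explicit factor $1/2$ and repeated composition of $S$'s could a priori drive $2$-adic valuations to $-\infty$. My plan is to construct an $\PMod_{g,n}^{\mathrm{un}}$-invariant $\mathbb{Z}[\zeta]$-lattice $\Lambda \subset V_{g,n}$ by rescaling each basis vector $\hat\chi$ by a power of $\lambda = \zeta^2 + \zeta^{-2}$ determined by the combinatorics of $\chi$ (e.g.\ the number of $\sigma$-labelled edges of $\Gamma$). The identity $\lambda^2 = 2$ is precisely what allows such a rescaling to cancel the $1/2$ of $S_m$ against the $1/\lambda$ entries of the non-trivial $F$- and $B$-matrices \eqref{eq:Ising-B-matrix-2} that appear exactly when $S_m$ changes the fusion structure across sectors. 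Verifying lattice invariance reduces to a finite combinatorial check on each of the cases in \eqref{eq:Saction-1}--\eqref{eq:Saction-sigma2}, \eqref{eq:Tgamma-sigma}, and Lemmas~\ref{lem:purebraid}--\ref{lem:purebraid2}; the non-trivial accounting takes place entirely in the rank-$2$ $F^{(\sigma\sigma\sigma)\sigma}$-block. Once $\Lambda$ is constructed, the image lies in $\mathrm{GL}_{\mathbb{Z}[\zeta]}(\Lambda)$, giving $2$-adic boundedness, and combining with the archimedean bound yields that the image is a bounded subgroup of $\mathrm{GL}_N(R)$, hence finite. An alternative route, which bypasses the lattice construction in the unitary Ising cases, is to identify the representation with the natural $\PMod_{g,n}^{\mathrm{un}}$-action on a finite-dimensional module over a Clifford-type finite group extending the higher-genus generalisation of the Jones representation at level $2$; the remaining (non-unitary) Ising MFCs are then handled by Galois equivariance, since all $16$ variants produce the same matrix representation over $R$ up to the choice of embedding of $\zeta$ and of the pivotal sign $\nu$.
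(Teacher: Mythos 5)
Your overall strategy is genuinely different from the paper's: you want to realise the image inside $\mathrm{GL}_N(\mathbb{Z}[\zeta,\tfrac12])$ and deduce finiteness from boundedness at all places, whereas the paper exhibits an explicit finite $\PMod_{g,n}$-invariant spanning set $X_{g,n}\subset V_{g,n}$ and observes that the image embeds into the permutations of $X_{g,n}$. The integrality of the matrix entries over $\mathbb{Z}[\zeta,\tfrac12]$ and the archimedean boundedness (unitarity of all generators in the orthonormal basis $\{\hat\chi\}$ at every embedding, using that the Ising $S$-matrix \eqref{eq:IsingSmatrix} is real orthogonal and that all other generators are monomial with root-of-unity entries) are correct, and the general principle that a subgroup of $\mathrm{GL}_N$ over a ring of $S$-integers which is bounded at every place is finite is sound. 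The problem is the step you yourself flag as the main obstacle: boundedness at the prime above $2$.

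The invariant lattice you propose --- the $\mathbb{Z}[\zeta]$-span of rescaled basis vectors $\lambda^{m(\chi)}\hat\chi$ --- does not exist, for any choice of exponents, and indeed for any diagonal rescaling whatsoever. This is already visible on the torus: conjugating a matrix by a diagonal matrix leaves the product $M_{\unit\varepsilon}M_{\varepsilon\unit}$ of off-diagonal entries unchanged, and for the Ising $S$-matrix this product equals $\tfrac14$, which has negative valuation at the unique prime $(\lambda)$ above $2$ in $\mathbb{Z}[\zeta]$ (recall $(\lambda)^2=(2)$). Hence no diagonal change of basis can make $S$ integral at $2$, and your ``finite combinatorial check'' would fail at the very first case \eqref{eq:Saction-1}. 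What is needed is a \emph{non-diagonal} recombination: in the basis $e_\pm=e_\unit\pm e_\varepsilon$, $\lambda e_\sigma$ the torus generators become monomial matrices with root-of-unity entries, and the $\mathbb{Z}[\zeta]$-span of these vectors is the invariant lattice you want. The higher-genus analogue of this recombination is exactly the paper's construction of the vectors $[\tilde\chi]$ indexed by $\tcolGo$ and the set $X_{g,n}=\{\theta_\sigma^k\lambda^{m_{\tilde\chi}}[\tilde\chi]\}$, whose invariance is the entire content of Proposition~\ref{prop:X_g0} and its generalisation; its $\mathbb{Z}[\zeta]$-span supplies the lattice your argument is missing. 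So your framework can be completed, but the missing ingredient is not a routine bookkeeping exercise in the $F^{(\sigma\sigma\sigma)\sigma}$-block --- it is the $\pm$-basis construction that constitutes the heart of the paper's proof. (Your alternative route via Clifford-type finite groups is only a pointer to the genus-zero literature and is not developed enough to assess for higher genus and framed points.)
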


To prove that Ising categories have property F, we will give for any surface $\Sigma_{g,n}$ with simple point labels  $l_1,\dots, l_n$ a certain set $X\equiv X_{g,n}(l_1,\dots, l_n)\subset V_{g,n}$ such that: 

\begin{enumerate}
    \item $X$ is finite 
    \item $X$ spans $V(\Sigma_{g,n})$
    \item $X$ is $\mathrm{PMod}(\Sigma_{g,n})$-invariant
\end{enumerate}
This is sufficient to show property F. The representation image of $\PMod(\Sigma_{g,n})$ is contained in the subgroup of transformations that preserve $X$ by the invariance condition on $X$, \ie $V(\PMod(\Sigma_{g,n})) \subset \{T ~|~ T\in \operatorname{GL}(V_{g,n}), T(X) = X \}$.  
The latter group is a subgroup of the group of bijections on $X$, which follows from the fact that $X$ spans $V_{g,n}$. Since $X$ is a finite set and the group of bijections consists of $|X|!$ elements, the image $V(\PMod(\Sigma_{g,n}))$ has finitely many elements.   

\begin{remark}\label{rem:PropF-non-pure}
\begin{enumerate}
\item As already mentioned in Remark~\ref{rem:propF}, property F
with respect to the pure mapping class group $\PMod$ is equivalent to property F with respect to the non-pure mapping class group $\Mod^\mathcal{C}$.
A similar statement holds with respect to the existence of finite invariant subsets which span the vector space on which the mapping class group acts. 
Namely, consider a surface $\Sigma_{g,n}$ carrying the same point labels $l_1= \dots= l_n$ and a finite $\PMod_{g,n}$-invariant set $X_{g,n}$. Pick a section $s: S_n \rightarrow \Mod_{g,n}$ of the short exact sequence \eqref{eq:ses-non-pure} and define the set 
\be
X_{g,n}^s := \{s(t).x\,|\, x \in X_{g,n},t\in S_n\}\subset V_{g,n}~.
\ee
This set is also finite as both $X_{g,n}$ and $S_n$ are finite. Let $f\in \Mod_{g,n}$ be a mapping class and $t\in S_n$ any permutation. Then, $s(\pi(f)\circ t^{-1})\circ f \circ s(t)\in \ker(\pi) = \PMod_{g,n}$, \ie there exists $g\in \PMod_{g,n}$ such that $f\circ s(t) = s(\pi(f)\circ t)\circ g$. Then, for some $x\in X_{g,n}$ we have 
\be
f\circ s(t).x = s(\pi(f)\circ t)\circ g.x = s(\pi(f)\circ t).x'
\ee 
for some $x'\in X_{g,n}$ which is provided by the $\PMod_{g,n}$-invariance. 
Thus for all $y \in X^s_{g,n}$ also $f.y \in X^s_{g,n}$.

The same can be applied for distinct point labels, when $\Mod_{g,n}$ is the extension of $\PMod_{g,n}$ by some subgroup of the permutation group $S_n$.

\item Similarly, property F with respect to the projective $\Mod(\Sigma)$-action is equivalent to property F with respect to the non-projective action of the extended mapping class group $\widehat{\Mod}(\Sigma)$. This is because the anomaly factor $p_+/p_-$ is a root of unity as pointed out in Remark~\ref{rem:anomalies}. For Ising, that is $p_+/p_- = \theta_{\sigma}^2$.
\end{enumerate}
\end{remark}

Before proving Theorem \ref{thm:Ising-propF} in its generality, we will start with the torus case and the sphere with four framed points to give an idea of how the elements of $X_{g,n}$ will look like. 

\subsubsection*{Torus Case}\label{sec:torus}

The mapping class group of the torus $\PMod_{1,0} = \Mod_{1,0}$ is isomorphic to the group $\mathrm{SL}(2,\mathbb{Z})$ with generators $T$ and $S$. The two generators consist of the Dehn twist $T_\alpha$ along the single meridian $\alpha$, see Figure \ref{fig:torus}, and the $S$-transformation. 

\begin{figure}[t]
\centering
\boxpic{1}{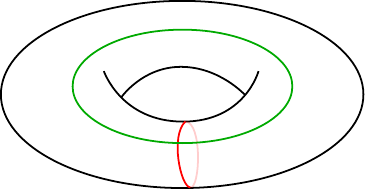}{
\put (50,-4) {$\alpha$}
\put (15,25) {$\beta$}
}
\caption{Dehn twists which generate the mapping class group of the torus}
\label{fig:torus}
\end{figure}

The state space $V_{1,0}$ has basis elements\vspace{1ex} 
\begin{equation}
	e_i ~=~ \boxpic{1}{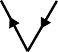}{
	\put (-2,100) {$i$}
\put (88,100) {$i$}
}	\ ,
\end{equation}
where $i\in I=\{\unit,\varepsilon,\sigma\}$ and the action of the generators as described in section \ref{subsec:ising-mcg-action} is given explicitly as: 

\begin{align}
    T_\alpha (e_i) &= \theta_i e_i \nonumber\\
    S(e_\unit) &= \frac{1}{2}(e_\unit + e_\varepsilon + \lambda e_\sigma) \nonumber\\
    S(e_\varepsilon) &= \frac{1}{2}(e_\unit + e_\varepsilon - \lambda e_\sigma) \nonumber\\
    S(e_\sigma) &= \frac{\lambda}{2}(e_\unit - e_\varepsilon)
\end{align}

Now, for the elements $e_\pm:= e_\unit \pm e_\varepsilon$ one can easily check using the above equations that 
\be\label{eq:e_pm-alpha}
T_\alpha e_\pm  = e_\mp,\quad  T_\alpha \lambda e_\sigma = \nu \zeta^{-1} \lambda e_\sigma
\ee
and 
\be
S(e_+) = e_+, \quad S(e_-) = \lambda e_\sigma,\quad S(\lambda e_\sigma) = e_- \ ,
\ee
where the last equality also holds trivially from the fact that $S^2 = \id{}$.
Therefore, the set $X_{1,0}:=\{\theta_\sigma^me_\pm, \theta_\sigma^m \lambda e_\sigma\,| \, m \in \mathbbm{Z}_{16}\}$ is invariant under the generators $T_\alpha$ and $S$ and thus invariant under the projective $\SL$-action.

\subsubsection*{Sphere with four framed points}

We continue by considering the sphere with four framed points with labels $l_1,\dots, l_4$. This is the smallest number of points when the state space $V$ can have $\dim(V)>1$. 

The pure mapping class group of the sphere is generated by Dehn twists $T_{\tau_1}$ and $T_{\tau_2}$ along the curves $\tau_1$ and $\tau_2$ shown in Figure \ref{fig:4sphere} as well as the Dehn twists around each framed point.

\begin{figure}[t]
    \centering
    \boxpic{0.8}{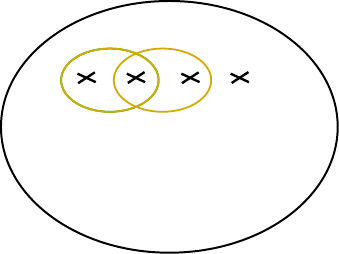}{
    \put (29,35) {$\tau_1$}
    \put (48,35) {$\tau_2$}
}
    \caption{Generators of the (pure) mapping class group of the sphere with four framed points.}
    \label{fig:4sphere}
\end{figure}

The basis elements of $V_{0,4}$ are 
\begin{equation}
	e^{d_3,d_4} ~= ~ \boxpic{1}{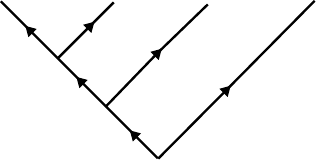}{
		\put (-1,51) {$l_1$}
		\put (65,51) {$l_3$}
		\put (99,51) {$l_4$}
		\put (20,17) {$d_3$}
		\put (36,2) {$d_4$}
		\put (35,51) {$l_2$}
	}
\end{equation}
where $d_3$ and $d_4$ run over admissible labels. In particular, $d_4= l_4$ is the only admissible label and thus we have only dependence on $d_3$. If $l_1 = \dots = l_4 =\sigma$, then $\unit$ and $\varepsilon$ are both admissible labels for $d_3$.  

The action on the basis is given explicitly as
\be 
 T_{\tau_1} e^{d_3} =
 \theta_{d_3} e^{d_3}
\ee
and
\be
T_{\tau_2} e^{d_3}= 
\sum_{d_3', q}{
	\theta_q
	F^{(l_1 l_2 l_3)l_4}_{d_3 q} G^{(l_1 l_2 l_3)l_4}_{q d_3'} e^{d_3'}}~.
\ee
For the action of $T_{\tau_2}$, we notice that if any of the point labels is distinct from $\sigma$, then $e^{d_3}$ is an eigenvector (as the state space is one-dimensional) and the eigenvalue is expressed as an integer power of $\theta_\sigma$. However, if $l_1=\dots= l_4=\sigma$ then $\unit$ and $\varepsilon$ are both admissible labels for $d_3$ and we compute by using twists and F-matrices\footnote{Alternatively, notice that $\theta_\sigma^2 (B^{(\sigma \sigma \sigma)\sigma})^2 = \begin{pmatrix}
		0 & 1\\
		1& 0
\end{pmatrix}$ using \eqref{eq:Ising-B-matrix-2}.}
\be
T_{\tau_2}e^{\varepsilon_a} = \frac{1}{2}(1+ (-1)^{a+1})e^{\unit} + \frac{1}{2}(1+ (-1)^a)e^\varepsilon = e^{\varepsilon_{a+1}}~.
\ee

The above calculations imply that 
\be\label{eq:X-4-sphere} 
X_{0,4} = \{\theta_\sigma^k e^{d_3}\,|\, k\in \mathbbm{Z}_{16}, d_3\}
\ee
is $\PMod_{0,4}$-invariant. Notice however that it is not $\Mod_{0,4}$-invariant (see Remark~\ref{rem:PropF-non-pure}). 

\subsubsection*{Closed Surfaces}

We choose to prove property F with respect to closed surfaces as the proof can be easily extended to the case with framed points. Recall the basis elements obtained from colourings of the graph $\Gamma_{g,0}$ in \eqref{eq:basisgraph}. For any colouring $\chi \in \colGo$ we consider an alternate colouring $\Tilde{\chi}$, in which we change every $J_m$ labelled by $\unit$ to the plus label $+$ and every $J_m$ labelled by $\varepsilon$ to the minus label $-$. This gives a new colouring set $\tcolGo$ consisting of these alternate colourings. Let $\Tilde{\chi}$ be a colouring in $\tcolGo$ with $J_{m_1},\dots,J_{m_L}$ the maximal subsequence of edges labelled by signs $\pm$ with the rest of $I$-edges labelled by $\sigma$. We define the associated vector in $V_{g,0}$ by  
\be\label{eq:Tildechi} 
[\Tilde{\chi}] = \sum_{p_1,\dots,p_L= 0,1}{\Tilde{\chi}(J_{m_1})^{p_1}\cdots \Tilde{\chi}(J_{m_L})^{p_L}\hat{\chi}|_{J_{m_1}=\varepsilon_{p_1},\dots, J_{m_L} = \varepsilon_{p_L}}} \ ,
\ee
where $ \hat{\chi}|_{J_{m_1}=\varepsilon_{p_1},\dots, J_{m_L} = \varepsilon_{p_L}}$ is the basis element corresponding to the colouring in $\colGo$ obtained from $\Tilde{\chi}$ by changing the indicated $\pm$-labelled $I$ edges accordingly. This notation is the higher genus analogue of $e_\pm$ for the torus case. 

\begin{proposition}\label{prop:X_g0}
The finite set \[X_{g,0} = \{\theta_\sigma^k \lambda^{m_{\Tilde{\chi}}} [\Tilde{\chi}]\,| \, k\in \mathbbm{Z}_{16}, \Tilde{\chi}\in \tcolGo\} \ ,\]where $m_{\Tilde{\chi}}= |\{k\,|\, \Tilde{\chi}(J_k) = \sigma\}|$, spans the state space and is $\Mod_{g,0}-$invariant. 
\end{proposition}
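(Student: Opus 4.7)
The plan is to verify in turn the three requirements listed immediately after the proposition. Finiteness is immediate: $\tcolGo$ is finite and the prefactors $\theta_\sigma^k$ with $k\in\mathbbm{Z}_{16}$ contribute a further factor of $16$. For spanning, fix a ``background'' colouring of the $K$- and $A$-edges together with an assignment of which $J$-edges are labelled by $\sigma$; then the $2^L$ vectors $[\tilde\chi]$ obtained by varying the signs on the remaining $L$ sign-labelled $J$-edges are related to the $2^L$ basis vectors $\hat\chi|_{J_{m_l}=\varepsilon_{p_l}}$ by the tensor power $H^{\otimes L}$ of the Hadamard matrix $\bigl(\begin{smallmatrix}1&1\\1&-1\end{smallmatrix}\bigr)$. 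Since $H^{\otimes L}$ is invertible and the scalars $\theta_\sigma^k$ and $\lambda^{m_{\tilde\chi}}$ are all non-zero, the linear span of $X_{g,0}$ coincides with that of $\{[\tilde\chi]\}_{\tilde\chi\in\tcolGo}$, which in turn equals $V_{g,0}$.

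For $\Mod_{g,0}$-invariance I first observe that for closed surfaces $\Mod_{g,0}=\PMod_{g,0}$ (no points to permute), generated by $T_{\alpha_k}$, $T_{\gamma_k}$ and the transformations $S_k$ (which replace $T_{\beta_k}$). I would then check invariance on each generator by direct calculation using the formulas from Section~\ref{subsec:ising-mcg-action}. For $T_{\alpha_k}$, equation \eqref{eq:alpha-action} gives a factor $\theta_\sigma$ when $\tilde\chi(J_k)=\sigma$, while when $\tilde\chi(J_k)=\pm$ the sign $(-1)^{p_k}$ absorbs into the coefficient $s_{m_j}^{p_k}$ and flips the sign at $J_k$, producing another $[\tilde\chi']$ with $m_{\tilde\chi'}=m_{\tilde\chi}$. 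For $T_{\gamma_k}$, equations \eqref{eq:Tgamma-nonsigma}/\eqref{eq:Tgamma-sigma} split into three subcases: (a) both $J_k,J_{k+1}$ sign-labelled, where the twist $(-1)^{p_k+p_{k+1}}$ flips both signs; (b) exactly one of them a sign and the other $\sigma$, giving an overall factor $\theta_\sigma$; (c) both $\sigma$, shifting $K_k,K_{k+1},A_{k+1}$ by $\varepsilon$ and leaving $m_{\tilde\chi}$ unchanged. In all three subcases the result remains in $X_{g,0}$.

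The main case is the invariance under $S_k$, which mixes basis vectors with different $J_k$-labels. Via \eqref{eq:Saction-1}--\eqref{eq:Saction-sigma2} one distinguishes three subcases. If $\tilde\chi(K_k)=\varepsilon$ (forcing $\tilde\chi(J_k)=\sigma$), then $S_k[\tilde\chi]=\zeta^2[\tilde\chi]=\theta_\sigma^{-2}[\tilde\chi]$. If $\tilde\chi(K_k)=\unit$ and $\tilde\chi(J_k)=\sigma$, one expands the sum and recognises $\sum_q (-1)^q \hat\chi|_{J_k=\varepsilon_q}\cdot(\text{rest})$ as $[\tilde\chi']$ with the $\sigma$ at $J_k$ replaced by $-$, giving $S_k[\tilde\chi]=\tfrac{\lambda}{2}[\tilde\chi']$; since $m_{\tilde\chi'}=m_{\tilde\chi}-1$ and $\lambda^2=2$, one checks that $\lambda^{m_{\tilde\chi}}S_k[\tilde\chi]=\lambda^{m_{\tilde\chi'}}[\tilde\chi']$. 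If $\tilde\chi(K_k)=\unit$ and $\tilde\chi(J_k)=s\in\{+,-\}$, the partial sums $\sum_{p_k=0,1}s^{p_k}=1+s$ and $\sum_{p_k=0,1}s^{p_k}(-1)^{p_k}=1-s$ reduce $S_k[\tilde\chi]$ to $[\tilde\chi]$ itself when $s=+$ and to $\nu\lambda[\tilde\chi'']$ (where $\tilde\chi''(J_k)=\sigma$) when $s=-$; the latter fits inside $X_{g,0}$ after rescaling, using that $\nu\in\{\theta_\sigma^0,\theta_\sigma^8\}$ by virtue of $\theta_\sigma^8=\zeta^{-8}=-1$.

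The hard part will be organising these case distinctions for $S_k$ without errors, since one must track simultaneously the sign reassignments, the shift in $m_{\tilde\chi}$, and the normalisation factor $\tfrac{1}{2}$, and verify that every outcome lies in the prescribed form $\theta_\sigma^k\lambda^{m_{\tilde\chi'}}[\tilde\chi']$. Once this bookkeeping is complete, the three properties together imply, as explained just before the proposition, that $V(\PMod_{g,0})=V(\Mod_{g,0})$ embeds into the finite permutation group of $X_{g,0}$, and is therefore finite, establishing property F for closed surfaces.
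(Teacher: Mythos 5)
Your proposal is correct and follows essentially the same route as the paper's proof: spanning by inverting the $\pm$-transform (your Hadamard-matrix phrasing is just the explicit form of the paper's inversion of \eqref{eq:Tildechi}), and invariance checked generator by generator with exactly the same case distinctions for $T_{\alpha_k}$, $S_k$ and $T_{\gamma_k}$. Your explicit tracking of the factor $\nu\in\{\theta_\sigma^0,\theta_\sigma^8\}$ in the $S_k$, $\tilde\chi(J_k)=-$ case is a small but welcome refinement of the bookkeeping.
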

\begin{proof}
\begin{enumerate}
    \item The fact that this set spans the state space comes from the definition of the $\pm$-notation. Let $\hat\chi$ be some basis element of the state space and let $J_{m_1},\dots, J_{m_L}$ be the maximal subsequence of $I$-edges such that $\chi(J_{m_1}) = \varepsilon_{p_1},\dots, \chi(J_{m_L}) = \varepsilon_{p_L}$ (all non-sigma labels). Then, by inverting \eqref{eq:Tildechi}: 
\[ \hat\chi = \frac{1}{2^L}\sum_{\nu_1,\dots,\nu_L = \pm}{\nu_1^{p_1}\cdots \nu_L^{p_L} [\Tilde{\chi}|_{J_{m_1}= \nu_1,\dots, J_{m_L}=\nu_L}]} \ ,\]
where $\Tilde{\chi}|_{J_{m_1}= \nu_1,\dots, J_{m_L}=\nu_L}$ is the alternate colouring obtained from $\chi$ by changing non $\sigma$-labels of $I$-edges into $\pm$-labels.  

\item We now prove $\Mod_{g,0}$-invariance. 
It is invariant under $T_{\alpha_m}$'s as on the basis, one has from \eqref{eq:alpha-action}: 
\[T_{\alpha_m} \hat{\chi} = \theta_{\chi(J_m)}\hat{\chi}\] and from the observation in \eqref{eq:e_pm-alpha} it follows for an element of $X_{g,0}$:
\[ T_{\alpha_m} \theta_\sigma^k \lambda^{m_{\Tilde{\chi}}} [\Tilde{\chi}] = \left\{
\begin{array}{cc}
    \theta_\sigma^{k+1} \lambda^{m_{\Tilde{\chi}}} [\Tilde{\chi}] &\text{if}\quad\Tilde{\chi}(J_m) = \sigma \\
    \theta_\sigma^k \lambda^{m_{\Tilde{\chi}}} [\Tilde{\chi}|_{J_m = -\Tilde{\chi}(J_m)}] &\text{if}\quad\Tilde{\chi}(J_m) = \pm\\
\end{array}\right.\]
Recall from \eqref{eq:Saction-1}, \eqref{eq:Saction-sigma} and \eqref{eq:Saction-sigma2} how $S_m$ acts on our fixed basis. Then, one can easily check that the action on the set elements is given by 
\[ S_m \theta_\sigma^k \lambda^{m_{\Tilde{\chi}}} [\Tilde{\chi}] = \left\{\begin{array}{cc}
    \theta_\sigma^k \lambda^{m_{\Tilde{\chi}}} [\Tilde{\chi}] & \text{if}\quad\Tilde{\chi}(J_m) = + \\
    \theta_\sigma^k \lambda^{m_{\Tilde{\chi}} +1} [\Tilde{\chi}|_{J_m = \sigma}] & \text{if}\quad\Tilde{\chi}(J_m) = -\\
    \theta_\sigma^k \lambda^{m_{\Tilde{\chi}} -1} [\Tilde{\chi}|_{J_m = -}] & \text{if}\quad\Tilde{\chi}(J_m) = \sigma, \Tilde{\chi}(K_m) = \unit\\
    \zeta^2\theta_\sigma^k \lambda^{m_{\Tilde{\chi}}} [\Tilde{\chi}|_{J_m =\sigma}] &\text{if}\quad \Tilde{\chi}(J_m) = \sigma, \Tilde{\chi}(K_m) =\varepsilon
\end{array}\right. \]
which means that the set $X_{g,0}$ is $S_m$-invariant.

Recall from \eqref{eq:Tgamma-nonsigma} and \eqref{eq:Tgamma-sigma} how $T_{\gamma_m}$ acts on the fixed basis. This results in \[ T_{\gamma_m} \theta_\sigma^k \lambda^{m_{\Tilde{\chi}}} [\Tilde{\chi}] = \left\{\begin{array}{cc}
    \theta_\sigma^k \lambda^{m_{\Tilde{\chi}}} [\Tilde{\chi}|_{J_m = -\Tilde{\chi}(J_m), J_{m+1} = -\Tilde{\chi}(J_{m+1})}] &\text{if}\quad \Tilde{\chi}(J_m),\Tilde{\chi}(J_{m+1}) = \pm \\
    \theta_\sigma^{k+1} \lambda^{m_{\Tilde{\chi}}} [\Tilde{\chi}]
    & \text{if}\quad\Tilde{\chi}(J_m) = \pm, \Tilde{\chi}(J_{m+1}) = \sigma\\
    \theta_\sigma^k \lambda^{m_{\Tilde{\chi}}} [\Tilde{\chi}|_{\{K_m,K_{m+1},A_{k+1}\}\otimes \varepsilon}] & \text{if}\quad\Tilde{\chi}(J_m)=\Tilde{\chi}(J_{m+1}) = \sigma\\
\end{array}\right. \]
This concludes $T_{\gamma_m}-$invariance.

Invariance under the generators implies invariance under $\Mod_{g,0}$, even though the representation is only projective. This is because the projective factors can be expressed as integer powers of $\theta_\sigma$ as discussed in Remark~\ref{rem:PropF-non-pure}.
\end{enumerate}
\end{proof}

\subsubsection*{General Case}

The next statement completes the proof of Proposition~\ref{thm:Ising-propF}.

\begin{proposition}
The finite set \[X_{g,n} = \{\theta_\sigma^k \lambda^{m_{\Tilde{\chi}}} [\Tilde{\chi}]\,| \, k\in \mathbbm{Z}_{16}, \Tilde{\chi}\in \tcolGo\} \ ,\]where $m_{\Tilde{\chi}}= |\{k\,|\, \Tilde{\chi}(J_k) = \sigma\}|$, spans the state space and is $\PMod_{g,n}-$invariant. 
\end{proposition}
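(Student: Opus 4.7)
The plan is to extend the proof of Proposition~\ref{prop:X_g0} by adding the generators of $\PMod_{g,n}$ specific to the framed points, namely the Dehn twists $T_{\lambda_k}$ around each framed point and the Dehn twists $T_{\delta_k}$ along the curves of Figure~\ref{fig:mcggenerators-w-insertions}, while observing that the spanning argument and the invariance analysis for the ``genus part'' generators $T_{\alpha_k}$, $T_{\gamma_k}$, $S_k$ transfer verbatim. For spanning, the signed-average inversion
\[
\hat\chi ~=~ \frac{1}{2^L}\sum_{\nu_1,\ldots,\nu_L=\pm} \nu_1^{p_1}\cdots\nu_L^{p_L}\,[\tilde\chi|_{J_{m_1}=\nu_1,\ldots,J_{m_L}=\nu_L}]
\]
is unaffected by the framed points, since the $\pm$-substitution only acts on the $J$-edges of $\Gamma$, so the vectors $\lambda^{m_{\tilde\chi}}[\tilde\chi]$ already span $V_{g,n}$. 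For invariance, the generators $T_{\alpha_k}$, $T_{\gamma_k}$, $S_k$ act only on the edges $J_k, K_k, A_k$ and leave the inert $L_i$-strands alone, so the case analysis of Proposition~\ref{prop:X_g0} applies unchanged. The twist $T_{\lambda_k}$ rescales by $\theta_{l_k}$, which is a power of $\theta_\sigma$ (using $\theta_\unit=\theta_\sigma^0$ and $\theta_\varepsilon=\theta_\sigma^8$, since $\theta_\sigma^8=\zeta^{-8}=-1$, together with $\theta_\sigma=\theta_\sigma^1$), hence absorbed by the $\theta_\sigma^k$ factor.

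The genuinely new content is the action of $T_{\delta_m}$. By tensoriality of the twist, $T_{\delta_m}$ decomposes as the product of individual twists $\theta_{l_1}\cdots\theta_{l_m}$ times the total pure braid $P_m$ on the strands $l_1,\ldots,l_m$; the scalar factor is handled as for $T_{\lambda_k}$. Decompose $P_m$ into pairwise pure braids $T_{ij}$. When $\{l_i,l_j\}\ne\{\sigma,\sigma\}$, each $T_{ij}$ is a scalar $\theta_{l_i\otimes l_j}\theta_{l_i}^{-1}\theta_{l_j}^{-1}$ which is a power of $\theta_\sigma$. When $l_i=l_j=\sigma$, Lemma~\ref{lem:purebraid} gives a phase $\zeta^2\zeta^{4m'}(-1)^{p(\chi(D_{j+1}))}$ combined with a shift of the D-labels $\chi(D_{i+1}),\ldots,\chi(D_j)$ by $\varepsilon$; the factor $\zeta^2=\theta_\sigma^{-2}$ and the sign $(-1)^p=\theta_\sigma^{8p}$ are already within the allowed rescalings. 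Lemma~\ref{lem:purebraid2} handles the analogous subcase when the braid involves the genus strand $J_1$.

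The main obstacle is the bookkeeping that reassembles the action on a vector $\lambda^{m_{\tilde\chi}}[\tilde\chi]$ into an element of the form $\theta_\sigma^{k'}\lambda^{m_{\tilde\chi'}}[\tilde\chi']$. The sign $(-1)^{p(\chi(D_{j+1}))}$ from Lemma~\ref{lem:purebraid} depends on the D-label at position $j+1$, which is determined via the Ising fusion rules by the neighbouring $J$-labels, so this sign varies across the summands in the defining sum \eqref{eq:Tildechi} of $[\tilde\chi]$. I would show that flipping $\tilde\chi(J_{m_\ell})$ between $+$ and $-$ toggles the parity of the relevant D-label, so that the inhomogeneous sign combines with the explicit signs $\tilde\chi(J_{m_\ell})^{p_\ell}$ present in \eqref{eq:Tildechi} to reassemble the sum into a new $[\tilde\chi']$ corresponding to a modified alternate colouring $\tilde\chi'\in\tcolGo$ obtained from $\tilde\chi$ by a combination of D-, K-, $A$- and sign shifts. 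Since $\tcolGo$ is closed under such modifications and the accumulated overall phases are powers of $\theta_\sigma$ and $\lambda$, the resulting vector again lies in $X_{g,n}$, completing the invariance check.
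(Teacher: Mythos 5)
Your proposal is correct and takes essentially the same route as the paper: invariance under the genus generators $T_{\alpha_k},S_k,T_{\gamma_k}$ is inherited verbatim from Proposition~\ref{prop:X_g0}, the point twists $T_{\lambda_k}$ contribute powers of $\theta_\sigma$, and invariance under the $T_{\delta}$ generators is reduced via tensoriality of the twist to pure braids, which are controlled by Lemmas~\ref{lem:purebraid} and~\ref{lem:purebraid2}. One small clarification on your ``main obstacle'': the sign $(-1)^{p(\chi(D_{j+1}))}$ is in fact \emph{constant} across the summands of $[\Tilde{\chi}]$, because the $D$-labels are fixed data of $\Tilde{\chi}$ and only the $J$-labels are summed over in \eqref{eq:Tildechi}; the genuine $J$-dependence enters only through braids involving the $J_1$-strand (where the scalar differs by a sign between $\chi(J_1)=\unit$ and $\chi(J_1)=\varepsilon$, flipping the $\pm$-label of $J_1$), which your reassembly argument already accommodates.
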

\begin{proof}
The invariance under the generators $T_{\alpha_k}, S_k$ and $T_{\gamma_k}$ follows in exactly the same way as in Proposition~\ref{prop:X_g0} as the action of these generators does not depend on or affect the labels of $L_1,\dots,L_n,D_3,\dots,D_{n+1}$. It only remains to find out what happens when we act with the $T_\delta$ generators. 

To show invariance under $T_{\delta_k}$'s, it is sufficient to prove that the set is invariant under pure braids in $L_{1},\dots, L_{n}, J_1$ strands. The proof follows from Lemmata~\ref{lem:purebraid} and \ref{lem:purebraid2}.
\end{proof}

\section{Applications to 3D gravity}\label{sec:physics-review}

In this section we discuss two applications of the irreducibility and finiteness properties of mapping class group representation to 3D\,gravity. The first application concerns attempts to give meaning to the gravity partition function, and the second application concerns invertible global symmetries.

\subsection{Path integral formulation of quantum gravity}

For a fixed conformal boundary, \ie a Riemann surface $\Sigma$, 
the partition function of pure Einstein gravity takes the form
\be\label{eq:path-int}
Z_\mathrm{grav}(\Sigma) = \sum_{M\text{ topologies}}\int{\mathcal{D}g\, e^{iS[g]}} \ ,
\ee
where $S[g]$ is the Einstein-Hilbert action, the sum is over diffeomorphism classes of smooth $3$-manifolds $M$ with conformal boundary $\Sigma$ and we perform a path integral over Riemannian metrics $g$ with conformal boundary $\Sigma$. 
There are various approaches with which one can try and make sense of \eqref{eq:path-int}. 

\medskip

In one and two dimensions, the sum over topologies can be performed explicitly. This case has been studied in detail in \cite{BMoore22}.

\medskip

In three dimensions, as in \cite{Maloney:2007ud,Castro:2011zq} our starting assumption will be that the path integral in \eqref{eq:path-int} produces an amplitude for a 2D\,CFT assigned to the fixed conformal boundary $\Sigma$. This is motivated by the $\AdS/\operatorname{CFT}_2$ correspondence and by the observation that the Poisson algebra of certain diffeomorphisms of $\AdS$ gives two copies of the Virasoro algebra of central charge $c = \frac{3l}{2G}$ \cite{BH}, where $l$ is the AdS-radius and $G$ is Newton's constant.

We formalise this as follows: Write $W(\Sigma)$ for the vector space of bilinear combinations of holomorphic and anti-holomorphic conformal blocks
of the 2D\,CFT. In the RCFT examples we will turn to shortly, this is finite-dimensional, but more generally it could be infinite. Let $Z(M) \in W(\Sigma)$ be the CFT amplitude which replaces the path integral over metrics for a fixed $M$ in \eqref{eq:path-int}. 
The sum over $M$ in \eqref{eq:path-int} is by definition the same as the sum over isomorphism classes of bordisms $M : \emptyset \to \Sigma$, i.e.\ a sum over $\mathrm{Bord}(\emptyset,\Sigma)$, the hom-space in the bordism category as in Section~\ref{sec:RT}.
Thus \eqref{eq:path-int} becomes
\be\label{eq:bordism-sum}
Z_\mathrm{grav}(\Sigma) = \sum_{M \in \mathrm{Bord}(\emptyset,\Sigma)} Z(M)
 \ .
\ee
As it stands, this sum is still ill-defined.

\begin{figure}[bt]
    \centering
    \begin{equation*}
    	M_\gamma =~ \boxpic{0.8}{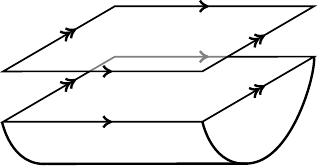}{
    		\put (95,38) {$\uparrow$}
    		\put (103,40) {$\gamma$} 
    	}
    \end{equation*}
    \caption{Gluing the solid torus to the boundary torus via a mapping class group element element $\gamma \in \mathrm{SL}(2,\mathbb Z)$.}
    \label{fig:M-gamma}
\end{figure}    

The semi-classical approach (corresponding to large central charge of the boundary CFT) of \cite{Maloney:2007ud} suggests that the dominating contributions to \eqref{eq:path-int} are the classical solutions. When $\Sigma$ is a torus, these can be obtained by elements of $\SL$ which specify how thermal $\AdS$ (a solid torus) is glued to the conformal boundary (Figure~\ref{fig:M-gamma}). This results in a sum over the mapping class group of some initial value (the seed), for example the value assigned to thermal $\AdS$. Such mapping class group averages, also for more general geometries, have been investigated in \cite{Castro:2011zq,Jian:2019ubz,Meruliya:2021utr,Raeymaekers:2021ANO}.
As already mentioned in Remark~\ref{rem:mcg-avg-def}, if the boundary CFT is rational, for the torus the sum over $\SL$ can be reduced to a finite sum. One finds, depending on the RCFT one starts from, that the result is the partition function of a 2D\,CFT, or an ensemble average, or a bilinear combination of characters that cannot be interpreted as an ensemble. See also \cite{CollierM} for ensemble averages over a continuous moduli space of CFTs.

\medskip

We would like to apply irreducibility of mapping class group actions and property F to the problem of making sense of \eqref{eq:bordism-sum}. 
To this end, note that $\mathrm{Bord}(\emptyset,\Sigma)$ carries a natural action of $\Mod(\Sigma)$ by modifying the boundary parametrisation. We can thus decompose $\mathrm{Bord}(\emptyset,\Sigma)$ into $\Mod(\Sigma)$-orbits $\mathcal{B}_\alpha$, $\alpha \in \mathcal{O}$, where $\mathcal{O} = \mathrm{Bord}(\emptyset,\Sigma) / \Mod(\Sigma)$ is the set of orbits.
Let $M_\alpha \in \mathcal{B}_\alpha$ be a choice of representative for each orbit. Then we can rearrange \eqref{eq:bordism-sum} as
\be\label{eq:MCG-orbit-bordism-sum}
Z_\mathrm{grav}(\Sigma) 
= \sum_{\gamma \in \Mod(\Sigma)} \sum_{\alpha \in \mathcal{O}} Z(\gamma.M_\alpha)
= \sum_{\gamma \in \Mod(\Sigma)} \gamma.\Big( \sum_{\alpha \in \mathcal{O}} Z(M_\alpha) \Big)
\ .
\ee
Since we just rearranged \eqref{eq:bordism-sum}, the above sum remains ill-defined. But this form of rewriting \eqref{eq:bordism-sum} illustrates two points.

Firstly, any reasonable way to make sense of \eqref{eq:bordism-sum} should result in a vector $Z_\mathrm{grav}(\Sigma) \in W(\Sigma)$ that is invariant under the mapping class group action. Let us assume that the boundary CFT is rational, i.e.\ that it contains a holomorphic and an antiholomorphic copy of some rational VOA $\mathcal{A}$ in its space of fields. 
Set $\mathcal{C} = \operatorname{Rep}(\mathcal{A})$. If the projective mapping class group representations obtained from $\mathcal{C}$ are irreducible, by Lemma~\ref{lem:mcg-invariant-subsp} the subspace of invariants in $W(\Sigma) = V^\mathcal{C}(\widehat\Sigma)$ is one-dimensional. Hence $Z_\mathrm{grav}(\Sigma)$ must be proportional to the CFT correlator $\operatorname{Cor}_\mathcal{A}(\Sigma)$ of the unique 2D\,CFT with holomorphic and anti-holomorphic symmetry $\mathcal{A}$,
\be
	Z_\mathrm{grav}(\Sigma) \propto \operatorname{Cor}_\mathcal{A}(\Sigma) ~.
\ee
The uniqueness of the 2D\,CFT under the above irreducibility assumption is the main result of \cite{RR}. 
We summarise the first point illustrated by \eqref{eq:MCG-orbit-bordism-sum} as:
\begin{center}
\framebox[1.05\width]{\parbox{.9\textwidth}{
\ \\[0em]
2D\,CFTs which have irreducible chiral mapping class group representations are good candidates to investigate properties of 3D\,gravity without the need to explicitly regularise \eqref{eq:bordism-sum}.
\\[-.6em]
}}
\end{center}

Secondly, if $\mathcal{C}$ has property F, the sum over $\Mod(\Sigma)$ in \eqref{eq:MCG-orbit-bordism-sum} factors through a finite sum over the image of $\Mod(\Sigma)$ in $\mathrm{GL}(W(\Sigma))$. 
Thus the extra information we have under property F is an explicit projector
\be\label{eq:invariance-projector}
	\langle-\rangle_\Sigma \colon
	W(\Sigma) \longrightarrow W(\Sigma)^{\Mod(\Sigma)}
	\quad , \quad
	x \mapsto \langle x \rangle_\Sigma ~,
\ee
where again $W(\Sigma) =  V^\mathcal{C}(\widehat\Sigma)$ and the average $\langle x \rangle_\Sigma$ was introduced in Definition~\ref{def:mcgaverage}.

The nicest situation arises if $\mathcal{C}$ has property F and irreducible mapping class group representations at the same time, like e.g.\ Ising-type CFTs (Theorems~\ref{thm:Ising-irred} and~\ref{thm:Ising-propF}). In this case one can apply the projector \eqref{eq:invariance-projector} to any state $x \in W(\Sigma)$ and obtain either zero or
a state proportional to the (unique in this case) 2D\,CFT correlator on $\Sigma$ (Theorem~\ref{thm:mcgavg-cor}):
\be
\text{for all } x \in W(\Sigma) ~: \quad 
\langle x \rangle_\Sigma 
= \lambda \operatorname{Cor}_\mathcal{A}(\Sigma) 
\quad \text{for some}~~\lambda \in \mathbb{C} \ .
\ee

This expression has an appealing counterpart in 3D\,TQFT. Namely, when the VOA $\mathcal{A}$ is rational and hence $\mathcal{C}$ is an MFC, the relevant 3D\,TQFT is the RT TQFT $\mathcal{Z}^\mathcal{C\boxtimes C^\mathrm{rev}}$, see Section~\ref{subsec:RCFT}. Any state $x \in W(\Sigma) = V^\mathcal{C}(\widehat\Sigma)$ can be written as a handlebody with an appropriate coupon as in Figure~\ref{fig:handlebodycoupon} (but for $\mathcal{C}\boxtimes \mathcal{C}^\mathrm{rev}$, not just for $\mathcal{C}$). Recall from~\eqref{eq:RCFT-cor-via-bnd} and Figure~\ref{fig:cor-bnd-def} the relation between 2D\,CFT correlators and three-manifolds with boundary. Putting all this together, we get, for example,
\begin{align}
&\Bigg\langle \mathcal{Z}^\mathcal{C\boxtimes C^\mathrm{rev}}\Bigg(
\boxpic{0.65}{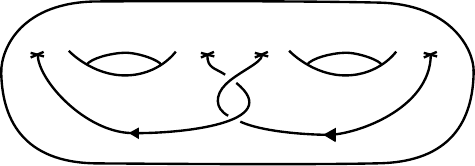}{
\put (16.5,12) {\footnotesize$i\times i^\ast$}
\put (68,12) {\footnotesize$j\times j^\ast$}
}
 \Bigg) \Bigg\rangle_\Sigma 
 \nonumber
\\[.5em]
&\quad \propto \quad
\mathcal{Z}^\mathcal{C\boxtimes C^\mathrm{rev}}
\hspace{-.2em}
\left( 
\boxpic{0.65}{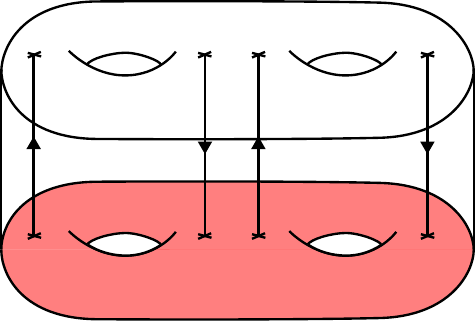}{
\put (10,31.5) {\footnotesize$i\times i^\ast$}
\put (56,31.5) {\footnotesize$j\times j^\ast$}
}
\right) ~.
\end{align} 
The geometry on the lhs (the seed) is a handlebody with boundary $\Sigma$ and with embedded $\mathcal{C}\boxtimes \mathcal{C}^\mathrm{rev}$-labelled ribbon graph (in this example without coupons). The geometry on the rhs is $\Sigma \times [0,1]$.
We summarise this as:
\begin{center}
\framebox[1.05\width]{\parbox{.9\textwidth}{
\ \\[0em]
Suppose the boundary 2d\,CFT is rational with corresponding 3D\,TQFT $\mathcal{Z}^\mathcal{C\boxtimes C^\mathrm{rev}}$, such that $\catname C$ satisfies irreducibilty and property F. 
Then the mapping class group average removes all topological information present in the seed and introduces a boundary to the 3D\,TQFT to which all Wilson lines connect orthogonally. 
\\[-.6em]
}}
\end{center}
Under the present assumptions, the boundary condition to the RT TQFT for $\mathcal{C}\boxtimes \mathcal{C}^\mathrm{rev}$ is unique (since the unique indecomposable semisimple $\mathcal{C}$-module category is $\mathcal{C}$ itself \cite{RR}). And for the concrete example of a ribbon graph on the lhs above (no coupons and no loops), by Lemma~\ref{lem:tangle-pairing} we know that the average will be non-zero.

\subsection{Absence of invertible global symmetries}\label{subsec:global-sym}

In this section,
we study invertible global symmetries\footnote{Topological field theories also allow for non-invertible topological defects, which in this context are often referred to as symmetries, while here we are only concerned with invertible topological defects.
}
in the topological bulk theory of $\mathcal{Z}(\mathcal{C})$.
It is a conjecture by \cite{Harlow:2018jwu} that quantum gravitational theories do not admit any invertible global symmetries.
In this section we investigate this question for RT TQFTs for $\mathcal{Z}(\catname C)$ in the case that the mapping class group representations of $\catname C$ are irreducible.

\medskip

Invertible global symmetries of TQFTs correspond to invertible module categories. 
For an MFC $\catname C$, the Drinfeld centre $\mathcal{Z}(\catname C)$ obeys the factorisation property 
\be\label{eq:ZC-is-CC}
\mathcal{ Z}(\catname C) \simeq \catname C \boxtimes \catname C^\mathrm{rev} \ .
\ee
Therefore, $\mathcal{ Z}(\catname C)$-module categories are the same as $\catname C$-$\catname C$-bimodule categories.
Equivalence classes of invertible module categories over $\mathcal{Z}(\catname C)$ by definition form the Picard group $\operatorname{Pic}(\mathcal{Z}(\mathcal{C}))$ and similarly invertible bimodule categories over $\catname C$ define the Brauer-Picard group $\operatorname{BrPic}(\mathcal{C})$. The equivalence \eqref{eq:ZC-is-CC} results in a group isomorphism (see \cite{ENOM})
\begin{equation}\label{eq:BrPic-Pic}
    \operatorname{Pic}(\mathcal{Z}(\mathcal{C}))\cong \operatorname{BrPic}(\mathcal{C}) \ .
\end{equation}

We make the following assumption:
\begin{equation}
    \text{$\catname C$ has no indecomposable semisimple module categories other than itself.}
    \tag{$*$}
\end{equation}
By \cite[Thm.~5.1(v2)]{RR},
irreducibility of the mapping class group representations of $\catname C$ implies $(*)$ (see Remark~\ref{rem:thm-mcg-avg-cor}), but not conversely. 

\medskip

Condition $(*)$ implies that the TQFT $\mathcal{Z}^{\catname C}$ has no non-trivial surface defects, and in particular no non-trivial invertible global symmetries. However, as we saw in the previous section, the TQFT relevant to obtain 2D\,CFT correlators on the boundary is $\mathcal{Z}^\mathcal{C\boxtimes C^\mathrm{rev}}$ and not $\mathcal{Z}^{\catname C}$. Unless $\catname C \cong \Vect$, $\mathcal{C\boxtimes C^\mathrm{rev}}$ \textit{always} has at least one non-trivial module category, namely $\catname C$, and moreover the corresponding surface defect of $\mathcal{Z}^\mathcal{C\boxtimes C^\mathrm{rev}}$ is always non-invertible.
Thus for $\catname C \ncong \Vect$, $\mathcal{Z}^\mathcal{C\boxtimes C^\mathrm{rev}}$ always has at least one non-invertible global symmetry.
However, it still makes sense to ask if under additional assumptions like $(*)$ or irreducibility, $\mathcal{Z}^\mathcal{C\boxtimes C^\mathrm{rev}}$ has no non-trivial \textit{invertible} global symmetries. By \eqref{eq:BrPic-Pic} this amounts to the question whether $\operatorname{BrPic}(\mathcal{C})$ is trivial or not. To investigate this, we start with the following observation:

\begin{proposition}\label{prop:no-invertible}
	Let $\catname C$ be a braided fusion category satisfying $(*)$.
 Then there is an isomorphism
 \[\operatorname{BrPic(\mathcal{C})} \cong \Aut_\otimes (\mathcal{C})\]
 between the Brauer-Picard group and the group of tensor auto-equivalences.
\end{proposition}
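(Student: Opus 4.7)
The plan is to construct an explicit group homomorphism
\begin{equation*}
\Phi \colon \operatorname{Aut}_\otimes(\mathcal{C}) \longrightarrow \operatorname{BrPic}(\mathcal{C}), \qquad F \longmapsto {}_{\mathrm{id}}\mathcal{C}_F,
\end{equation*}
where ${}_{\mathrm{id}}\mathcal{C}_F$ has underlying category $\mathcal{C}$, standard left action $x.m := x \otimes m$, and right action $m.y := m \otimes F(y)$ twisted by $F$. A direct computation with the relative tensor product gives ${}_{\mathrm{id}}\mathcal{C}_F \boxtimes_\mathcal{C} {}_{\mathrm{id}}\mathcal{C}_G \simeq {}_{\mathrm{id}}\mathcal{C}_{F \circ G}$ (via $m \boxtimes n \mapsto m \otimes F(n)$), so the inverse of ${}_{\mathrm{id}}\mathcal{C}_F$ in $\operatorname{BrPic}$ is ${}_{\mathrm{id}}\mathcal{C}_{F^{-1}}$ and $\Phi$ is a well-defined homomorphism of groups.

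For injectivity, any bimodule equivalence $\Psi \colon {}_{\mathrm{id}}\mathcal{C}_F \xrightarrow{\sim} {}_{\mathrm{id}}\mathcal{C}_G$ is in particular a left $\mathcal{C}$-linear auto-equivalence of $\mathcal{C}$, hence of the form $\Psi(x) = x \otimes c$ for the invertible object $c := \Psi(\unit)$. The right $\mathcal{C}$-equivariance of $\Psi$ translates into a monoidal natural isomorphism $c \otimes F(-) \xrightarrow{\sim} G(-) \otimes c$. Since conjugation by an invertible object is naturally isomorphic to the identity (using the braiding of $\mathcal{C}$), this exhibits $F$ and $G$ as isomorphic tensor functors, giving injectivity of $\Phi$.

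The key step is surjectivity, where assumption $(*)$ enters. Given an invertible bimodule $\mathcal{M}$, the left $\mathcal{C}$-action induces a tensor functor $\mathcal{C} \to \operatorname{Fun}_\mathcal{C}(\mathcal{M}, \mathcal{M})$ (right-module endofunctors), and this is an equivalence precisely because $\mathcal{M}$ is invertible. If $\mathcal{M} \simeq \mathcal{N}_1 \oplus \mathcal{N}_2$ were a non-trivial decomposition as a right $\mathcal{C}$-module, then the unit $\operatorname{id}_{\mathcal{N}_1} \oplus \operatorname{id}_{\mathcal{N}_2}$ of $\operatorname{Fun}_\mathcal{C}(\mathcal{M}, \mathcal{M})$ would fail to be simple, contradicting simplicity of $\unit \in \mathcal{C}$. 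Hence $\mathcal{M}$ is indecomposable as a right $\mathcal{C}$-module, and $(*)$ supplies an equivalence $\iota \colon \mathcal{M} \xrightarrow{\sim} \mathcal{C}$ of right $\mathcal{C}$-modules. Transporting the left $\mathcal{C}$-action along $\iota$ gives a left action on $\mathcal{C}$ commuting with the regular right action; the identification $\operatorname{End}_\mathcal{C}(\mathcal{C}) \simeq \mathcal{C}$ (sending $a$ to left multiplication $L_a$, with composition corresponding to tensor product) exhibits this action in the form $x.m = F(x) \otimes m$ for a tensor endofunctor $F \colon \mathcal{C} \to \mathcal{C}$. The resulting bimodule is isomorphic to ${}_{\mathrm{id}}\mathcal{C}_F$, so by invertibility of $\mathcal{M}$ and the composition law of $\Phi$ (together with injectivity, applied to $\Phi(F) \boxtimes_\mathcal{C} \Phi(G) \simeq \mathcal{C}$ for $G$ extracted from $\mathcal{M}^{-1}$) one concludes $F \in \operatorname{Aut}_\otimes(\mathcal{C})$ and $\Phi(F) \simeq \mathcal{M}$.

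The main obstacle is the indecomposability step in the surjectivity argument; once it is established that $\mathcal{M}$ cannot be a ``matrix'' module $\mathcal{C}^{\oplus n}$ for $n>1$, the rest reduces to bookkeeping with the monoidal structures. A secondary subtlety is tracking the direction of the tensor product in $\operatorname{End}_\mathcal{C}(\mathcal{C})$: right-module endofunctors compose to give the tensor product in its natural order, so no reversal is needed, and the braiding on $\mathcal{C}$ is only used to identify inner auto-equivalences with the identity when comparing tensor functors up to monoidal natural isomorphism.
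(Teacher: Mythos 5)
Your proof is correct and follows essentially the same route as the paper: the twisted regular bimodule ${}_{\mathcal{C}}\mathcal{C}_{F}$, the observation that $(*)$ forces every invertible bimodule to be indecomposable (hence trivial) as a one-sided module, and the use of the braiding to trivialize conjugation by invertible objects so that $\operatorname{Out}_\otimes(\mathcal{C})\cong\Aut_\otimes(\mathcal{C})$. The only difference is that you unpack the two steps the paper delegates to \cite{ENOM} (its Lemmas on $\operatorname{QTriv}(\mathcal{C})\cong\operatorname{Out}_\otimes(\mathcal{C})$ and on indecomposability of invertible bimodules) into explicit arguments, and you work with right rather than left module structures, which is immaterial here.
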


The proof of Proposition~\ref{prop:no-invertible} will be given after the next two lemmas.

Recall from \cite[Sec.~4.3]{ENOM} that an invertible $\mathcal{C}\mbox{-}\mathcal{C}$-bimodule category $\mathcal{M}$ is \textit{quasi-trivial} if $_{\mathcal{C}}\mathcal{M} \simeq{} _{\mathcal{C}}\mathcal{C}$ as left $\mathcal{C}$-module categories. 
Equivalence classes of quasi-trivial bimodule categories form a subgroup $\operatorname{QTriv}(\mathcal{C})$ in $\operatorname{BrPic}(\mathcal{C})$ and we have the following lemma due to \cite[Sec.\ 4.3]{ENOM} (see also \cite[Lem.\ 7.25]{Romaidis-Thesis} 
for details on the proof):
\begin{lemma}\label{lem:quasi-tr-outer}
Let $\catname C$ be a fusion category. There is an isomorphism between the group of outer equivalences and $\operatorname{QTriv}(\mathcal{C})$:
\begin{equation}
    \operatorname{Out}_\otimes(\mathcal{C}) \xrightarrow{~\sim~} \operatorname{QTriv}(\mathcal{C}) \quad , \quad [\phi] \longmapsto \left[{}_{\mathcal{C}}\mathcal{C}_{\mathcal{\phi}}\right] \ ,
\end{equation}
where ${}_{\mathcal{C}}\mathcal{C}_{\mathcal{\phi}}$ has an underlying left regular $\mathcal{C}$-module structure and the right action is twisted by the tensor auto-equivalence $\phi \in \Aut_{\otimes}(\mathcal{C})$, \ie $M\triangleleft X := M\otimes \phi(X)$.
\end{lemma}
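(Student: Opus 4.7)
The plan is to construct an explicit inverse to the proposed map, exploiting the fact that left-$\mathcal{C}$-linear endofunctors of the regular module ${}_{\mathcal{C}}\mathcal{C}$ are precisely right translations by objects of $\mathcal{C}$. First I would verify that the forward assignment is well-defined on representatives: for $\phi \in \Aut_\otimes(\mathcal{C})$ the right action $M \triangleleft X := M \otimes \phi(X)$, equipped with the associator inherited from the tensor structure on $\phi$, makes ${}_{\mathcal{C}}\mathcal{C}_\phi$ a $\mathcal{C}$-$\mathcal{C}$-bimodule category. Left-right compatibility is immediate since the left action acts on the leftmost tensor factor while the right action acts on the rightmost; invertibility follows because the inverse bimodule is ${}_{\mathcal{C}}\mathcal{C}_{\phi^{-1}}$, and quasi-triviality holds by construction. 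The group-homomorphism property reduces to identifying the relative tensor product ${}_{\mathcal{C}}\mathcal{C}_\phi \boxtimes_{\mathcal{C}} {}_{\mathcal{C}}\mathcal{C}_\psi$ with ${}_{\mathcal{C}}\mathcal{C}_{\phi \circ \psi}$, which I would do by constructing a balanced functor $X \boxtimes Y \mapsto X \otimes \phi(Y)$ and checking the universal property.

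For surjectivity, let $\mathcal{M}$ be any quasi-trivial invertible bimodule. I would pick a left-module equivalence $\mathcal{M} \simeq {}_{\mathcal{C}}\mathcal{C}$ and transport the right action through it. For each $Y \in \mathcal{C}$ the resulting functor $(-) \triangleleft Y$ is a $\mathcal{C}$-linear endofunctor of ${}_{\mathcal{C}}\mathcal{C}$, hence isomorphic to right tensoring with $\phi(Y) := \unit \triangleleft Y$. The bimodule associator $(M \triangleleft Y) \triangleleft Z \cong M \triangleleft (Y \otimes Z)$ then pulls back to a natural isomorphism $\phi(Y) \otimes \phi(Z) \cong \phi(Y \otimes Z)$ satisfying the pentagon axiom, promoting $\phi$ to a tensor endofunctor. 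Invertibility of $\mathcal{M}$ forces $\phi$ to be a tensor equivalence, providing the preimage.

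For injectivity on $\operatorname{Out}_\otimes(\mathcal{C})$, I would analyse bimodule equivalences $F \colon {}_{\mathcal{C}}\mathcal{C}_\phi \to {}_{\mathcal{C}}\mathcal{C}_\psi$. Viewed as a left $\mathcal{C}$-module equivalence of ${}_{\mathcal{C}}\mathcal{C}$, such $F$ must be of the form $F(X) \cong X \otimes Z$ for some invertible $Z \in \mathcal{C}$. Right-$\mathcal{C}$-linearity, $F(M \otimes \phi(Y)) \cong F(M) \otimes \psi(Y)$, then translates into a natural tensor isomorphism $\phi(Y) \otimes Z \cong Z \otimes \psi(Y)$, i.e.\ $\psi \simeq Z^{-1} \otimes \phi(-) \otimes Z$. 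Thus $\phi$ and $\psi$ differ precisely by the inner auto-equivalence induced by $Z$, so $[\phi] = [\psi]$ in $\operatorname{Out}_\otimes(\mathcal{C})$; conversely the same calculation shows the forward map factors through $\operatorname{Out}_\otimes$.

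The main obstacle I anticipate is the bookkeeping of coherence data linking the bimodule associator with the monoidal structure on $\phi$: one must verify that the forward and backward constructions are mutually inverse at the level of the full 2-categorical structure, not merely on the underlying functors, so that the resulting bijection is truly a group isomorphism and not only a bijection of sets. This is a routine but tedious pentagon diagram chase once the classification of $\mathcal{C}$-linear endofunctors of ${}_{\mathcal{C}}\mathcal{C}$ is in place.
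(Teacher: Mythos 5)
Your proposal is correct and follows the standard argument: the paper itself gives no proof of this lemma, deferring to \cite[Sec.~4.3]{ENOM} and \cite[Lem.~7.25]{Romaidis-Thesis}, and your strategy --- reducing everything to the monoidal equivalence $\Hom_{\catname C}$-linear endofunctors of ${}_{\catname C}\catname{C}$ $\simeq$ right translations, then reading off surjectivity ($\phi(Y):=\unit\triangleleft Y$) and injectivity ($\psi\cong Z^\ast\otimes\phi(-)\otimes Z$ for invertible $Z$) --- is exactly the one used there. The only quibble is terminological: the coherence you need for $\phi(Y)\otimes\phi(Z)\cong\phi(Y\otimes Z)$ is the monoidal-functor (hexagon-type) compatibility with associators extracted from the bimodule middle-associativity constraint, not the pentagon axiom.
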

In particular, every quasi-trivial bimodule category is up to equivalence uniquely determined by an outer equivalence. 

\begin{lemma}\label{lem:BrPic=Out}
Let $\catname C$ be a fusion category satisfying $(*)$.
Then, every invertible bimodule category is quasi-trivial, \ie $\operatorname{QTriv}(\mathcal{C}) = \operatorname{BrPic}(\mathcal{C})$
\end{lemma}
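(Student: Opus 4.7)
The plan is to show that any invertible $\mathcal{C}$-$\mathcal{C}$-bimodule category $\mathcal{M}$ is equivalent to $\mathcal{C}$ as a left $\mathcal{C}$-module category. Once this is established, $[\mathcal{M}] \in \operatorname{QTriv}(\mathcal{C})$ by the definition recalled before Lemma~\ref{lem:quasi-tr-outer}; combined with the inclusion $\operatorname{QTriv}(\mathcal{C}) \subset \operatorname{BrPic}(\mathcal{C})$ which holds by definition, this gives the desired equality.

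The first step is to observe that such an $\mathcal{M}$ is necessarily indecomposable when regarded only as a left $\mathcal{C}$-module. By the standard characterisation (e.g.\ \cite{ENOM}), invertibility of the bimodule $\mathcal{M}$ is equivalent to the right $\mathcal{C}$-action inducing a tensor equivalence
\begin{equation*}
\mathcal{C}^{\mathrm{rev}} \xrightarrow{~\sim~} \mathcal{C}^*_\mathcal{M} \,:=\, \operatorname{Fun}_\mathcal{C}(\mathcal{M}, \mathcal{M}) \ .
\end{equation*}
Since $\mathcal{C}^{\mathrm{rev}}$ is a fusion category its unit is simple, and hence so must be $\operatorname{id}_\mathcal{M} \in \mathcal{C}^*_\mathcal{M}$. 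A non-trivial decomposition $\mathcal{M} \simeq \mathcal{M}_1 \oplus \mathcal{M}_2$ as a left $\mathcal{C}$-module would however produce $\operatorname{id}_\mathcal{M} = \operatorname{id}_{\mathcal{M}_1} \oplus \operatorname{id}_{\mathcal{M}_2}$, a non-trivial splitting into non-zero orthogonal module-endofunctors, contradicting simplicity. Hence ${}_\mathcal{C}\mathcal{M}$ is indecomposable.

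Condition $(*)$ now applies directly: up to equivalence the only indecomposable semisimple left $\mathcal{C}$-module category is $\mathcal{C}$ itself, so ${}_\mathcal{C}\mathcal{M} \simeq {}_\mathcal{C}\mathcal{C}$, which is exactly quasi-triviality. Thus $[\mathcal{M}] \in \operatorname{QTriv}(\mathcal{C})$, completing the proof.

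The main obstacle is invoking the classical characterisation of invertibility via the dual fusion category $\mathcal{C}^*_\mathcal{M}$, which must be cited carefully. A self-contained alternative is to reverse the logical order: first decompose ${}_\mathcal{C}\mathcal{M} \simeq \mathcal{C}^{\oplus n}$ using $(*)$, identify $\mathcal{C}^*_\mathcal{M} \simeq \operatorname{Mat}_n(\mathcal{C}^{\mathrm{rev}})$ by computing module endofunctors component-wise, and extract $n = 1$ from the Frobenius-Perron identity $n^2 \operatorname{FPdim}(\mathcal{C}) = \operatorname{FPdim}(\mathcal{C}^{\mathrm{rev}}) = \operatorname{FPdim}(\mathcal{C})$ forced by invertibility.
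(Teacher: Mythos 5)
Your proof is correct and follows essentially the same route as the paper: establish that an invertible bimodule category is indecomposable as a left $\mathcal{C}$-module, then invoke $(*)$ to identify it with ${}_{\mathcal{C}}\mathcal{C}$ and conclude quasi-triviality. The only difference is that the paper simply cites \cite[Cor.~4.4]{ENOM} for the indecomposability step, whereas you supply a direct argument via the equivalence $\mathcal{C}^{\mathrm{rev}}\simeq \operatorname{Fun}_{\mathcal{C}}(\mathcal{M},\mathcal{M})$ and simplicity of the unit, which is a valid (and standard) justification of that corollary.
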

\begin{proof}
By \cite[Cor.~4.4]{ENOM} any invertible bimodule category $\mathcal M$ is indecomposable as a left module category. This follows directly from the invertibility property 
$\mathcal{M} \boxtimes_{\catname C}\mathcal{M}^\text{op} \simeq \catname C$
as bimodule categories. However, since $\catname C$ has only the left regular category $\catname C$ as an indecomposable semisimple
left module category up to equivalence, $\mathcal{ M}$ is equivalent to $\catname C$ as a left module category and thus quasi-trivial. 
\end{proof}

\begin{proof}[Proof of Proposition~\ref{prop:no-invertible}]
Since $\catname C$ is braided, inner equivalences are trivial and hence $\mathrm{Out}(\catname C) \cong \Aut_\otimes(\catname C)$. The result follows from Lemmas~\ref{lem:quasi-tr-outer} and~\ref{lem:BrPic=Out}.
\end{proof}

Altogether we see that under assumption $(*)$,  invertible global symmetries of $\mathcal{Z}^\mathcal{C\boxtimes C^\mathrm{rev}}$ are precisely given by tensor auto-equivalences $\Aut_\otimes (\mathcal{C})$ of $\mathcal{C}$. 

\begin{remark}\label{rem:no-invertible}
\begin{enumerate}
    \item 
    If $\mathcal{C}$ is an MFC and satisfies $(*)$, then 
    the group $\Aut_\mathrm{br}(\mathcal{C})$ of \textit{braided} tensor auto-equivalences of $\mathcal{C}$ is trivial. Indeed, by \cite[Thm.~5.2]{ENOM} the Picard group $\operatorname{Pic}(\mathcal{C})$ is isomorphic to $\Aut_\mathrm{br}(\mathcal{C})$ but by hypothesis $\operatorname{Pic}(\mathcal{C})$ is trivial. 
    To show that also $\Aut_\otimes(\mathcal{C})=1$ one needs to show that every tensor auto-equivalence already preserves the braiding. We do not know whether or not this is a consequence of $(*)$ or of irreducibility, but it holds in examples, as we summarise next.
    
    \item    
    In \cite{Edie} the groups of tensor auto-equivalences $\Aut_{\otimes}(\catname C)$ and braided auto-equivalences $\Aut_\mathrm{br}(\mathcal{C})$ are studied for the MFCs $\catname C = \catname C(\mathfrak{g},k)$ associated to certain
    affine Lie algebras $\widehat{\mathfrak{g}}$ at levels $k \in \mathbb{Z}_{>0}$.
    Typically, $\Aut_\mathrm{br}(\mathcal{C})\subsetneq\Aut_{\otimes}(\catname C)$, but \eg for 
    $\mathfrak{sl}_2$ and $k=2$ or $k$ odd both groups are trivial. 
    In fact, 
    \be 
    \operatorname{BrPic}(\mathcal{C}(\mathfrak{sl}_2,k)) = 1  \quad
    \text{if and only if}~~
    k=2\text{ or $k$ odd.}
    \ee  
    These are precisely the levels $k$ for which $\mathcal{C}(\mathfrak{sl}_2,k)$ satisfies $(*)$, see \cite{Ostrik:2001}.
    \item 
    Our starting point to quantum gravity was via sums
    over topologies which led us to the study of mapping class group averages and finally to the study of invertible global symmetries in this section. 
    In \cite{Benini2023}, conversely, the demand to have no global symmetries is taken as the starting point. After gauging a one-form global symmetry, the resulting theory has no global symmetries and therefore serves as a candidate gravity theory.

    The absence of also non-invertible symmetries has recently been studied and linked to the completeness hypothesis for the spectrum \cite{RudeliusShao,heidenreich} and to the weak gravity conjecture \cite{CordovaKantaroRudelius}. 
\end{enumerate}
\end{remark}

In all examples we are aware of for which $\catname C$ has irreducible mapping class group representations (namely Ising-type categories and $\mathcal{C}(\mathfrak{sl}_2,k)$ for $k+2$ prime or (conjecturally) the square of a prime, see Section~\ref{sec:irred-points-or-not}), part 2 of the above remark gives $\Aut_\otimes (\mathcal{C})=1$, and so in these cases
$\mathcal{Z}^\mathcal{C\boxtimes C^\mathrm{rev}}$ indeed has no invertible global symmetries.

\newpage

\appendix

\section{R,F,B-matrices}
\label{app:rfb-matrices}

Let $\catname C$ be a fusion category
and $i,j,k$ simple representatives in $I$. 
Recall that fusion coefficients are defined as  
\be\label{eq:fusion-coeff}
N_{ij}^k := \dim \catname C(k,i\otimes j)
\ee 
and satisfy 
\be 
N_{ij}^k = N_{ji}^k = N_{i\overline{k}}^{\overline{j}} = N_{\overline{i}\overline{j}}^{\overline{k}}~.
\ee 
The fusion basis elements are graphically represented by
\be\label{eq:fusion-basis}
\boxpic{0.8}{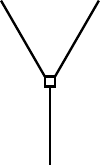}{
\put (27,-19) {$k$}
\put (-3,102) {$i$}
\put (55,105) {$j$}
\put (10,45) {$\alpha$}
},\quad \boxpic{0.8}{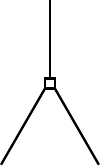}{\put (27,102) {$k$}
\put (-3,-18) {$i$}
\put (55,-18) {$j$}
\put (10,45) {$\bar\alpha$}} 
\ee 
where $\alpha \in \{1,\dots, N_{ij}^k\}$ and are dual in the sense that $\bar\alpha \circ \beta = \delta_{\alpha, \beta}\id{k}$.  
When considering the fusion of four labels $i,j,k,l$ in $I$, we have two natural decompositions, namely
\begin{align}
    \catname C (l, i \otimes j \otimes k) &\cong \bigoplus_{m\in I}\catname C(l, m\otimes k)\otimes_\mathbbm{k} \catname C(m,i\otimes j)\\
    & \cong \bigoplus_{n\in I} \catname C(l,i\otimes n)\otimes_\mathbbm{k} \catname C(n,j\otimes k)
\end{align}
which gives two bases respectively. The transition matrix between these two bases is called an $F$-matrix and is defined by the relation (cf.\ \cite[Eq.\,(2.39)]{FRS1})
\be\label{eq:F-matrix}
(\alpha \otimes \id{k})\circ \beta = \sum_{n,\gamma,\delta}
F^{(ijk)l}_{\gamma n \delta,\,\alpha m \beta} \, (\id{i}\otimes \delta)\circ \gamma
~.
\ee
The matrix elements of the transformation inverse to $F$ will be denoted by $G$:
\be\label{eq:G-matrix}
(\id{i}\otimes \delta)\circ \gamma
= \sum_{m,\alpha,\beta}
G^{(ijk)l}_{\alpha m \beta,\,\gamma n \delta} \, 
(\alpha \otimes \id{k})\circ \beta 
~.
\ee
The string diagrams for equations \eqref{eq:F-matrix} and \eqref{eq:G-matrix} are 
\vspace{1ex}
\begin{align}
&\boxpic{0.8}{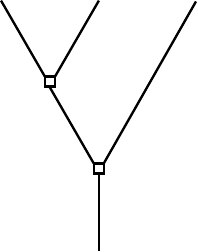}{
\put (74,102) {$k$}
\put (-3,102) {$i$}
\put (35,102) {$j$}
\put (8,60) {$\alpha$}
\put (28,26) {$\beta$}
\put (37,-12) {$l$}
\put (18,42) {$m$}
} = \sum_{n,\gamma,\delta}
F^{(ijk)l}_{\gamma n \delta,\,\alpha m \beta} \, \boxpic{0.8}{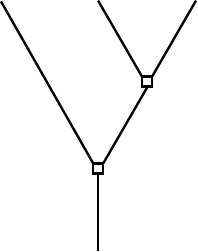}{
\put (74,102) {$k$}
\put (-3,102) {$i$}
\put (35,102) {$j$}
\put (62,60) {$\delta$}
\put (42,26) {$\gamma$}
\put (37,-12) {$l$}
\put (50,42) {$n$}
} ~,
\\[3ex]
&\boxpic{0.8}{figures/Fmatrix2.pdf}{
	\put (74,102) {$k$}
	\put (-3,102) {$i$}
	\put (35,102) {$j$}
	\put (62,60) {$\delta$}
	\put (42,26) {$\gamma$}
	\put (37,-12) {$l$}
	\put (50,42) {$n$}
}
= \sum_{m,\alpha,\beta}
G^{(ijk)l}_{\alpha m \beta,\,\gamma n \delta} \, 
\boxpic{0.8}{figures/Fmatrix1.pdf}{
	\put (74,102) {$k$}
	\put (-3,102) {$i$}
	\put (35,102) {$j$}
	\put (8,60) {$\alpha$}
	\put (28,26) {$\beta$}
	\put (37,-12) {$l$}
	\put (18,42) {$m$}
}~.
\vspace{1ex}
\end{align}
In a spherical fusion category we can compute them by:
\be 
F^{(ijk)l}_{\gamma n \delta,\,\alpha m \beta} = \frac{1}{d_l} \boxpic{1}{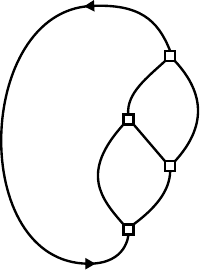}{
	\put (75,58) {$k$}
	\put (31,27) {$i$}
	\put (50,41) {$j$}
	\put (67,33) {$\delta$}
	\put (51,10) {$\gamma$}
	\put (59,20) {$n$}
	\put (38,56) {$\bar\alpha$}
	\put (52,77) {$\bar\beta$}
	\put (62,90) {$l$}
	\put (53,63) {$m$}
} ~, \quad 
G^{(ijk)l}_{\alpha m \beta,\,\gamma n \delta} =
\frac{1}{d_l} \boxpic{1}{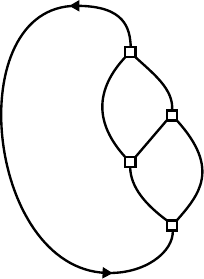}{
	\put (74,30) {$k$}
	\put (31,57) {$i$}
	\put (51,52) {$j$}
	\put (37,38) {$\alpha$}
	\put (53,12) {$\beta$}
	\put (53,28) {$m$}
	\put (66,56) {$\bar\delta$}
	\put (50,80) {$\bar\gamma$}
	\put (44,95) {$l$}
	\put (60,72) {$n$}
} 
\ee
Let $\catname{C}$ be a braided fusion category. The $R$-matrix describes how the fusion basis changes under the braiding of $\catname C$. Namely, it is defined as
\be\label{eq:R-matrix}
c_{i,j}\circ \alpha = 
    \sum_{\beta}
    R^{(ij)k}_{\beta\alpha} 
\, \beta~
\quad , \quad
c_{j,i}^{-1}\circ \alpha = 
    \sum_{\beta}
    R^{-\,(ij)k}_{\beta\alpha} 
\, \beta~
\ee 
and graphically: 
\vspace{1ex}
\be 
\boxpic{0.8}{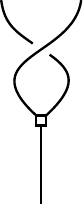}{
	\put (17,-13) {$k$}
	\put (-3,103) {$j$}
	\put (40,103) {$i$}
	\put (5,37) {$\alpha$}
} = 
\sum_{\beta}
R^{(ij)k}_{\beta\alpha} 
\, \boxpic{0.8}{figures/lambda.pdf}{
	\put (27,-19) {$k$}
	\put (-3,105) {$j$}
	\put (55,105) {$i$}
	\put (10,45) {$\beta$}~
}
\quad , \quad
\boxpic{0.8}{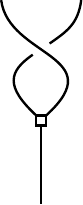}{
	\put (17,-13) {$k$}
	\put (-3,103) {$j$}
	\put (40,103) {$i$}
	\put (5,37) {$\alpha$}
}
	 = 
\sum_{\beta}
R^{-\,(ij)k}_{\beta\alpha} 
\, \boxpic{0.8}{figures/lambda.pdf}{
	\put (27,-19) {$k$}
	\put (-3,105) {$j$}
	\put (55,105) {$i$}
	\put (10,45) {$\beta$}~
}
\vspace{1ex}
\ee 
Another useful transformation we use is the $B$-matrix, which is defined by 
\be\label{eq:B-matrix-def}
(\id{i} \otimes c_{k,j}) \circ 
(\alpha \otimes \id{j})\circ \beta = \sum_{n,\gamma,\delta}
B^{(ijk)l}_{\gamma n \delta,\,\alpha m \beta} \, (\gamma \otimes \id{k})\circ \delta
~.
\ee
Graphically, the $B$-matrix is defined as follows: 
\vspace{1ex}
\be 
\boxpic{0.8}{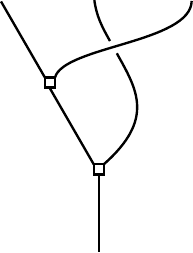}{
	\put (74,102) {$k$}
	\put (-3,102) {$i$}
	\put (34,102) {$j$}
	\put (8,60) {$\alpha$}
	\put (28,26) {$\beta$}
	\put (37,-12) {$l$}
	\put (18,42) {$m$}
} = \sum_{n,\gamma,\delta}
B^{(ijk)l}_{\gamma n \delta,\,\alpha m \beta} \, \boxpic{0.8}{figures/Fmatrix1.pdf}{
	\put (74,102) {$k$}
	\put (-3,102) {$i$}
	\put (35,102) {$j$}
	\put (8,60) {$\gamma$}
	\put (28,26) {$\delta$}
	\put (37,-12) {$l$}
	\put (18,42) {$n$}
}
\vspace{1ex}
\ee 
Using the $F$- and $R$-matrices, one can write for the expression on the left: 
\begin{align}\label{eq:B-matrix-computation}
(\id{i}\otimes c_{k,j})\circ (\alpha \otimes \id{j})\circ \beta 
&\overset{\eqref{eq:F-matrix}}= 
\sum_{p,\mu,\nu}{F^{(ikj)l}_{\mu p\nu, \alpha m \beta} (\id{i}\otimes c_{k,j})\circ (\id{i}\otimes \nu)\circ \mu}
\nonumber\\
&\overset{\eqref{eq:R-matrix}}=
\sum_{p,\mu,\nu,\lambda}{F^{(ikj)l}_{\mu p\nu, \alpha m \beta} R^{(kj)p}_{\lambda \nu}(\id{i}\otimes \lambda)\circ \mu}
\nonumber\\
&\overset{\eqref{eq:G-matrix}}=
\sum_{n,\gamma,\delta}\sum_{p,\mu,\nu,\lambda}{F^{(ikj)l}_{\mu p\nu, \alpha m \beta} R^{(kj)p}_{\lambda \nu} G^{(ijk)l}_{\gamma n \delta, \mu p \lambda}(\gamma \otimes \id{k})\circ \delta}
\end{align}
Inserting this into \eqref{eq:B-matrix-def} and comparing both sides we obtain an expression for the $B$-matrix in terms of $F$- and $R$-matrices
\be\label{eq:B-matrix-explicit} 
B^{(ijk)l}_{\gamma n \delta, \alpha m \beta} = \sum_{p,\mu,\nu, \lambda}{F^{(ikj)l}_{\mu p \nu,\alpha m \beta} R^{(kj)p}_{\lambda\nu} G^{(ijk)l}_{\gamma n \delta,\mu p \lambda}}~.
\ee
The reverse $B$-matrix is defined by 
\be
(\id{i} \otimes c_{j,k}^{-1}) \circ 
(\alpha \otimes \id{j})\circ \beta = \sum_{n,\gamma,\delta}
B^{-\,(ijk)l}_{\gamma n \delta,\,\alpha m \beta} \, (\gamma \otimes \id{k})\circ \delta
\ee
and pictorially:
\vspace{1ex}
\be 
\boxpic{0.8}{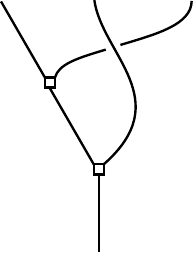}{
	\put (74,102) {$k$}
	\put (-3,102) {$i$}
	\put (34,102) {$j$}
	\put (8,60) {$\alpha$}
	\put (28,26) {$\beta$}
	\put (37,-12) {$l$}
	\put (18,42) {$m$}
} = \sum_{n,\gamma,\delta}
B^{-(ijk)l}_{\gamma n \delta,\,\alpha m \beta} \, \boxpic{0.8}{figures/Fmatrix1.pdf}{
	\put (74,102) {$k$}
	\put (-3,102) {$i$}
	\put (35,102) {$j$}
	\put (8,60) {$\gamma$}
	\put (28,26) {$\delta$}
	\put (37,-12) {$l$}
	\put (18,42) {$n$}
}
\vspace{1ex}
\ee 
Its computation resembles that of the $B$-matrix according to \eqref{eq:B-matrix-computation} but using in the second step the reversed $R$-matrix from \eqref{eq:R-matrix} because we work with the reverse braiding. One can easily verify that 
\be\label{eq:B-matrix-rev-explicit}
B^{-\,(ijk)l}_{\gamma n \delta, \alpha m \beta} = \sum_{p,\mu,\nu, \lambda}{F^{(ikj)l}_{\mu p \nu,\alpha m \beta} R^{-\,(kj)p}_{\lambda\nu} G^{(ijk)l}_{\gamma n \delta,\mu p \lambda}}~.
\ee

\newcommand{\arxiv}[2]{\href{http://arXiv.org/abs/#1}{#2}}
\newcommand{\doi}[2]{\href{http://doi.org/#1}{#2}}

\end{document}